\documentclass[lettersize,onecolumn]{IEEEtran}

\usepackage{amsmath,amsfonts}
\usepackage{algorithmic}
\usepackage{algorithm}
\usepackage{array}
\usepackage{subcaption}
\usepackage{textcomp}
\usepackage{stfloats}
\usepackage{url}
\usepackage{verbatim}
\usepackage{graphicx}
\usepackage{cite}
\usepackage{newtxtext}
\usepackage{amssymb, amsmath,threeparttable, bbding, url,xcolor}
\usepackage{booktabs}
\usepackage{circuitikz}
\usepackage{multirow}
\newtheorem{theorem}{Theorem}[section]
\newtheorem{lemma}[theorem]{Lemma}
\newtheorem{proposition}[theorem]{Proposition}
\newtheorem{example}[theorem]{Example}
\newtheorem{remark}[theorem]{Remark}
\newtheorem{corollary}[theorem]{Corollary}
\newtheorem{definition}{Definition}

\newcommand{\blue}{\color{blue}}

\newcommand{\supp}{\operatorname{supp}(\ell)}

\ctikzset{logic ports=ieee}

\newcommand{\F}{{\mathbb{F}}}

\begin{document}
	\title{A Class of Shift-Invariant Permutations \\ and Their Algebraic Structure}
	%
	%
	%
\author{ Cheng~Lyu,\,\,
		Dabin~Zheng,\,\,
		Mu~Yuan,\,\,
		Lei~Hu\,\,
\thanks{ C. Lyu, D. Zheng and M. Yuan are with Hubei Key Laboratory of Applied Mathematics, Faculty of Mathematics and Statistics, Hubei University. Dabin Zheng is also with Key Laboratory of Intelligent Sensing System and Security (Hubei University), Ministry of Education, Wuhan, 430062, China. E-mail: chenglyu@139.com; dzheng@hubu.edu.cn; yuanmu847566@outlook.com. Lei Hu is with Key Laboratory of Cyberspace Security Defense, Institute of Information Engineering, Chinese Academy of Sciences, Beijing, 100093, China. Email: hulei@iie.ac.cn. The corresponding author is Dabin Zheng.}}

\markboth{IEEE Transactions on Information Theory,~Vol.~, No.~}%
{{\blue Lyu} \MakeLowercase{\textit{et al.}}: A Class of  Shift-Invariant Permutations and Their Algebraic Structure}
%



\maketitle

\begin{abstract}
Shift-invariant permutations of \(\mathbb F_2^n\) are attractive in symmetric cryptography because of their regularity and implementation efficiency. Constructing such permutations with an explicit algebraic structure remains a challenging problem. In this paper, we study a family of shift-invariant transformations of \(\mathbb F_2^n\) generated by a recursively defined sequence of mappings \(\{\gamma_j\}_{j\ge0}\) associated with landscapes. First, we prove that, under an explicit dimension condition on \(n\) and the support of the landscape, the sequence \(\{\gamma_j\}_{j\ge0}\) has the polynomial composition property if and only if the corresponding landscape with nonzero special index is quasi-conserved on \(\mathbb F_2^n\). Then, we determine the eventual zero or periodic behavior of \(\{\gamma_j\}_{j\ge0}\), including the first zero or periodic index and the least eventual period, and establish linear independence up to the first relation. When the polynomial composition property holds, we establish an explicit isomorphism between the group of permutation elements in the monoid generated by these mappings and the unit group of a quotient ring \(\mathbb F_2[z]/\langle p(z)\rangle\), where \(p(z)\) is a monomial or a binomial determined by the landscape. This framework reduces the permutation property, compositional inverse, order, and iterates of these mappings to polynomial computations. Finally, we apply the results to mappings represented by binomials and trinomials in the quotient ring, obtaining explicit criteria and formulas. We tabulate more than one hundred representative constructions of shift-invariant permutations, recovering and unifying several previously studied families.
\end{abstract}
	
\begin{IEEEkeywords}
Permutations, S-boxes, shift-invariant transformations.
\end{IEEEkeywords}

\section{Introduction}

A vectorial Boolean function \(F\colon\mathbb F_2^n\to\mathbb F_2^n\) is called \emph{shift-invariant} if it commutes with the cyclic shift operator \(S\), i.e., \(F\circ S=S\circ F\), where \(S\) maps \((x_0,x_1,\ldots,x_{n-1})\) to \((x_1,\ldots,x_{n-1},x_0)\). Equivalently, the coordinate functions of \(F\) are cyclic shifts of a single Boolean function, i.e., $F(x) =  \bigl(f(x),f(Sx),\ldots,f(S^{n-1}x)\bigr)$ for some Boolean function \( f\colon\mathbb F_2^n\to\mathbb F_2\). 
In the infinite setting \(\mathbb F_2^{\mathbb Z}\), the analogous notion is that of a translation-invariant local rule, i.e., a one-dimensional cellular automaton. When \(f\) depends only on a fixed local neighborhood, the resulting maps are the periodic-boundary-case counterparts of one-dimensional cellular automata (see, e.g., \cite{mariot2024}). Such transformations are particularly attractive in symmetric cryptography, because their regular structure enables highly efficient bit-sliced software implementations and low-latency hardware realizations, while the entire mapping is completely determined by a single coordinate rule.

Despite these implementation advantages, analyzing the cryptographic properties of shift-invariant transformations is generally difficult. For arbitrary local rules, no simple closed-form criterion for invertibility is known. Moreover, uniform explicit formulas for inverses, orders, or cycle structures are generally difficult to obtain. It is therefore natural to restrict attention to well-behaved subclasses that nevertheless capture many practically relevant constructions.

A particularly fruitful subclass consists of local rules of multiplicative type, which includes the celebrated \(\chi\) mapping and its numerous generalizations. The classical \(\chi\) mapping from \(\mathbb F_2^n\) to itself, introduced by Daemen~\cite{daemen1995cipher}, is defined by
\[
\chi(x)_i=x_i+(x_{i+1}+1)x_{i+2},
\]
with coordinate subscripts taken modulo \(n\). Owing to its extreme efficiency and its well-studied cryptographic properties, \(\chi\) has been adopted in several prominent standards: it is used directly in \textsc{Keccak}-\(f\)~\cite{bertoni2008keccak} (the basis of SHA-3~\cite{nist2015sha}), while Ascon~\cite{dobraunig2021ascon} (the NIST lightweight cryptography standard~\cite{nist2023ascon}) uses an S-box that is affine-equivalent to $\chi$. The mapping \(\chi\) has been extensively investigated from both cryptanalytic and algebraic viewpoints; see, e.g., \cite{biryukov2014asasa,daemen2021comput,graner2024bijectivity,liu2022inverse,mella2023diff,schoone2024algebraic,schoone2024state}.

A major breakthrough in understanding the algebraic structure behind $\chi$ was made by Kriepke and Kyureghyan~\cite{kriepke2024algebraic}. They established an explicit isomorphism between the monoid generated by \(\chi\) and certain related functions and the unit group of the quotient ring \(\mathbb F_2[z]/\langle z^{(n+1)/2}\rangle\) for odd \(n\). Under this correspondence, the shift operator \(S\) and the Hadamard product translate the composition of functions into polynomial multiplication, thereby reducing the permutation property, order, iterates, and inverse of \(\chi\) to simple computations in a quotient polynomial ring.

It is known that \(\chi\) permutes \(\mathbb F_2^n\) only for odd \(n\). To construct shift-invariant permutations over \(\mathbb F_2^n\) for even \(n\), building on the powerful framework proposed in~\cite{kriepke2024algebraic}, we recently generalized \(\chi\) to a broader form \(\chi_{n,m}\), and obtained a larger class of permutations of \(\mathbb F_2^n\) that covers both odd and even dimensions of vector spaces~\cite{lyu2025generalized}. Meanwhile, for even \(n\), Kriepke and Kyureghyan~\cite{kriepke2025siblings} used the functions \(\gamma_{2k}\), which are built from the cyclic shift operator and the Hadamard product, to span a vector space over \(\mathbb F_2\), and this vector space contains an abelian group of permutations isomorphic to the unit group of \(\mathbb F_2[z]/\langle z^n+z^{n/2}\rangle\). Other recent work studies reversible cellular automata and constructs explicit inverses for particular shift-invariant mappings~\cite{haugland2026new,haugland2025shift,liu2026finding}.

The key to the success of the generalizations in~\cite{lyu2025generalized,kriepke2025siblings} is that the affine families formed by the identity plus linear combinations of the positive-index functions in \(\{\gamma_{2k}\}_{k\geq 0}\) and \(\{\theta_{m,k}\}_{k\geq 0}\), respectively, are closed under composition. In other words, the composition of any two functions from such a family can still be expressed in the same form. Based on this important observation, Kriepke in~\cite{kriepke2026shift} studied which sequences of functions possess such a property by using landscapes, a notation that goes back to Daemen~\cite{daemen1995cipher}. In his work, a landscape \(\ell\) is defined as a finite string over \(\{0,1,-,*\}\), which is associated with a shift-invariant function from $\mathbb F_2^{\mathbb Z}$ to itself. From this landscape he recursively constructs a general sequence \(\{\gamma_j^{(\ell,q)}\}_{j\geq 0}\) of functions from $\mathbb F_2^{\mathbb Z}$ to itself. He claimed that for a landscape \(\ell\), the constructed function sequence \(\{\gamma_j^{(\ell,q)}\}_{j\geq 0}\) satisfies the polynomial composition property, i.e., for every $j,s\ge0$,
\begin{equation}\label{eq:pcp}
\gamma_j^{(\ell,q)}\circ\left(\operatorname{id}+\sum_{i=1}^{s}\alpha_i\gamma_i^{(\ell,q)}\right)
=\sum_{i=0}^{s}\alpha_i\gamma_{i+j}^{(\ell,q)},
\qquad \alpha_0=1,\quad \alpha_1,\ldots,\alpha_s\in\mathbb F_2,
\end{equation}
if and only if \(\ell\) is quasi-conserved with special index \(q\), but he did not give a proof of this claim in~\cite{kriepke2026shift}. For the nonzero special indices in the main class considered here, this is a beautiful characterization and shows that the algebraic framework of~\cite{kriepke2024algebraic} applies to functions arising from quasi-conserved landscapes. The case $q=0$ is allowed in~\cite{kriepke2026shift}, but in this case, the quasi-conserved landscapes may not imply the polynomial composition property
of $\{\gamma_j\}_{j\geq 0}$, and a counterexample is given in Appendix~\ref{app:pcp}.

However, Kriepke in~\cite{kriepke2026shift} formulates quasi-conservation for the integer-offset local rule on $\mathbb F_2^{\mathbb Z}$ and treats functions on $\mathbb F_2^n$ by restriction to the $n$-periodic configurations in $\mathbb F_2^{\mathbb Z}$. He does not give a necessary and sufficient criterion stated directly on $\mathbb F_2^n$ for the polynomial composition property. Moreover, although he gives an equivalent product-vanishing criterion for quasi-conservation, he does not express it explicitly in terms of the offset sets used below. Quasi-conservation on $\mathbb F_2^n$ does not always imply quasi-conservation on $\mathbb F_2^{\mathbb Z}$, so the result in~\cite{kriepke2026shift} does not directly apply to a finite-dimensional quasi-conserved landscape. We establish the equivalence between quasi-conservation and the polynomial composition property on $\mathbb F_2^n$ under an explicit dimension condition. To clarify the relation to~\cite{kriepke2026shift}, Appendix~\ref{app:pcp} further proves that quasi-conservation and the polynomial composition property on the two spaces are equivalent under the same dimension condition.  In this paper, we consider a landscape as follows:
\begin{equation}\label{eq:landscape-function}
\ell=\bigodot_{r\in R}(\mathbf1+S^r)\bigodot_{t\in T}S^t,
\end{equation}
where \(R,T\subseteq\{-k,-k+1,\ldots,m-1,m\}\setminus\{0\}\) are disjoint finite nonempty sets, and $k,m$ are the smallest nonnegative integers such that the inclusion holds. Using the complementation symmetry proved below, we may take a nonzero special index $q\in T$ and put
\begin{equation}\label{eq:ellq}
\ell^{(q)}=\bigodot_{r\in R}(\mathbf1+S^r)\bigodot_{t\in T\setminus\{q\}}S^t.
\end{equation}
Following~\cite[Definition~3.6]{kriepke2026shift}, we study the following sequence on $\mathbb F_2^n$ with $n>m+k$:
\begin{equation}\label{eq:gammaj}
\gamma_0=\operatorname{id},\qquad \gamma_1=\ell,\qquad
\gamma_{j+1}=\ell^{(q)}\odot S^q\gamma_j\quad(j\ge1).
\end{equation}
Our main contributions are as follows.
\begin{enumerate}
\item We establish a necessary and sufficient condition for the landscape \(\ell\) determined by \((R,T)\) to be quasi-conserved with a given special index \(q\), and prove that the sequence of functions on $\mathbb F_2^n$ in~\eqref{eq:gammaj} has the polynomial composition property in~\eqref{eq:pcp} if and only if the landscape $\ell$ is quasi-conserved on $\mathbb F_2^n$, provided that $n>\max\{2m+k,2k+m\}$. This is a refinement of Theorem~3.9 in~\cite{kriepke2026shift}.
\item For all landscapes in this class that are quasi-conserved on $\mathbb F_2^n$, we determine the long-term behavior of the function sequence \( \{\gamma_j\}_{j\geq 0}\). We show that, depending on the arithmetic relations between \(R,T,q\) and \(n\), the sequence is either eventually zero or remains nonzero and becomes periodic. We determine the exact first zero or periodic starting index and the least eventual period, and prove linear independence up to the first relation.
\item When polynomial composition holds, we use this algebraic framework to derive permutation criteria and formulas for iterates in the corresponding monoid, and for compositional inverses and orders of its permutation elements. These results provide a unified treatment of several previously studied families of shift-invariant permutations. As a direct application, we obtain over one hundred representative constructions of shift-invariant permutations over \(\mathbb F_2^n\). These are systematically tabulated in Tables~\ref{tab:landscapes}, \ref{tab:landscapes-Tq} and~\ref{tab:landscapes-Tqt} at the end of Section~\ref{sec:families}.
\end{enumerate}
The present work thus not only resolves the problem of characterizing quasi-conserved landscapes for a wide family, but also demonstrates the remarkable power and versatility of the algebraic framework initiated in~\cite{kriepke2024algebraic} and extended in~\cite{kriepke2026shift}. Our results encompass and explain several previously studied families of shift-invariant permutations, including those from~\cite{lyu2025generalized,kriepke2025siblings,liu2026finding,haugland2026new}. Moreover, they yield a large number of further such permutations and allow us to determine their inverses and algebraic structures explicitly.

The paper is organized as follows. Section~\ref{sec:prelim} recalls the necessary preliminaries on shift-invariant functions, the Hadamard product, and the landscape formalism, including the definition of quasi-conserved landscapes. Section~\ref{sec3} contains our main results: we first characterize polynomial composition in terms of quasi-conservation, then determine the long-term behavior and linear independence of the auxiliary sequence, and apply the quotient-ring composition method to the resulting relations. Section~\ref{sec:families} treats mappings represented by binomials and trinomials in the quotient ring and presents representative families and their exceptional dimensions. Section~\ref{sec:con} gives the conclusion and future work. Appendix~\ref{app:pcp} proves the equivalence of quasi-conservation and polynomial composition on $\mathbb F_2^{\mathbb Z}$, relates it to the finite-dimensional result, and gives a counterexample for $q=0$.

\section{Preliminaries}\label{sec:prelim}
Let $[i,j]$ denote the set of integers $\{i,i+1,\ldots,j\}$ for integers $i\le j$. We define the cyclic shift operator $S\colon\mathbb F_2^n\to\mathbb F_2^n$ by $S(x_0,x_1,\ldots,x_{n-1})=(x_1,\ldots,x_{n-1},x_0)$. A mapping $F\colon\mathbb F_2^n\to\mathbb F_2^n$ is called \emph{shift-invariant} if $S(F(x))=F(S(x))$ for all $x$. All coordinate subscripts are read modulo $n$ unless stated otherwise.

Let $\mathbf0$ and $\mathbf1$ be the vectors with all components equal to $0$ and $1$, respectively; the same symbols denote the corresponding constant mappings when the meaning is clear. For $x,y\in\mathbb F_2^n$ we define their component-wise product $z=x\odot y$ by $z_i=x_i y_i$ for $i\in[0,n-1]$. The operation $\odot$ is associative, commutative and distributive over addition and has $\mathbf1$ as its identity. Observe that $S$ is linear and $S(x\odot y)=S(x)\odot S(y)$. We extend $\odot$ to mappings by $(F\odot G)(x)=F(x)\odot G(x)$. With this notation the classical $\chi$ transformation can be written as $\chi=\operatorname{id}+(\mathbf1+S)\odot S^2$. For brevity, we write $S^iF$ for the composition $S^i\circ F$.

Fix disjoint finite sets $R,T\subseteq\mathbb Z$, not both empty, and let $k,m$ be the smallest nonnegative integers such that $R,T\subseteq[-k,m]$. We assume $n>m+k$. A function of the form~\eqref{eq:landscape-function} is called a \emph{landscape function}. Equivalently,
\begin{equation}\label{eq:landscape-coordinate}
\ell(x)_i=\prod_{r\in R}(x_{i+r}+1)\prod_{t\in T}x_{i+t}.
\end{equation}
The inequality $n>m+k$ ensures that the offsets in $[-k,m]$ are distinct modulo $n$.

Daemen~\cite{daemen1995cipher} introduced landscape strings over $\{0,1,-,*\}$ to describe local transformations. We also use a string $\hat\ell$ over $\{0,1,-,*\}$: $0$ indicates offsets in $R$, $1$ offsets in $T$, and $-$ the remaining noncentral offsets in $[-k,m]$. At the center we write $0$, $1$, or $*$ according as $0\in R$, $0\in T$, or $0\notin R\cup T$. Conversely, every such string containing at least one $0$ or $1$ determines a landscape function via the above Hadamard product. For example, the classical transformation $\chi$ has coordinate function \(\chi(x)_i=x_i+(1+x_{i+1})x_{i+2}\). Its associated landscape function is $\ell=(\mathbf1+S)\odot S^2$, with $R=\{1\}$, $T=\{2\}$, and $\hat\ell={*}01$. As another example, the landscape function
$\ell=(S^{-1}+\mathbf1)\odot(S^2+\mathbf1)\odot S^{-2}\odot S$ has $R=\{-1,2\}$, $T=\{-2,1\}$, and $\hat\ell=10{*}10$.

The following definition is a finite-dimensional version of Kriepke's definition in~\cite{kriepke2026shift}.
\begin{definition}\label{def:quasi-conserved}
The set $R\cup T$ is called the \emph{support} of the landscape function $\ell$, denoted by $\operatorname{supp}(\ell)$. A landscape function $\ell$ is called \emph{conserved} if $S^\delta\ell\odot\ell=\mathbf0$ holds for every $\delta\in\operatorname{supp}(\ell)$, and \emph{quasi-conserved} with special index $q\in\operatorname{supp}(\ell)$ if the equality holds for all $\delta\in\operatorname{supp}(\ell)\setminus\{q\}$. For $q\in\supp$, let $\ell^{(q)}$ denote the local rule obtained from~\eqref{eq:landscape-function} by deleting the factor indexed by $q$.
\end{definition}

The relation to the corresponding definition on $\mathbb F_2^{\mathbb Z}$ in~\cite{kriepke2026shift} is discussed in Appendix~\ref{app:pcp}. The following lemma gives a modular criterion for quasi-conservation.

\begin{lemma}\label{lem:quasi-cons-set}
Assume $n>m+k$. For $\delta\in\supp\setminus\{q\}$, $S^\delta\ell\odot\ell=\mathbf0$ if and only if
\begin{equation}\label{eq:finite-qc-set}
(R+\delta)\cap T\ne\varnothing
\qquad\text{or}\qquad
(T+\delta)\cap R\ne\varnothing,
\end{equation}
where the above addition is taken modulo $n$. Equivalently, for some $r\in R$ and $t\in T$,
\[r+\delta\equiv t\pmod n \qquad\text{or}\qquad t+\delta\equiv r\pmod n.\]
Hence $\ell$ is quasi-conserved with special index $q$ if and only if
\eqref{eq:finite-qc-set} holds for every $\delta\in\supp\setminus\{q\}$.
\end{lemma}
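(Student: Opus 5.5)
The plan is to compute the coordinate functions of $S^\delta\ell\odot\ell$ directly from~\eqref{eq:landscape-coordinate} and to reduce the vanishing question to a consistency question about a product of literals. Since $S$ commutes with $\odot$, we have $(S^\delta\ell)(x)_i=\ell(x)_{i+\delta}$. Hence
\[
(S^\delta\ell\odot\ell)(x)_i=\prod_{r\in R}(x_{i+r}+1)(x_{i+r+\delta}+1)\prod_{t\in T}x_{i+t}\,x_{i+t+\delta},
\]
with all subscripts read modulo $n$. This is a product of \emph{literals}: each factor is either $x_j$ (a positive literal) or $x_j+1$ (a negative literal). The positive indices are $i+(T\cup(T+\delta))$ and the negative indices are $i+(R\cup(R+\delta))$, both taken modulo $n$. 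By shift invariance it suffices to treat $i=0$.

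The key observation is the following elementary fact about Boolean functions on $\mathbb F_2^n$. A product of literals is the zero function if and only if some index $j$ occurs both as a positive literal and as a negative literal. If it does, the product contains the factor $x_j(x_j+1)=0$. If it does not, set $x_j=1$ for every positive index and $x_j=0$ for every negative index, and choose the remaining coordinates arbitrarily. This assignment is consistent, and repeated literals cause no problem because $x_j^2=x_j$ and $(x_j+1)^2=x_j+1$. Under this assignment the product equals $1$, so the function is nonzero.

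It then remains to list the possible clashes between the positive index set $T\cup(T+\delta)$ and the negative index set $R\cup(R+\delta)$ modulo $n$. There are four types:
\begin{itemize}
\item $t\equiv r$;
\item $t+\delta\equiv r+\delta$;
\item $t+\delta\equiv r$;
\item $t\equiv r+\delta$.
\end{itemize}
The first two types both reduce to $t\equiv r\pmod n$. This is impossible: $R,T\subseteq[-k,m]$ are disjoint, and $n>m+k$ makes the offsets in $[-k,m]$ pairwise distinct modulo $n$. So a clash occurs exactly when $(T+\delta)\cap R\neq\varnothing$ or $(R+\delta)\cap T\neq\varnothing$ modulo $n$, which is~\eqref{eq:finite-qc-set} together with its elementwise reformulation. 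The final sentence of the lemma then follows immediately from Definition~\ref{def:quasi-conserved}, applied to each $\delta\in\supp\setminus\{q\}$.

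The only delicate point, and the one I would check carefully, is the use of $n>m+k$. It is needed to rule out the "trivial" clash $r\equiv t$. Nothing else requires care: the argument never uses $\delta\neq q$, and it does not need $\delta+[-k,m]$ to avoid wrapping around modulo $n$, because the clash criterion is formulated modulo $n$ from the start.
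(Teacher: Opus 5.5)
Your proof is correct and follows the same route as the paper: $S^\delta\ell\odot\ell$ vanishes exactly when some coordinate is forced both to $1$ (a factor $S^w$) and to $0$ (a factor $S^w+\mathbf1$), and otherwise an input meeting all the forced values gives a nonzero output. You also make explicit a step the paper leaves implicit: the clashes of type $t\equiv r$ are excluded by $R\cap T=\varnothing$ together with $n>m+k$, which leaves exactly the two congruences in~\eqref{eq:finite-qc-set}.
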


\begin{IEEEproof}
For fixed $\delta$, $S^\delta\ell\odot\ell=\mathbf0$ exactly when its
factors contain both $S^w$ and $S^w+\mathbf1$ at the same coordinate. This is equivalent to one of
the two displayed congruences. If no such pair occurs, choose an input
satisfying all required zero and one coordinates; then $S^\delta\ell\odot\ell\ne\mathbf0$.
\end{IEEEproof}

Recall the definition introduced in \cite[Definition~3.8]{kriepke2026shift} as follows:
\begin{definition}\label{def:polynomial-composition-property}
The sequence $\{\gamma_j\}_{j\ge0}$ in~\eqref{eq:gammaj} satisfies the \emph{polynomial composition property} if, for every $j,s\ge0$ and all
$\alpha_0,\ldots,\alpha_s\in\mathbb F_2$ with $\alpha_0=1$,
\begin{equation}\label{eq:polynomial-composition-property}
\gamma_j\circ\left(\gamma_0+\sum_{i=1}^{s}\alpha_i\gamma_i\right)=\sum_{i=0}^{s}\alpha_i\gamma_{i+j}.
\end{equation}
\end{definition}

Throughout the paper, we assume that $R$ and $T$ are nonempty, $q\ne 0$, and $0\notin\supp$. The assumption $q\ne0$ is necessary in general (see Appendix~\ref{app:pcp}).
Under quasi-conservation, $q\ne0$ implies $0\notin\supp$. We can further assume that $q\in T$. Let $\bar\ell$ be obtained from $\ell$ by exchanging $R$ and $T$, let $J(x)=x+\mathbf1$, and let
$\{\bar\gamma_j\}$ be the corresponding sequence. Then $\bar\ell=\ell\circ J$ and $\bar\gamma_j=\gamma_j\circ J$ for $j\ge1$. If $f=\gamma_0+\sum_{i=1}^s\alpha_i\gamma_i$ and $\bar f=\gamma_0+\sum_{i=1}^s\alpha_i\bar\gamma_i$, then
$\bar f=J\circ f\circ J$ and $\bar\gamma_j\circ\bar f=(\gamma_j\circ f)\circ J$ for $j\ge1$. Thus complementation preserves both quasi-conservation and the polynomial composition property, and exchanging $R$ and $T$ moves $q$ from $R$ to $T$.

\section{Shift-Invariant Permutations from Quasi-Conserved Landscapes}\label{sec3}

In this section, we first relate quasi-conservation of a landscape $\ell$ to the polynomial composition property of $\{\gamma_j\}_{j\geq 0}$, and prove  an $\mathbb F_2^n$-version of Theorem 3.9 in~\cite{kriepke2026shift}. We then determine the long-term behavior and linear independence of $\{\gamma_j\}_{j\geq 0}$ and apply the quotient-ring composition method to describe shift-invariant permutations and their algebraic structure. We begin with the following closed form.

\begin{lemma}\label{lem:gamma-explicit}
The terms of the sequence defined in~\eqref{eq:gammaj} can also be represented as
\begin{equation}\label{eq:gamma-explicit}
\gamma_j=S^{jq}\bigodot_{i=0}^{j-1}S^{iq}\ell^{(q)}=S^{jq}\bigodot_{i=0}^{j-1}\left(\bigodot_{r\in R}(S^{iq+r}+\mathbf1)\bigodot_{t\in T\setminus\{q\}}S^{iq+t}\right),
\end{equation}
where the empty product for $j=0$ is $\mathbf1$.
\end{lemma}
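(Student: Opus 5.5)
The plan is induction on $j$, using only the recursion \eqref{eq:gammaj} and two facts recalled in Section~\ref{sec:prelim}: $S$ is linear, and $S$ is multiplicative for the Hadamard product, $S(x\odot y)=S(x)\odot S(y)$. Iterating the second fact gives $S^a(F\odot G)=S^aF\odot S^aG$ for mappings $F,G$ and every integer $a$, with negative powers allowed since $S$ is invertible. The second equality in \eqref{eq:gamma-explicit} is then immediate. Substitute the definition \eqref{eq:ellq} of $\ell^{(q)}$ and distribute $S^{iq}$ over the Hadamard factors. This uses $S^{iq}(S^r+\mathbf1)=S^{iq+r}+\mathbf1$, which holds because $S$ is linear and $S\mathbf1=\mathbf1$. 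So the real content is the first equality.

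\textbf{Base cases.} For $j=0$ the product is empty, hence equal to $\mathbf1$, and the right-hand side is $S^0\circ\mathbf1$. I will read the expression as $S^{jq}$ applied to the identity factor, Hadamard-multiplied by the product. Concretely, I interpret \eqref{eq:gamma-explicit} as
\[
\gamma_j=S^{jq}\operatorname{id}\odot\bigodot_{i=0}^{j-1}S^{iq}\ell^{(q)},
\]
that is, the factor $S^{jq}$ is the map $x\mapsto S^{jq}x$. With this reading, $j=0$ gives $\operatorname{id}\odot\mathbf1=\operatorname{id}=\gamma_0$. For $j=1$ we get $S^q\odot\ell^{(q)}$, which equals $\ell=\gamma_1$ by \eqref{eq:landscape-function}, since $q\in T$.

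\textbf{Inductive step.} Assume the formula for some $j\ge1$. Then
\[
\gamma_{j+1}=\ell^{(q)}\odot S^q\gamma_j=\ell^{(q)}\odot S^{(j+1)q}\odot\bigodot_{i=0}^{j-1}S^{(i+1)q}\ell^{(q)},
\]
where the second equality distributes $S^q$ over the Hadamard product. Reindexing the product with $i'=i+1\in[1,j]$ and absorbing $\ell^{(q)}=S^{0\cdot q}\ell^{(q)}$ as the $i'=0$ factor gives exactly the formula for $j+1$. Commutativity and associativity of $\odot$ justify the rearrangement.

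The only delicate point is notational. One must fix the reading of ``$S^{jq}\bigodot\cdots$'' as a Hadamard product with $S^{jq}$ (the shift map), rather than as $S^{jq}$ composed with the product. With that settled, the proof is a routine induction, and I expect no real obstacle.
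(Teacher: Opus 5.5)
Your proof is correct and is essentially the paper's argument: induction on $j$ with base cases $j=0,1$, pushing $S^q$ through the Hadamard product in the inductive step and reindexing, then substituting \eqref{eq:ellq} (using $S^{iq}(S^r+\mathbf1)=S^{iq+r}+\mathbf1$) for the second equality. Your reading of the leading $S^{jq}$ as a Hadamard factor rather than a composition is the intended one, as the case $j=1$, where $\gamma_1=S^q\odot\ell^{(q)}=\ell$, confirms.
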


\begin{IEEEproof}
The cases $j=0,1$ follow from the definitions. Suppose that the formula holds for some
$j\ge1$. Applying $S^q$ to the induction hypothesis and using
$S^q(F\odot G)=S^qF\odot S^qG$, we obtain
\begin{equation} \label{eq:gammaj+1}
\gamma_{j+1}=\ell^{(q)}\odot S^q\gamma_j=\ell^{(q)}\odot S^{(j+1)q}\bigodot_{i=0}^{j-1}S^{(i+1)q}\ell^{(q)}=S^{(j+1)q}\bigodot_{i=0}^{j}S^{iq}\ell^{(q)}.
\end{equation}
This proves the first equality in~\eqref{eq:gamma-explicit}
by induction. Substituting \eqref{eq:ellq} into the product on the
right-hand side gives the second equality.
\end{IEEEproof}

\subsection{Polynomial composition of functions and quasi-conservation of related landscapes}
\label{subsec:finite-pcp}

Kriepke~\cite[Theorem~3.9]{kriepke2026shift} stated without proof the equivalence of quasi-conservation and polynomial composition on $\mathbb F_2^{\mathbb Z}$.
In this subsection, we show this equivalence directly on $\F_2^n$  under a suitable condition on $n$. Recall that $\ell$ is a landscape function defined in (\ref{eq:landscape-function}), and $k$ and $m$ are the smallest nonnegative integers
such that $R,T\subseteq[-k,m]$. Throughout this subsection, we assume
\begin{equation}\label{eq:finite-pcp-bound}
n>\max\{2m+k,2k+m\}.
\end{equation}

\begin{theorem}\label{thm:equivalence}
Let $\ell$ be the landscape function in~\eqref{eq:landscape-function}, and let $\{\gamma_j\}_{j\ge0}$ be the sequence defined in~\eqref{eq:gammaj}. If~\eqref{eq:finite-pcp-bound} holds, then the following statements are equivalent:
\begin{enumerate}
\item[(i)] The landscape $\ell$ is quasi-conserved with special index $q$.
\item[(ii)] The sequence $\{\gamma_j\}_{j\ge0}$ satisfies the polynomial composition property.
\end{enumerate}
\end{theorem}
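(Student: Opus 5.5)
The plan is to prove (i)$\Rightarrow$(ii) by induction on $j$, reducing the composition to a single "the non-special factors ignore the perturbation" identity. The converse (ii)$\Rightarrow$(i) comes from a specific instance of \eqref{eq:polynomial-composition-property} evaluated at a carefully chosen input. Throughout, write $f=\gamma_0+\sum_{i=1}^s\alpha_i\gamma_i$ and $g=f+\operatorname{id}=\sum_{i\ge1}\alpha_i\gamma_i$.

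By Lemma~\ref{lem:gamma-explicit}, each $\gamma_i$ with $i\ge1$ contains the factor $S^{(i-1)q}\ell^{(q)}\odot S^{iq}=S^{(i-1)q}\ell$. In particular $\gamma_1=\ell$, and every $\gamma_i$ with $i\ge1$ is supported inside the support of some shift of $\ell$.

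\textbf{Key lemma.} For every $\delta\in\operatorname{supp}(\ell)\setminus\{q\}$ and every $i\ge1$, I would show that
\[
\gamma_{j+1}\odot S^{\delta+jq}\gamma_i=\mathbf0 \qquad\text{for all } j\ge0 .
\]
Equivalently, on the set where $\ell^{(q)}\odot S^{q}\gamma_j\ne\mathbf0$, each factor $S^{\delta}x+\epsilon$ of $\ell^{(q)}$ (shifted by multiples of $q$) takes the same value at $f(x)$ as at $x$.
\begin{itemize}
\item For $i=1$, this is exactly quasi-conservation: $\gamma_{j+1}$ contains $S^{jq}\ell$, and $S^{jq}(\ell\odot S^\delta\ell)=\mathbf0$. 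Lemma~\ref{lem:quasi-cons-set} gives this in the modular form.
\item For $i>1$, $S^{\delta+jq}\gamma_i$ contains $S^{\delta+jq}\ell$, since its first factor block is $S^{\delta+jq}\ell^{(q)}\odot S^{\delta+(j+1)q}\cdots$. The same vanishing then applies after I check that the relevant $\ell$-factor is indeed a factor.
\end{itemize}

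\textbf{Induction for (i)$\Rightarrow$(ii).} The base case $j=0$ is trivial. For the step, I use $\gamma_{j+1}\circ f=\ell^{(q)}(f)\odot S^q(\gamma_j\circ f)$, which holds because composition commutes with $\odot$ and with $S$. By induction, $S^q(\gamma_j\circ f)=\sum_i\alpha_iS^q\gamma_{i+j}$. Each summand contains the factor $S^{q}\gamma_j$, because $\gamma_{i+j}$ contains $\gamma_j$'s factors shifted appropriately. The key lemma then lets me replace $\ell^{(q)}(f)$ by $\ell^{(q)}(x)$ on the support of each summand. This gives
\[
\gamma_{j+1}\circ f=\sum_i\alpha_i\,\ell^{(q)}\odot S^q\gamma_{i+j}=\sum_i\alpha_i\gamma_{i+j+1}.
\]
The bookkeeping of which shifted copy of $\ell$ kills which perturbation is the main obstacle, since shifts by multiples of $q$ wrap modulo $n$. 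I would first try to arrange every needed vanishing as $S^{a}(\ell\odot S^\delta\ell)=\mathbf0$ for a single shift $a$, so that wraparound is irrelevant.

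\textbf{Converse (ii)$\Rightarrow$(i).} Suppose some $\delta\in\operatorname{supp}(\ell)\setminus\{q\}$ violates \eqref{eq:finite-qc-set}. I take $j=1$, $s=1$, $\alpha_1=1$, so (ii) asserts $\ell\circ(\operatorname{id}+\ell)=\gamma_1+\gamma_2$. By Lemma~\ref{lem:quasi-cons-set}, $\ell\odot S^{\delta}\ell\ne\mathbf0$, so I can choose $x$ with $\ell(x)_0=\ell(x)_\delta=1$, and otherwise set as many coordinates to $0$ as the constraints allow.

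At coordinate $0$, the factor at offset $\delta$ of $\ell(x+\ell(x))$ is flipped, so the left side is $0$. The right side is $1+\gamma_2(x)_0$. I then need $\gamma_2(x)_0=\ell(x)_0\,\ell^{(q)}(x)_q$ to be $0$, or else I must pick the coordinate to compare differently, for instance $-\delta$ by symmetry.

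The bound $n>\max\{2m+k,2k+m\}$ ensures that the offsets in $[-k,m]$, $\delta+[-k,m]$ and $q+[-k,m]$ do not collide modulo $n$ beyond the intended coincidences. This lets me set the coordinates of $x+\ell(x)$ outside the support freely and guarantees that only the $\delta$-factor changes. A short case split on the sign of $\delta$ and on whether $\delta\in R$ or $\delta\in T$ finishes the argument.
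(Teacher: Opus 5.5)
\textbf{(i)$\Rightarrow$(ii).} Your induction skeleton $\gamma_{j+1}\circ f=(\ell^{(q)}\circ f)\odot S^q(\gamma_j\circ f)$ is the paper's. The step reduces to showing $(\ell^{(q)}\circ f)\odot S^q\gamma_c=\ell^{(q)}\odot S^q\gamma_c$ for each $c=j+i$, $i\in\{0\}\cup E$. Your key lemma only concerns the support of $\gamma_{c+1}=\ell^{(q)}\odot S^q\gamma_c$, i.e.\ inputs where $\ell^{(q)}(x)_0=1$ already.

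The missing half is to rule out inputs with $S^q\gamma_c(x)_0=1$ and $\ell^{(q)}(x)_0=0$ at which the perturbation $g$ flips exactly the violated factors, so that $\ell^{(q)}(f(x))_0=1$. This is where the real difficulty and the bound~\eqref{eq:finite-pcp-bound} enter. In Example~\ref{ex:finite-pcp-small-n} ($n=6$, quasi-conserved) the identity fails at exactly such an input ($x=(0,0,1,1,0,1)$, $c=0$). Even under the bound, $g$ need not vanish on the support of $S^q\gamma_c$: for $R=\{1\}$, $T=\{-1,2\}$, $q=2$ one has $S^2\gamma_1\odot S^{-1}\gamma_1\ne\mathbf0$.

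The paper handles this in Lemma~\ref{lem:combined}. It expands $\ell^{(q)}\circ f=\sum_V P_{V,E}$ and shows that a perturbed offset $v$ forces $v+q\in R$ together with $S^{v+q}\gamma_{k'}(x)_0=1$ for some $k'\ge1$. Iterating gives $v+sq\in R$ for all $s\ge1$. This is impossible because the bound makes these integer rather than modular identities; modulo $6$ the chain $-1,2,-1,\dots$ cycles.

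Your key lemma also needs repair. The perturbations of $\ell^{(q)}\circ f$ at coordinate $0$ sit at the offsets $\delta$, so the relevant statement is $\gamma_{c+1}\odot S^{\delta}\gamma_i=\mathbf0$, not one with shift $\delta+jq$. Moreover, for $i\ge2$, $S^{\delta+jq}\gamma_i$ does not contain $S^{\delta+jq}\ell$. When the only witness for $\delta$ is $q+\delta=r\in R$, there is no direct factor clash. You then need $S^\delta\gamma_i=S^\delta\ell^{(q)}\odot S^{r}\gamma_{i-1}$ plus an induction over the offsets, as in Lemma~\ref{lem:two}.

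\textbf{(ii)$\Rightarrow$(i).} Your start ($j=s=1$, $\alpha_1=1$, an input with $\ell(x)_0=\ell(x)_\delta=1$) matches the paper's. Since $\gamma_2(x)_0=\ell^{(q)}(x)_0\,\ell(x)_q$ (not $\ell(x)_0\,\ell^{(q)}(x)_q$), it settles the case where some such $x$ has $\ell(x)_q=0$.

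These constraints can, however, force $\ell(x)_q=1$, and then the input you prescribe can satisfy the identity at every coordinate. Take $R=\{5,6\}$, $T=\{1,2\}$, $q=1$, $\delta=2$, $n=13$. Your $x$ has ones exactly at positions $1,2,3,4$. Then $\ell(x)$, $\gamma_2(x)$ and $\ell(x+\ell(x))$ have supports $\{0,1,2\}$, $\{0,1\}$ and $\{2\}$, so $\gamma_1\circ(\gamma_0+\gamma_1)$ and $\gamma_1+\gamma_2$ agree at $x$.

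The paper instead changes the input, using $e=q-\delta$. If $e\notin T$, then $e\notin\operatorname{supp}(\ell)$, and resetting $x_q=0$ makes $\ell(x)_0=0$. Because $0,e\notin\operatorname{supp}(\ell)$, the values $\ell^{(q)}(x)_0=\ell(x)_q=\ell(x)_\delta=1$ survive, giving $0\ne1$ at coordinate $0$. If $e\in T$, a $\max T$/$\min T$ argument shows that $q$ has sign opposite to both $e$ and $\delta$, contradicting $q=e+\delta$. A case split on the sign of $\delta$ and on whether $\delta\in R$ or $\delta\in T$ does not capture this structure.
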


To prove Theorem~\ref{thm:equivalence} we first establish the following two lemmas.

\begin{lemma}\label{lem:two}
Let $\ell$ be a quasi-conserved landscape defined in~\eqref{eq:landscape-function} with special index $q$. Let $a,b$ be positive integers and $u, v \in \mathbb Z$. If $v-u\in\supp\setminus\{q\}$ or $u-v\in\supp\setminus\{q\}$, then
\begin{equation}\label{eq:suvgammaab}
S^u\gamma_a\odot S^v\gamma_b=\mathbf 0.
\end{equation}
\end{lemma}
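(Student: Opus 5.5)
The plan is to reduce \eqref{eq:suvgammaab} to a clash between a factor $S^w$ and a factor $S^w+\mathbf1$ at the same coordinate. Such a clash makes the Hadamard product vanish identically, as in the proof of Lemma~\ref{lem:quasi-cons-set}. Both $\gamma_a$ and $\gamma_b$ will be written out via the closed form of Lemma~\ref{lem:gamma-explicit}.

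First, two normalizations. Since $S$ is linear and $S(F\odot G)=SF\odot SG$, we may apply $S^{-u}$ (or $S^{-v}$). Since $\odot$ is commutative, we may also swap the roles of $(u,a)$ and $(v,b)$. So it suffices to show
\[
\gamma_a\odot S^\delta\gamma_b=\mathbf0 \quad\text{and}\quad S^\delta\gamma_a\odot\gamma_b=\mathbf0
\]
for every $\delta\in\supp\setminus\{q\}$ and all $a,b\ge1$.

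Next, rewrite the closed form. Since $q\in T$, we have $S^{iq}\ell^{(q)}\odot S^{(i+1)q}=S^{iq}\ell$. Hence $\gamma_j$ has the factors $S^{iq}\ell^{(q)}$ for $0\le i\le j-1$, and its top level combines with $S^{jq}$ to give the full factor $S^{(j-1)q}\ell$.

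Put $c=\min\{a,b\}-1\ge0$. The function with the smaller index then contains the full factor $S^{cq}\ell$. The other function contains at least $S^{cq}\ell^{(q)}$, together with the lower levels $S^{iq}\ell^{(q)}$ for $i<c$. After multiplying out and shifting by $S^{-cq}$, the product in question therefore contains a subproduct of the form
\[
\ell\odot S^{\pm\delta}\ell^{(q)}\qquad\text{or}\qquad \ell^{(q)}\odot S^{\pm\delta}\ell .
\]
It suffices to show that this subproduct vanishes.

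By quasi-conservation and Lemma~\ref{lem:quasi-cons-set}, there exist $r\in R$ and $t\in T$ with $r+\delta\equiv t$ or $t+\delta\equiv r \pmod n$. This gives a clashing pair between a factor of $\ell$ and a factor of $S^\delta\ell$. If $t\ne q$, both clashing factors survive in $\ell^{(q)}$ and in $S^\delta\ell^{(q)}$, and the subproduct is $\mathbf0$.

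The main obstacle is the case in which every witness pair uses $t=q$ on the side carrying only $\ell^{(q)}$. Such a pair arises from $r\equiv q\pm\delta$ and hits the missing factor $S^{(c+1)q}$. There I would try three things, in order.

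First, exploit the freedom in the normalization: replacing $\delta$ by $-\delta$, and which of $a,b$ is smaller, moves the $q$-factor to the other side. In the subproduct where the full $\ell$ sits on the shifted side, the same pair with $t=q$ clashes with a surviving $S^{r}+\mathbf1$ factor.

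Second, when $a\ne b$, use the larger-index function's extra level $S^{(c+1)q}\ell^{(q)}$. Its factor $S^{(c+1)q+r-q}+\mathbf1$, or a shifted version of it, can be matched against the $S^{(c+1)q}$ term of the other function.

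Third, invoke the bound $n>\max\{2m+k,2k+m\}$ to ensure that no accidental coincidences modulo $n$ create or destroy these matches. The offsets involved lie within a window of width at most $2m+k$ or $2k+m$.

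I expect the bookkeeping of this $t=q$ case, including the case $a=b$ where no extra level is available, to be the delicate step. Everything else is a direct factor comparison.
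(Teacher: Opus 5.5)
Your reduction to $\gamma_a\odot S^\delta\gamma_b=\mathbf0$ is correct, and so is your handling of witnesses with $t\ne q$; this is the paper's Case~1, where the clash already occurs at level $0$ of both factors. The case $a=b$ is not delicate: both factors then contain the full level $S^{(a-1)q}\ell$, and $S^{(a-1)q}(\ell\odot S^{\delta}\ell)=\mathbf0$ is quasi-conservation itself. The genuine gap is the case $t=q$. You leave it open, and the three remedies you list do not close it.

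\textbf{Why the remedies fail.} The sufficient condition ``the subproduct vanishes'' is false in general. The normalization gives you no freedom, because every pair $(a,b)$ and both signs of $\delta$ must be handled. For $\chi$ ($R=\{1\}$, $T=\{2\}$, $q=2$, $\delta=1$), the product $\gamma_2\odot S\gamma_1$ forces the subproduct $\ell^{(q)}\odot S\ell=(S+\mathbf1)\odot(S^2+\mathbf1)\odot S^3\ne\mathbf0$.

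More seriously, the clash can lie arbitrarily many levels above $c$, so inspecting levels $c$ and $c+1$ cannot suffice. Take $R=\{-8,4,7\}$, $T=\{-1,3\}$, $q=3$. This landscape is quasi-conserved, with witnesses $7+(-8)=-1$, $4+(-1)=q$, $q+4=7$ and $-8+7=-1$ for $\delta=-8,-1,4,7$. Here $k=8$, $m=7$, and every coordinate below lies in $[-9,13]$, so the computation is valid for every $n>23$.

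\begin{itemize}
\item For $\gamma_3(x)_0=1$ one needs $x_i=0$ for $i\in\{-8,-5,-2,4,7,10,13\}$ and $x_i=1$ for $i\in\{-1,2,5,9\}$.
\item For $(S^{-1}\gamma_1)(x)_0=1$ one needs $x_i=0$ for $i\in\{-9,3,6\}$ and $x_i=1$ for $i\in\{-2,2\}$.
\end{itemize}

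The only conflict is at $i=-2$. It comes from the factor $S^{2q-8}+\mathbf1$ of the level-$2$ block $S^{2q}\ell^{(q)}$ of $\gamma_3$. Here $c=0$, and neither level $0$ nor level $1$ of $\gamma_3$ conflicts with $S^{-1}\ell$.

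\textbf{The missing idea: recursion on the offset.} Suppose the witness is $r+\delta=q$. Then $r=q-\delta$ again lies in $\supp\setminus\{q\}$. Using $\gamma_a=\ell^{(q)}\odot S^q\gamma_{a-1}$ and $S^q=S^\delta S^{r}$, you get
\[
\gamma_a\odot S^\delta\gamma_b=\ell^{(q)}\odot S^\delta\bigl(\gamma_b\odot S^{r}\gamma_{a-1}\bigr),
\]
so quasi-conservation can be applied afresh to the new offset $r$. Symmetrically, if $q+\delta=r$, peel $S^\delta\gamma_b=S^\delta\ell^{(q)}\odot S^{r}\gamma_{b-1}$.

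The paper closes the argument by induction on $a+b$, allowing the offset to change at each step. The base cases are direct clashes: for $a=1$, $S^q$ against $S^q+\mathbf1$; for $b=1$, $S^r$ against $S^r+\mathbf1$.

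In the example above the offsets run through the chain $-1\to4\to7$, ending at a witness with $t=-1\ne q$. That chain is exactly why the clash appears two levels up. No fixed-depth factor comparison can replace this induction.
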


\begin{IEEEproof}
We only show the case $v-u\in\supp\setminus\{q\}$, and the other case
can be proved similarly. Applying $S^{-u}$ to the left-hand side of
(\ref{eq:suvgammaab}) and setting $\delta=v-u$, it suffices to prove
\begin{equation}\label{eq:gammadelta}
	\gamma_a\odot S^\delta\gamma_b=\mathbf0.
\end{equation}
Here $\delta\in\supp\setminus\{q\}$. Next, we only need to show that
the equality (\ref{eq:gammadelta}) holds.

Since $\ell$ is quasi-conserved, by Lemma~\ref{lem:quasi-cons-set} there exist $r\in R$ and $t\in T$ satisfying $r+\delta\equiv t\pmod n$ or $t+\delta\equiv r\pmod n$. Since $r,t,\delta\in\supp\subseteq[-k,m]$, we have $|r+\delta-t|,|t+\delta-r|\le\max\{2m+k,2k+m\}<n$ by~\eqref{eq:finite-pcp-bound}. Hence $r+\delta=t$ or $t+\delta=r$. The following discussion is divided into two cases.
	
{\bf Case 1:} $t \neq q$. If $r+\delta=t$, from the expression of $\gamma_a$ in (\ref{eq:gamma-explicit}), we know that $\gamma_a$ contains the factor $S^t$  (the $i=0$ factor of \eqref{eq:gamma-explicit}, since $t\in T\setminus\{q\}$), while $S^\delta\gamma_b$ contains a factor $S^{\delta+r}+\mathbf1=S^t+\mathbf1$  (a factor of $S^\delta\ell^{(q)}$, since $r\in R$).
So, the equality (\ref{eq:gammadelta}) holds. If $t+\delta=r$, from the expression of $\gamma_a$ in (\ref{eq:gamma-explicit}), we know that $\gamma_a$ contains a factor $S^r+\mathbf1$  (the $i=0$ factor, since $r\in R$), while $S^\delta\gamma_b$ contains a factor $S^{\delta+t}=S^r$  (a factor of $S^\delta\ell^{(q)}$, since $t\in T\setminus\{q\}$).
Hence, the equality (\ref{eq:gammadelta}) also holds.

{\bf Case 2:}  $t=q$. If $r+\delta=q$, then $r=q-\delta\in\supp\setminus\{q\}$. For $a=1$, $\gamma_1=\ell$ contains $S^q$, while $S^\delta\gamma_b$ contains $S^{\delta+r}+\mathbf1=S^q+\mathbf1$. Hence the left hand side of (\ref{eq:gammadelta}) contains both $S^q$ and $S^q+\mathbf1$, and so the equality (\ref{eq:gammadelta}) holds.
Next, we prove the equality (\ref{eq:gammadelta}) by induction on $a+b$. Assume that
\[\gamma_{\bar a}\odot S^\delta\gamma_{\bar b}=\mathbf0 \]
for all $\bar a,\bar b\ge1$ with $\bar a+\bar b<a+b$ and every $\delta\in\supp\setminus\{ q \}$.

For $a\ge2$,  the recursion \eqref{eq:gammaj} gives $\gamma_a=\ell^{(q)}\odot S^q\gamma_{a-1}$. Since $S^q=S^\delta S^{q-\delta}=S^\delta S^r$ and $S^\delta(F\odot G)=S^\delta F\odot S^\delta G$, we obtain
\[ \gamma_a\odot S^\delta\gamma_b =\ell^{(q)}\odot S^q\gamma_{a-1}\odot S^\delta\gamma_b =\ell^{(q)}\odot S^\delta\bigl(\gamma_b\odot S^r\gamma_{a-1}\bigr) = \mathbf 0, \]
 where the last equality uses the induction hypothesis with $(\bar a,\bar b)=(b,a-1)$, which applies because $r=q-\delta\in\supp\setminus\{q\}$.

If $q+\delta=r$,  the argument is parallel, with the roles of the two factors interchanged. For $b=1$, $S^\delta\gamma_1=S^\delta\ell=S^\delta\ell^{(q)}\odot S^{\delta+q}$ contains the factor $S^{\delta+q}=S^r$, while $\gamma_a$ contains the factor $S^r+\mathbf1$ (the $i=0$ factor of \eqref{eq:gamma-explicit}, since $r\in R$); hence \eqref{eq:gammadelta} holds. For $b\ge2$, the recursion \eqref{eq:gammaj} gives $S^\delta\gamma_b=S^\delta\ell^{(q)}\odot S^{\delta+q}\gamma_{b-1}=S^\delta\ell^{(q)}\odot S^r\gamma_{b-1}$, and the induction hypothesis with $(\bar a,\bar b)=(a,b-1)$, which applies because $r\in\supp\setminus\{q\}$, yields $\gamma_a\odot S^r\gamma_{b-1}=\mathbf0$; hence \eqref{eq:gammadelta} also holds.
\end{IEEEproof}

Lemma~\ref{lem:two} shows the relation between functions $\gamma_a$ and $\gamma_b$ for some positive integers $a, b$. To prove Theorem~\ref{thm:equivalence}, we need to verify some more complicated identities. For this purpose, we introduce
the following notation: for $\lambda\in\supp$ define
\[\zeta_\lambda= \begin{cases}
S^\lambda+\mathbf1,&\lambda\in R,\\
S^\lambda,&\lambda\in T.
\end{cases}
\]
Let $E$ be a nonempty finite subset of $\mathbb Z_{\ge 1}$ and $V\subseteq\supp\setminus\{q\}$. Define
\[P_{V,E} =\bigodot_{\sigma\in V} \left(\sum_{i\in E}S^\sigma\gamma_i\right) \bigodot_{\lambda\in\supp\setminus(V\cup\{q\})}\zeta_\lambda.\]

\begin{lemma}\label{lem:combined}
Let $\ell$ be a quasi-conserved landscape in~\eqref{eq:landscape-function} with special index $q$. With the notation introduced above, if $V\subseteq\supp\setminus\{q\}$ is nonempty, then
\[ P_{V,E}\odot S^q\gamma_c=\mathbf 0 \]
for every $c\ge0$.
\end{lemma}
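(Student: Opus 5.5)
The plan is to expand $P_{V,E}$ by distributivity and kill each resulting monomial with a single quasi-conservation relation. Multiplying out the $|V|$ sums gives
\[
P_{V,E}\odot S^q\gamma_c=\sum_{(i_\sigma)_{\sigma\in V}\in E^V}\ \bigodot_{\sigma\in V}S^\sigma\gamma_{i_\sigma}\bigodot_{\lambda\in\supp\setminus(V\cup\{q\})}\zeta_\lambda\odot S^q\gamma_c ,
\]
so it suffices to show that every summand is $\mathbf 0$. Fix one summand and one $\sigma\in V$ (possible since $V\neq\varnothing$), and write $i=i_\sigma\ge1$.

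We rewrite $S^\sigma\gamma_i$ using the recursion \eqref{eq:gammaj}:
\[
S^\sigma\gamma_i=S^\sigma\ell^{(q)}\odot S^{\sigma+q}\gamma_{i-1},
\]
with the convention that $S^{\sigma+q}\gamma_0=S^{\sigma+q}$ when $i=1$. Hence $S^\sigma\gamma_i$ contains the factors $S^{\sigma+r}+\mathbf1$ for $r\in R$ and $S^{\sigma+t}$ for $t\in T\setminus\{q\}$. Since $\sigma\in\supp\setminus\{q\}$, Lemma~\ref{lem:quasi-cons-set} supplies $r\in R$ and $t\in T$ with $r+\sigma\equiv t$ or $t+\sigma\equiv r\pmod n$. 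The bound \eqref{eq:finite-pcp-bound} upgrades this to an equality of integers, exactly as in the proof of Lemma~\ref{lem:two}. We then split into cases according to whether $t=q$ and whether the partner offset lies in $V$.

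\textbf{Case $t\neq q$, $r+\sigma=t$.} If $t\notin V$, then $\zeta_t=S^t$ occurs as a factor, while $S^\sigma\gamma_i$ contains $S^t+\mathbf1$, so the summand vanishes. If $t\in V$, the summand contains $S^\sigma\gamma_{i_\sigma}\odot S^t\gamma_{i_t}$ with $t-\sigma=r\in\supp\setminus\{q\}$ (note $r\ne q$ because $q\in T$). Lemma~\ref{lem:two} then gives $\mathbf0$.

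\textbf{Case $t\neq q$, $t+\sigma=r$.} This is symmetric: $S^\sigma\gamma_i$ contains $S^r$, which meets either $\zeta_r=S^r+\mathbf1$ if $r\notin V$, or, if $r\in V$, the factor $S^r\gamma_{i_r}$ with $r-\sigma=t\in\supp\setminus\{q\}$, where Lemma~\ref{lem:two} applies.

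\textbf{Case $t=q$, $r+\sigma=q$.} Here the relevant partner is $S^q\gamma_c$ itself. If $c\ge1$, then $q-\sigma=r\in\supp\setminus\{q\}$ and Lemma~\ref{lem:two} with $u=\sigma$, $v=q$ kills $S^\sigma\gamma_i\odot S^q\gamma_c$. If $c=0$, then $S^q\gamma_0=S^q$ meets the factor $S^{\sigma+r}+\mathbf1=S^q+\mathbf1$ of $S^\sigma\gamma_i$.

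\textbf{Case $t=q$, $q+\sigma=r$.} This is the step I expect to be the main obstacle. We now have $S^{\sigma+q}\gamma_{i-1}=S^r\gamma_{i-1}$.
\begin{itemize}
\item If $i=1$, this factor is $S^r$. It meets $\zeta_r=S^r+\mathbf1$ when $r\notin V$.
\item If $i\ge2$ and $c\ge1$, we pair $S^r\gamma_{i-1}$ with $S^q\gamma_c$. Since $r-q=\sigma\in\supp\setminus\{q\}$, Lemma~\ref{lem:two} gives $\mathbf0$ regardless of whether $r\in V$.
\item The leftover subcases are $c=0$ with $i\ge2$, and $i=1$ with $r\in V$, where the factor $S^r\gamma_{i_r}$ differs from $S^\sigma\gamma_i$ by the forbidden offset $q$.
\end{itemize}
For these I would run an induction, on $i$ and on $\sum_{\sigma\in V}i_\sigma$, in the spirit of Case~2 of Lemma~\ref{lem:two}. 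The idea is to peel one $\ell^{(q)}$ layer off $S^r\gamma_{i_r}$ (respectively off $S^r\gamma_{i-1}$) via \eqref{eq:gammaj}, exposing a factor $S^{r+q}\gamma_{i_r-1}$ (respectively $S^{r+q}\gamma_{i-2}$). One then re-applies the case analysis with $\sigma$ replaced by $r$, using the relation $r-\sigma=q$ to shift the bookkeeping.

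If that induction gets tangled, I would instead choose $\sigma\in V$ extremal, for instance maximal in the direction of $q$. That choice would rule out $\sigma+q\in V$ and so eliminate the problematic subcase with $r\in V$. The subcase $c=0$ would then be handled by noting that $S^q$ together with the $\zeta_\lambda$ reproduces a shifted copy of $\ell$, to which the earlier cases apply.
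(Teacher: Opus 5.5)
\textbf{There is a genuine gap.} Your reduction to single summands is correct, and so is your one-step case analysis. In fact, for $c\ge1$ every case closes in one step. In your leftover subcase $i=1$, $\sigma+q=r\in V$, you can pair $S^r\gamma_{i_r}$ with $S^q\gamma_c$ by Lemma~\ref{lem:two}, exactly as you did for $i\ge2$, since $r-q=\sigma\in\supp\setminus\{q\}$.

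So the real difficulty is $c=0$ with $\sigma+q\in R$, which is the heart of the lemma, and you leave it unproved. Neither of your sketches closes it.
\begin{itemize}
\item In the first sketch, the induction measure does not decrease. When $i=1$ and $\sigma+q\in V$, you pass to the factor $S^{\sigma+q}\gamma_{i_{\sigma+q}}$ of the \emph{same} summand. Then $\sum_{\sigma\in V}i_\sigma$ is unchanged, and $i_{\sigma+q}$ may exceed $i$.
\item In the second sketch, the claim that $S^q$ together with the $\zeta_\lambda$ ``reproduces a shifted copy of $\ell$'' is false. The factors $\zeta_\lambda$ for $\lambda\in V$ are absent, and $S^\lambda\gamma_{i_\lambda}$ does not supply them in general.
\end{itemize}
For example, take $R=\{-1,1\}$, $T=\{2\}$, $q=2$, $V=\{-1\}$, $c=0$ and $i_{-1}=i\ge2$. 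The only relation for $\delta=-1$ is $2+(-1)=1\in R$, and no factor $S^{-1}+\mathbf1$ ever appears. The summand dies only one layer deeper, where $S^{1}\gamma_{i-1}$ contributes $S^{2}+\mathbf1$ against $S^q=S^2$.

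\textbf{The missing idea is the chain argument.} Run your four-case analysis for an arbitrary factor $S^a\gamma_k$ of the summand, with $a\in\supp\setminus\{q\}$ (not necessarily in $V$) and $k\ge1$ (not necessarily in $E$). Every alternative kills the summand except $a+q\in R$. In that case you obtain a new factor $S^{a+q}\gamma_{k'}$ with $k'\ge1$:
\begin{itemize}
\item if $k\ge2$, take $k'=k-1$ by peeling;
\item if $k=1$ and $a+q\in V$, take $k'=i_{a+q}$;
\item if $k=1$ and $a+q\notin V$, then $\zeta_{a+q}$ kills the summand.
\end{itemize}
Starting from $\sigma\in V$, this forces $\sigma+sq\in R$ for all $s\ge1$. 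That is impossible, since $q\ne0$ and $R$ is finite. Termination comes from the offsets, not from the indices.

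Your extremal choice of $\sigma$ is a legitimate way to turn this into an honest induction on the peeled index. It guarantees $\sigma+sq\notin V$ for every $s\ge1$, so only peeling occurs and the level $k=1$ is always killed by $\zeta_{a+q}$. However, $c=0$ must then be handled by the same full case analysis at each peeled offset, in particular the case $r'+a=q$, which gives $S^q+\mathbf1$ against $S^q$. It cannot be handled by the shifted-copy argument. Argue pointwise, as the paper does, or use $F\odot F=F$, so that a peeled factor may legitimately be paired with the factor it was peeled from.
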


\begin{IEEEproof}
We prove the lemma by contradiction.  Assume  $P_{V,E}\odot S^q\gamma_c\ne\mathbf 0$ for some $c\ge 0$. Then there  exist $x\in\mathbb F_2^n$ and $j\in[0,n-1]$ such that $(P_{V,E}\odot S^q\gamma_c)(x)_j=1$. So,
$P_{V,E}(x)_j=S^q\gamma_c(x)_j=1$, and every factor of $P_{V,E}$ takes value $1$ at the $j$-th coordinate.
	
We first claim that, for every $a\in\supp\setminus\{q\}$ and  $k\geq 1$, if $S^a \gamma_k(x)_j=1$, then there exists a $k'\ge 1$ such that
\begin{equation}\label{eq:combined-step}
a+q\in R \quad\text{and}\quad S^{a+q}\gamma_{k'}(x)_j=1.
\end{equation}
Fix such $a$ and $k$. Since $\ell$ is quasi-conserved, Lemma~\ref{lem:quasi-cons-set} and the condition \eqref{eq:finite-pcp-bound} imply that there  exist $r\in R$ and $t\in T$ such that \(r+a=t\) or \(t+a=r\). The following discussion is divided into two cases.
	
\textbf{Case 1:} $r+a=t$. If $t\in V$, then from $P_{V,E}(x)_j=1$, we have that $(\sum_{i\in E}S^t\gamma_i(x))_j=1$. So, there exists an $i\in E$ such that $(S^t\gamma_i(x))_j=1$. Since $t-a=r\in\supp\setminus\{q\}$, from the equality~\eqref{eq:suvgammaab}
we have $S^a\gamma_k\odot S^t\gamma_i=\mathbf 0$. This contradicts $S^a\gamma_k(x)_j=S^t\gamma_i(x)_j=1$. If $t=q$ and $c\ge 1$, then $q-a=r\in\supp\setminus\{q\}$, and the equality~\eqref{eq:suvgammaab} implies that $S^a\gamma_k\odot S^q\gamma_c=\mathbf0$. This contradicts $S^a\gamma_k(x)_j=S^q\gamma_c(x)_j=1$. If $t=q$ and $c=0$, then $1= (S^q \gamma_0 (x))_j = S^q(x)_j$, while $S^{a+r}+\mathbf1=S^q+\mathbf1$ is a factor of $S^a\ell^{(q)}$, and so of $S^a\gamma_k$. Thus $(S^q+\mathbf1)(x)_j=1$, forcing $x_{j+q}=0$ and contradicting $S^q(x)_j=x_{j+q}=1$. This is impossible. Finally, if $t\notin V\cup\{q\}$, from $P_{V,E}(x)_j=1$ we have $\zeta_t(x)_j = S^t(x)_j=1$, while $S^{a+r}+\mathbf1=S^t+\mathbf1$ is a factor of $S^a\ell^{(q)}$, and hence a factor of $S^a\gamma_k$. Thus, $(S^t+\mathbf1)(x)_j=1$, forcing $x_{j+t}=0$ and contradicting $S^t(x)_j=x_{j+t}=1$. This is impossible.
	
\textbf{Case 2:} $t+a=r$. First, suppose that $t\ne q$. If $r\in V$, then $(\sum_{i\in E}S^r\gamma_i(x))_j=1$, and so there exists an $i\in E$ such that $S^r\gamma_i(x)_j=1$. Combining this with the known assumption $S^a \gamma_k(x)_j=1$, we obtain a contradiction to the equality~\eqref{eq:suvgammaab} since $r-a=t\in\supp\setminus\{q\}$. If $r\notin V$, then $P_{V,E}(x)_j=1$ implies that $ \zeta_r(x)_j = (S^r+\mathbf1)(x)_j=1$. On the other hand, $S^{a+t}=S^r$ is a factor of $S^a\ell^{(q)}$, and so
a factor of $S^a\gamma_k$. Thus $S^r(x)_j=1$. This is impossible. Therefore, $t=q$ and $a+q=r\in R$.
	
It remains to find $k'\ge1$ such that $S^{a+q}\gamma_{k'}(x)_j=1$. Recall that $k\ge1$ and $S^a\gamma_k(x)_j=1$ by the assumption. If $k=1$, then $S^a\gamma_1=S^{a+q}\odot S^a\ell^{(q)}$. Since $S^a\gamma_1(x)_j=1$, we have $S^{a+q}(x)_j=x_{j+a+q}=x_{j+r}=1$. This fact implies that $r\in V$. Otherwise, $P_{V, E}$ contains a factor $\zeta_r=S^r+\mathbf 1$, and  $P_{V, E}(x)_j=1$ would force $x_{j+r}=0$. This is a contradiction. Since $r\in V$, we have $\sum_{i\in E}S^r\gamma_i(x)_j=1$. So there exists $k'\in E$ such that $S^r\gamma_{k'}(x)_j=S^{a+q}\gamma_{k'}(x)_j=1$. If $ k \ge2$, then $S^a\gamma_k=S^a\ell^{(q)}\odot S^{a+q}\gamma_{k-1}$. From the assumption $S^a\gamma_k(x)_j=1$ we have
$S^{a+q}\gamma_{k-1}(x)_j=1$. Hence, $S^{a+q}\gamma_{k'}(x)_j=1$ holds for $k'=k-1$. Combining Case 1 and Case 2, we complete the proof of the claim.

Since $V\ne\varnothing$, choose $v\in V$. From $P_{V,E}(x)_j=1$
we have $(\sum_{i\in E}S^v\gamma_i(x))_j=1$, so there exists an
$e_1\in E$ such that $S^v\gamma_{e_1}(x)_j=1$. The claim implies
that there exists an $e_2\geq1$ such that
\begin{equation*}
	v+q\in R \quad\text{and}\quad
	S^{v+q}\gamma_{e_2}(x)_j=1.
\end{equation*}
The fact $v+q\in R$ implies that $v+q\in\supp\setminus\{q\}$. Applying the claim again, we have $v+2q\in R$ and there exists an $e_3\geq1$ such that
$S^{v+2q}\gamma_{e_3}(x)_j=1$. Repeating this process, we have that
$v+sq\in R\subset\supp$ for every $s\geq1$.
This is impossible since the support set of $\ell$ is finite.  Thus, the original assumption does not hold, and so $P_{V,E}\odot S^q\gamma_c=\mathbf 0 $.
\end{IEEEproof}

After the above two preparatory lemmas, we proceed to prove the main theorem of this section.

\begin{IEEEproof}[\textbf{Proof of Theorem~\ref{thm:equivalence}}]
First, we  derive (ii) from (i). It is easy to see that the equality \eqref{eq:polynomial-composition-property}  holds for $s=0$, or $\alpha_i=0$ for $i\in [1, s]$.
Next, let $E=\{ i \in[1,s]:\alpha_i=1\}$ be a nonempty set and 	$\Psi=\gamma_0+\sum_{i\in E}\gamma_i$. Set $\alpha_0=1$. Then  $\Psi=\sum_{i=0}^s\alpha_i\gamma_i$ and
the equality \eqref{eq:polynomial-composition-property} is  rewritten as
\begin{equation}\label{eq:forward-target}
\gamma_j\circ\Psi =\gamma_j+\sum_{i\in E}\gamma_{j+i}.
\end{equation}

In the following, we show the equality (\ref{eq:forward-target}) holds for $j\geq 0$. To this end, we first calculate $\ell^{(q)}\circ\Psi$. For every $\lambda\in\mathbb Z$, we have
$S^\lambda\circ\Psi =S^\lambda+\sum_{i\in E}S^\lambda\gamma_i$. Using this identity and the property of Hadamard product we have
\[\ell^{(q)}\circ\Psi =\bigodot_{r\in R} \left(S^r+\mathbf1+\sum_{i\in E}S^r\gamma_i\right)\bigodot_{t\in T\setminus\{q\}}\left(S^t+\sum_{i\in E}S^t\gamma_i\right)
=\bigodot_{\lambda\in\supp\setminus\{q\}}\left(\zeta_\lambda+\sum_{i\in E}S^\lambda\gamma_i\right)=\sum_{V\subseteq\supp\setminus\{q\}}P_{V,E},\]
where 		
\[ P_{V,E} = \bigodot_{\sigma\in V} \left(\sum_{i\in E}S^\sigma\gamma_i\right) \bigodot_{\lambda\in\supp\setminus(V\cup\{q\})} \zeta_\lambda. \]
	
We now prove \eqref{eq:forward-target} by induction on $j$. For $j=0$, we have $\gamma_0\circ\Psi =\gamma_0+\sum_{i\in E}\gamma_i$, so the equality \eqref{eq:forward-target} holds. Assume that \eqref{eq:forward-target}
holds for  some $j\ge0$. From~\eqref{eq:gammaj} we  have $\gamma_{j+1} = \ell^{(q)}\odot S^q \gamma_j $, and hence
\begin{equation}\label{eq:thmgammaj+1}
\begin{aligned}
\gamma_{j+1}\circ\Psi&=(\ell^{(q)}\odot S^q\gamma_j)\circ\Psi=(\ell^{(q)}\circ\Psi)\odot S^q(\gamma_j\circ\Psi)=\left(\sum_{V\subseteq\supp\setminus\{q\}}P_{V,E}\right)\odot\left(S^q\gamma_j+\sum_{i\in E}S^q\gamma_{j+i}\right)\\
		&=\sum_{V\subseteq\supp\setminus\{q\}}P_{V,E}\odot S^q\gamma_j+\sum_{i\in E}\sum_{V\subseteq\supp\setminus\{q\}}P_{V,E}\odot S^q\gamma_{j+i}.
\end{aligned}
\end{equation}
For every nonempty $V$, Lemma~\ref{lem:combined} gives $P_{V,E}\odot S^q\gamma_j=\mathbf0$ and $P_{V,E}\odot S^q\gamma_{j+i}=\mathbf0$ for every $i\in E$. So, all terms $P_{V,E}\odot S^q\gamma_j$ and $P_{V,E}\odot S^q\gamma_{j+i}$
in (\ref{eq:thmgammaj+1}) are $\mathbf0$ except for $V=\varnothing$. When $V=\varnothing$, $P_{\varnothing,E} =\bigodot_{\lambda\in\supp\setminus\{q\}}\zeta_\lambda =\ell^{(q)}$. So, the equality (\ref{eq:thmgammaj+1}) is reduced to
\[\gamma_{j+1}\circ\Psi = P_{\varnothing,E}\odot S^q\gamma_j+\sum_{i\in E}P_{\varnothing,E}\odot S^q\gamma_{j+i}=\ell^{(q)}\odot S^q\gamma_j+\sum_{i\in E}\ell^{(q)}\odot S^q\gamma_{j+i}=\gamma_{j+1}+\sum_{i\in E}\gamma_{j+i+1}.\]
This  shows that the equality~\eqref{eq:forward-target}  holds for $j+1$.  This completes the induction and proves (ii).

Next, we derive (i) from (ii). Since (ii) holds, taking $j=s=1$ and $\alpha_1=1$ in \eqref{eq:polynomial-composition-property} gives
\begin{equation}\label{eq:first-pcp-identity}
    	\gamma_1\circ(\gamma_0+\gamma_1)=\gamma_1+\gamma_2.
\end{equation}
Below, we prove (i) holds by contradiction. Suppose that (i) does not hold. By Lemma~\ref{lem:quasi-cons-set}, there exists a $\delta\in\supp\setminus\{q\}$ such that
\begin{equation}\label{eq:delta-separation}
r+\delta\not\equiv t\pmod n,\qquad t+\delta\not\equiv r\pmod n
 \end{equation}
for every $r\in R$ and $t\in T$. Consider the set
\[ \mathcal U =\{u\in\mathbb F_2^n:\ell(u)_0=\ell(u)_\delta=1\}.\]
Next, we show that $\mathcal U\ne\varnothing$. Since $\ell(u)_j=\prod_{r\in R}(1+u_{j+r})\prod_{t\in T}u_{j+t}$ for $u\in \mathbb F_2^n$ we have that $u\in\mathcal U$ if and only if
\begin{equation}\label{eq:U-constraints}
u_r=u_{\delta+r}=0,\qquad u_t=u_{\delta+t}=1
\end{equation}
for every $r\in R$ and $t\in T$. The condition \eqref{eq:finite-pcp-bound} and  $R\cap T=\varnothing$ imply that $r\not\equiv t\pmod n$, and hence $\delta+r\not\equiv\delta+t\pmod n$; moreover, \eqref{eq:delta-separation} gives $r\not\equiv\delta+t\pmod n$ and $\delta+r\not\equiv t\pmod n$ for all $r\in R$ and $t\in T$.
 Hence no coordinate is forced to both $0$ and $1$ in \eqref{eq:U-constraints} (coincidences inside each group force the same value and are harmless), so there exists a $u\in \mathbb F_2^n$ satisfying (\ref{eq:U-constraints}). Hence $\mathcal U\ne\varnothing$.
The following discussion is divided into two cases.

\textbf{Case 1:} There exists a $u\in\mathcal U$ such that  $\ell(u)_q=0$. Put $y=u+\ell(u)$. By \eqref{eq:U-constraints} we have $y_r=u_r+\ell(u)_r=\ell(u)_r$ for $r\in R$ and $y_t=u_t+\ell(u)_t=1+\ell(u)_t$ for $t\in T$.
Since $u_r=0$ and $u_t=1$, from \eqref{eq:landscape-coordinate} we have
\[(\gamma_1\circ(\gamma_0+\gamma_1))(u)_0=\ell(y)_0=\prod_{r\in R}(1+\ell(u)_r) \prod_{t\in T}(1+\ell(u)_t)=\prod_{a\in\supp}(1+\ell(u)_a)=0,\]
where the last equality follows from $\delta\in\supp$ and $\ell(u)_\delta=1$.

Since $q\in T$ and $u\in\mathcal U$, we have $\ell(u)_0=\ell^{(q)}(u)_0 u_q=1$ by the definitions of $\ell$ and $\ell^{(q)}$, and hence $\ell^{(q)}(u)_0=1$. It is known that $\ell(u)_q=0$. From \eqref{eq:gammaj} we have
\[(\gamma_1+\gamma_2)(u)_0 =\ell(u)_0+(\ell^{(q)}\odot S^q\gamma_1)(u)_0 =\ell(u)_0+\ell^{(q)}(u)_0\ell(u)_q=1. \]
Thus, the two sides of \eqref{eq:first-pcp-identity} take different values at coordinate $0$ for the input $u$. This is a contradiction.

\textbf{Case 2:} Every $u\in\mathcal U$ satisfies $\ell(u)_q=1$. Set $e=q-\delta$. The following discussion is divided into two subcases.

\textbf{Subcase 2.1:} $e\notin T$. We show that  $e\notin\supp$.  Choose $z\in\mathcal U$. Then $\ell(z)_0=\ell(z)_\delta=\ell(z)_q=1$. If $e\in R$, then
\[ 1=\ell(z)_\delta =\prod_{r\in R}(1+z_{\delta+r})\prod_{t\in T}z_{\delta+t},\]
so $z_q=z_{\delta+e}=0$, whereas $\ell(z)_0=1$ and $q\in T$ imply $z_q=1$. Hence $e\notin R$, and so $e\notin\supp$.

We next construct an input $x\in\mathbb F_2^n$ such that $(\gamma_1\circ(\gamma_0+\gamma_1))(x)_0 \ne(\gamma_1+\gamma_2)(x)_0$. Let $x\in \mathbb F_2^n$ satisfy $x_q=0$ and $x_i=z_i$ for $i\not\equiv q\pmod n$. Since $q\in T$,
$\ell(x)_0= \prod_{r\in R}(1+x_{r})\prod_{t\in T}x_{t} =0$. Moreover, $\ell(z)_0=1$ implies $1=\ell(z)_0=\ell^{(q)}(z)_0z_q$, and hence $\ell^{(q)}(z)_0=1$. Since $x$ and $z$ differ only at the $q$-th coordinate,  $\ell^{(q)}(x)_0=\ell^{(q)}(z)_0=1$.

 We next show $\ell(x)_q=\ell(z)_q=1$ and $\ell(x)_\delta=\ell(z)_\delta=1$. It is known that
 \[ \ell(x)_q =\prod_{r\in R}(1+x_{q+r})\prod_{t\in T}x_{q+t}, \qquad \ell(z)_q =\prod_{r\in R}(1+z_{q+r})\prod_{t\in T}z_{q+t}.\]
Since $x$ and $z$ differ only at the $q$-th coordinate,  $\ell(x)_q\ne\ell(z)_q$ can occur only if $x_{q+a}\ne z_{q+a}$ for some $a\in\supp$, which requires $q+a\equiv q\pmod n$. This gives $a=0$ by \eqref{eq:finite-pcp-bound}, which is impossible since
$0\notin\supp$. Hence $\ell(x)_q=\ell(z)_q=1$. Similarly, $\ell(x)_\delta\ne\ell(z)_\delta$ can occur only if $x_{\delta+a}\ne z_{\delta+a}$ for some $a\in\supp$, which requires $\delta+a\equiv q\pmod n$. This gives $a=e$ by
\eqref{eq:finite-pcp-bound}, which is impossible since $e\notin\supp$. Therefore,  $\ell(x)_\delta=\ell(z)_\delta=1$.

Put $y=x+\ell(x)$. We have $y_q=x_q+\ell(x)_q=1$. Since $1=\ell^{(q)}(x)_0= \prod_{r\in R}(1+x_{r})\prod_{t\in T\setminus \{q\}} x_{t}$, we have $x_r=0$ for every $r\in R$ and $x_t=1$ for every $t\in T\setminus\{q\}$. Hence,
\[(\gamma_1\circ(\gamma_0+\gamma_1))(x)_0 =\ell(y)_0=y_q\prod_{r\in R}(1+\ell(x)_r)\prod_{t\in T\setminus\{q\}}(1+\ell(x)_t)=\prod_{a\in\supp\setminus\{q\}}(1+\ell(x)_a)=0, \]
where the last equality follows from $\delta\in\supp\setminus\{q\}$ and $\ell(x)_\delta=1$. Moreover,
\[ (\gamma_1+\gamma_2)(x)_0=\ell(x)_0+(\ell^{(q)}\odot S^q\gamma_1)(x)_0=\ell(x)_0+\ell^{(q)}(x)_0\ell(x)_q=1. \]
This contradicts \eqref{eq:first-pcp-identity}.

\textbf{Subcase 2.2:} $e\in T$. Since $\delta\ne0$, we have $e\ne q$. We first show that $q$ and $e$ have opposite signs. Fix an arbitrary $s\in T$. Since
    \[ 1=\ell(u)_q =\prod_{r\in R}(1+u_{q+r})\prod_{t\in T}u_{q+t} \]
for every $u\in\mathcal U$, we have $u_{q+s}=1$ for every $u\in\mathcal U$. By \eqref{eq:U-constraints}, this can hold for every $u\in\mathcal U$ only if the coordinate $q+s$ is forced to $1$ by \eqref{eq:U-constraints} (otherwise $q+s$ is either free or forced to $0$, and some $u\in\mathcal U$ would satisfy $u_{q+s}=0$), i.e., for some $t\in T$,  $q+s\equiv t\pmod n$ or $q+s\equiv\delta+t\pmod n$.  Using $e=q-\delta$ and \eqref{eq:finite-pcp-bound}, we have
    \[ s+q=t\in T\quad\text{or}\quad s+e=t\in T.\]
If $q,e>0$, take $s=\max T$. Then $s+q>s$ and $s+e>s$, so $s+q\notin T$ and $s+e\notin T$. This is a contradiction. Similarly, if $q, e<0$, take $s=\min T$. Then $s+q\notin T$ and $s+e\notin T$. This is also a contradiction.
Thus $q$ and $e$ have  opposite signs.

Consider the set
\[\mathcal V =\{u\in\mathbb F_2^n:\ell(u)_0=\ell(u)_e=1\}.\]
By the definition of $\ell$, $u\in\mathcal V$ if and only if
\begin{equation}\label{eq:V-constraints}
    	u_r=u_{e+r}=0,\qquad
    	u_t=u_{e+t}=1
\end{equation}
for every $r\in R$ and $t\in T$.  We first show that $(e+R)\cap T=\varnothing$ and $(e+T)\cap R=\varnothing$ (as subsets of $\mathbb Z$; by \eqref{eq:finite-pcp-bound} this is equivalent to the corresponding congruences having no solution modulo $n$). Indeed, if $e+r=t$ for some $r\in R$ and $t\in T$, then $q+r=\delta+e+r=\delta+t$, so for every $u\in\mathcal U$ the factor $1+u_{q+r}=1+u_{\delta+t}$ of $\ell(u)_q$ vanishes since $u_{\delta+t}=1$ by \eqref{eq:U-constraints}; hence $\ell(u)_q=0$ for every $u\in\mathcal U$, contradicting the assumption of Case~2. Similarly, if $e+t=r$ for some $t\in T$ and $r\in R$, then $q+t=\delta+e+t=\delta+r$, so $u_{q+t}=u_{\delta+r}=0$ for every $u\in\mathcal U$ by \eqref{eq:U-constraints}, and again $\ell(u)_q=0$ for every $u\in\mathcal U$, a contradiction. By an argument similar to the discussion for $\mathcal U \ne\varnothing$, with these two identities in place of \eqref{eq:delta-separation}, we have $\mathcal V\ne\varnothing$.

If some $u\in\mathcal V$ satisfies $\ell(u)_q=0$, then, since $e\in T\setminus\{q\}$ and $\ell(u)_e=1$, by  an argument similar to that in Case~1 with $e$ in place of $\delta$ we get a contradiction to \eqref{eq:first-pcp-identity}.
Hence $\ell(u)_q=1$ for every $u\in\mathcal V$.

Fix an arbitrary $s\in T$. Since $\ell(u)_q=1$ for every $u\in\mathcal V$, we have $u_{q+s}=1$ for every $u\in\mathcal V$. By \eqref{eq:V-constraints}, this can hold for every $u\in\mathcal V$ only if  the coordinate $q+s$ is forced to $1$ by \eqref{eq:V-constraints}, i.e., $q+s\equiv t\pmod n$ or $q+s\equiv e+t\pmod n$
for some $t\in T$. Using $\delta=q-e$ and
\eqref{eq:finite-pcp-bound}, we obtain
\[s+q=t\in T \quad\text{or}\quad s+\delta=t\in T. \]
If $q,\delta>0$, take $s=\max T$. Then $s+q\notin T$ and $s+\delta\notin T$. This is a contradiction. Similarly, if $q,\delta<0$, take $s=\min T$ and obtain the same contradiction. Thus $q$ and $\delta$ have opposite signs.

Hence, $e$ and $\delta$ have the same sign, so $e+\delta$ has sign opposite to $q$. This  contradicts $q=e+\delta$. Hence the assumption that (i) does not hold is impossible, and therefore (ii) implies (i). This completes the proof.
\end{IEEEproof}

The bound $n>\max\{2m+k,2k+m\}$ in Theorem~\ref{thm:equivalence} cannot in general be replaced by $n>m+k$.
\begin{example}\label{ex:finite-pcp-small-n}
	Take $n=6$, $R=\{-1,2\}$, $T=\{3\}$, and $q=3$. Here $k=1$, $m=3$, and $n>m+k$. The landscape is $\ell(x)_i=x_{i+3}(1+x_{i-1})(1+x_{i+2})$. With coordinate subscripts taken modulo $6$, we have
	\begin{align*}
	(S^{-1}\ell\odot\ell)(x)_i
	&=x_{i+2}(1+x_{i-2})(1+x_{i+1})
	x_{i+3}(1+x_{i-1})(1+x_{i+2})=0,\\
	(S^2\ell\odot\ell)(x)_i
	&=x_{i+5}(1+x_{i+1})(1+x_{i+4})
	x_{i+3}(1+x_{i-1})(1+x_{i+2})=0.
\end{align*}
Since $\supp\setminus\{q\}=\{-1,2\}$, $\ell$ is quasi-conserved with special index $q=3$ on $\mathbb F_2^6$. By \eqref{eq:gammaj}, we have $\gamma_2(x)_i=x_i(1+x_{i-1})(1+x_{i+2})$. For $x=(0,0,1,1,0,1)$,
we obtain $\gamma_1(x)=\gamma_2(x)=(0,0,1,0,0,1)$, but
\[\gamma_1(x+\gamma_1(x))=(1,0,0,0,0,0)\ne\mathbf0=\gamma_1(x)+\gamma_2(x).\]
Hence the polynomial composition property fails.
\end{example}

\subsection{Long-term behavior of the sequence $\{\gamma_j\}_{j\ge1}$}

We now study the long-term behavior of the sequence $\{\gamma_j\}_{j\ge 1}$. For the rest of this section, assume that $\ell$ is quasi-conserved with special index $q$. By Lemma~\ref{lem:gamma-explicit}, for $j\ge1$ we have
\begin{equation}\label{eq:closed}
\gamma_j=S^{jq}\bigodot_{h=0}^{j-1}\left(\bigodot_{r\in R}(S^{hq+r}+\mathbf1)\bigodot_{t\in T\setminus\{q\}}S^{hq+t}\right).
\end{equation}
Let
\[d=\frac{n}{\gcd(n,q)}, \qquad \mathcal C=\{(r,t)\in R\times T: r\equiv t\pmod{\gcd(n,q)}\}. \]
So, $d$ is the least positive integer such that $dq\equiv 0\pmod n$, and $(r,t)\in\mathcal C$ if and only if $r-t\equiv hq\pmod n$ for some $h\in\mathbb Z$.

\begin{theorem}\label{thm:gamma-behavior}
As $j$ increases, the sequence $\gamma_j$ either becomes zero or remains nonzero and becomes periodic.
\begin{enumerate}
\item[(1)] If $\mathcal C\ne\emptyset$, then for each $(r,t)\in\mathcal C$, let
\[j(r,t)=\begin{cases}
		1+\min\{\mu\in[1,d-1]: r-t\equiv\pm\mu q\pmod n\},&t\ne q,\\[2pt]
		\min\{\nu\in[2,d-1]: r\equiv\nu q\pmod n\},&t=q.
	\end{cases}\]
Set $j_0=\min_{(r,t)\in\mathcal C}j(r,t)$. Then $\gamma_j=\mathbf0$ if and only if $j\ge j_0$.
		
\item[(2)] If $\mathcal C=\emptyset$, for every $j\ge1$ set
\[\mathcal Z_j=\bigcup_{h=1}^{j}\left\{(r-hq)\bmod n:r\in R\right\},\qquad \mathcal O_j=\bigcup_{h=1}^{j} \left\{(t-hq)\bmod n:t\in T\setminus\{q\}\right\}. \]
Then there exists a smallest positive integer $j_1$ such that $\mathcal Z_{j_1+1}=\mathcal Z_{j_1}$ and $\mathcal O_{j_1+1}=\mathcal O_{j_1}$, and for every $j\ge j_1$,
\[ \gamma_{j+1}=S^q\gamma_j, \qquad \gamma_{j+d}=\gamma_j, \]
where $d$ is the least positive period of the sequence $\{\gamma_j\}_{j\ge j_1}$.
\end{enumerate}
\end{theorem}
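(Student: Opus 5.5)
The plan is to work entirely with the closed form \eqref{eq:closed}. Coordinate $i$ of $\gamma_j$ is a product of literals. There is the positive literal $x_{i+jq}$. Then, for $h\in[0,j-1]$, there are zero-literals $x_{i+hq+r}+1$ with $r\in R$ and one-literals $x_{i+hq+t}$ with $t\in T\setminus\{q\}$. As in the proof of Lemma~\ref{lem:quasi-cons-set}, such a product of literals is identically zero if and only if some coordinate index occurs both as a zero-literal and as a one-literal. Otherwise we can choose an input that satisfies all the literals. Since $q\in T$, the extra literal $x_{i+jq}$ can be absorbed by writing it as the $h=j$ term with $t=q$.

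After shifting by $-i$, $\gamma_j\ne\mathbf0$ if and only if the two sets
\[
A_j=\{hq+r:0\le h\le j-1,\ r\in R\},\qquad
B_j=\{hq+t:0\le h\le j-1,\ t\in T\setminus\{q\}\}\cup\{jq\}
\]
are disjoint modulo $n$. This is the only structural input needed.

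\textbf{Part (1).} A collision $h_1q+r\equiv h_2q+t$ between $A_j$ and $B_j$ forces $r-t\equiv(h_2-h_1)q\pmod n$, so $(r,t)\in\mathcal C$.

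If $\mathcal C=\emptyset$, then $\gamma_j\neq\mathbf0$ for every $j$. If $(r,t)\in\mathcal C$, we find the least $j$ for which this pair produces a collision.

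For $t\ne q$, we need $h_1,h_2\in[0,j-1]$ with $r-t\equiv(h_2-h_1)q$. The difference $h_2-h_1$ ranges over $[-(j-1),j-1]$, so the collision first appears when $j-1$ reaches the least $\mu$ with $r-t\equiv\pm\mu q$. Note that $\mu=0$ is excluded because $r\ne t$ modulo $n$ (here $n>m+k$ is used). Also $\mu\le d-1$ suffices, because $\mu q$ and $(\mu\bmod d)q$ agree modulo $n$.

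For $t=q$, the $B_j$-element is either $(h_2+1)q$ with $h_2\le j-1$, or $jq$; together these are exactly $\nu q$ with $\nu\in[1,j]$. The collision condition is then $r\equiv(\nu-h_1)q$ with $h_1\le j-1$. I expect the main obstacle here, namely the boundary bookkeeping. Quasi-conservation together with Lemma~\ref{lem:two} should exclude small offsets such as $r\equiv q$ or $r\equiv 0\cdot q$ (since $0\notin\supp$ and $R\cap T=\emptyset$). One also has to check that the minimal solution is $\nu=\min\{\nu\ge2:r\equiv\nu q\}$ with $h_1=0$, and that negative offsets $\nu-h_1$ are already covered by other pairs or reduce modulo $d$. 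I would first verify this carefully against the stated formula. If necessary, I would note that negative representatives give larger $j$ after reduction modulo $d$.

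Monotonicity follows because $A_j\subseteq A_{j+1}$. For $B_j$, the element $jq$ lies in $B_{j+1}$ via $h=j$, $t=q$ only if we view it in the extended form; otherwise, $\gamma_{j+1}=\ell^{(q)}\odot S^q\gamma_j$ gives $\gamma_j=\mathbf0\Rightarrow\gamma_{j+1}=\mathbf0$ directly. Taking the minimum over pairs then gives $j_0$.

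\textbf{Part (2).} Assume $\mathcal C=\emptyset$. Apply $S^{-jq}$, i.e.\ reindex the literals by $h\mapsto j-h$. Then $\gamma_j(x)_i$ requires $x_{i+jq}=1$, $x_{i+jq+z}=0$ for $z\in\mathcal Z_j$, and $x_{i+jq+o}=1$ for $o\in\mathcal O_j$. All sets live in $\mathbb Z_n$ and are nondecreasing in $j$, so they stabilize. Moreover, once $\mathcal Z_{j+1}=\mathcal Z_j$ and $\mathcal O_{j+1}=\mathcal O_j$, the sets stay constant forever. This holds because $\mathcal Z_{j+1}=\mathcal Z_1\cup(\mathcal Z_j-q)$, and similarly for $\mathcal O$, so the update map is deterministic. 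This defines $j_1$.

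For $j\ge j_1$ the literal pattern is fixed and only the base point $i+jq$ moves. Hence $\gamma_{j+1}=S^q\gamma_j$, and iterating gives $\gamma_{j+d}=S^{dq}\gamma_j=\gamma_j$. Each $\gamma_j$ is nonzero since $\mathcal C=\emptyset$.

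For minimality of the period, suppose $S^{pq}\gamma_j=\gamma_j$ with $0<p<d$. Then the nonzero shift-invariant function $\gamma_j$ would equal its own shift by $c=pq\not\equiv0$. Comparing the literals at coordinate $0$, the positive literal at $x_{jq}$ would have to belong to the literal set at base $jq+c$. That would force $c\in\mathcal O_j\cup\{0\}$, and by the symmetric argument the whole pattern would be $c$-periodic. I would show this contradicts the pattern being a single product: a product of literals equals its own $c$-shift only if its literal set is $c$-invariant modulo $n$, which would make it contain a full coset of $\langle c\rangle$. I would then rule that out using the finiteness and range arguments from Lemma~\ref{lem:combined}.
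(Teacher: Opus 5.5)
Your overall plan matches the paper's: literal collisions characterize $\gamma_j=\mathbf0$, you normalize by $S^{-jq}$ (the paper's $\eta_j$), and you track stabilization of $\mathcal Z_j,\mathcal O_j$. Your identity $\mathcal Z_{j+1}=\mathcal Z_1\cup(\mathcal Z_j-q)$ is a tidy substitute for the paper's induction. However, the minimality of the period $d$ has a genuine gap.

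The contradiction you aim for is that a single product of literals cannot equal its own shift by $c=pq\not\equiv0\pmod n$. That is false in general, because a consistent literal set can be a union of cosets of $\langle c\rangle$. No finiteness or range argument in the style of Lemma~\ref{lem:combined} changes this, since here all offsets live in $\mathbb Z_n$. For example, take $n=10$, $R=\{1\}$, $T=\{2,4\}$, $q=2$. Then $\mathcal C=\emptyset$ and $d=5$, yet $\gamma_5(x)_i=\prod_{e\in\{0,2,4,6,8\}}x_{i+e}\prod_{o\in\{1,3,5,7,9\}}(x_{i+o}+1)$. So $\gamma_j=\gamma_5$ for all $j\ge5$ and the eventual period is $1$. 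What fails in this example is quasi-conservation (at $\delta=4$), and that is exactly the hypothesis your period argument never invokes.

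The missing fact is that no element of $\mathcal O_j$ is $\equiv0\pmod{\gcd(n,q)}$; equivalently, no $t\in T\setminus\{q\}$ is. Otherwise Lemma~\ref{lem:quasi-cons-set} with $\delta=t$ gives $r\in R$, $t'\in T$ with $r+t\equiv t'$ or $t'+t\equiv r\pmod n$. Then $r\equiv t'\pmod{\gcd(n,q)}$, so $(r,t')\in\mathcal C$, a contradiction. With this fact, your own first step finishes the proof. $S^c\gamma_j=\gamma_j$ forces $-c\in\mathcal O_j$, while $-c=-pq$ is $\equiv0\pmod{\gcd(n,q)}$. The paper makes the same point by choosing $x$ with $x_i=0$ and $\eta_{j_1}(x)_{i+\tau q}=1$, which is consistent precisely because $-\tau q\notin\mathcal O_{j_1}$.

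There is also a smaller error in Part~(1), case $t=q$. You put the elements $(h_2+1)q$ into $B_j$, but the one-literals indexed by $h\in[0,j-1]$ run over $T\setminus\{q\}$. So the only element of $B_j$ associated with $q$ is $jq$. The collision condition is then $r\equiv(j-h_1)q\pmod n$ with $j-h_1\in[1,j]$, which gives $j(r,q)=\min\{\nu\ge1:r\equiv\nu q\pmod n\}$ directly. Here $\nu=1$ is excluded by $r\ne q$ and $n>m+k$, and $\nu\equiv0\pmod d$ is excluded by $0\notin\supp$. Neither exclusion needs quasi-conservation or Lemma~\ref{lem:two}. The ``negative offsets'' you worry about therefore never arise. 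Your proposed patch for them would also have been wrong within your own model: an offset $-\mu$ would already appear at $j=\mu+2$, which can be smaller than $d-\mu$.
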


\begin{IEEEproof}
For $j\ge1$, from \eqref{eq:closed}  we know that $\gamma_j=\mathbf 0$ exactly when the expression contains both $S^a$ and $S^a+\mathbf 1$ for some $a\in[0,n-1]$. This occurs exactly when there exist $r\in R$ and $h,h_1,h_2\in[0,j-1]$ such that the first congruence below holds for some $t\in T\setminus\{q\}$ or the second congruence holds:
\begin{equation}\label{eq:gamma-zero-congruences}
		h_1q+r\equiv h_2q+t\pmod n, \qquad hq+r\equiv jq\pmod n.
	\end{equation}
	If the first congruence holds then $(r,t)\in\mathcal C$, and the second holds only if $(r,q)\in\mathcal C$. The next discussion is divided into two cases.

(1) The Case $\mathcal C\neq\emptyset$. We determine the smallest $j\ge 1$ for which each congruence occurs. If there is a pair $(r,t)\in\mathcal C$ with $t\ne q$ satisfying the first congruence of (\ref{eq:gamma-zero-congruences}), then
	\begin{equation}\label{eq:rtcong}
		r-t\equiv (h_2-h_1)q \pmod n.
	\end{equation}
	From the expression in (\ref{eq:closed}), the smallest $j\ge 1$ satisfying (\ref{eq:rtcong}) is exactly the smallest $|h_2-h_1|$ satisfying (\ref{eq:rtcong}) plus 1. It is known that $dq\equiv0\pmod n$. So, the smallest $j$ satisfying the first congruence is $1+\min\{\mu\in[1,d-1]: r-t\equiv\pm\mu q\pmod n\}$.
	
	If there is a pair $(r,q)\in\mathcal C$ satisfying the second congruence of (\ref{eq:gamma-zero-congruences}), then
	\begin{equation}\label{eq:rqcong}
		r\equiv (j-h)q \pmod n.
	\end{equation}
It is easy to see that $\gamma_0$ and $\gamma_1$ are not zero functions. It is known that $dq\equiv0\pmod n$ and $j-h\in[1,j]$. So, the smallest $j$ satisfying (\ref{eq:rqcong}) is exactly $\min\{\nu\in[2,d-1]:r\equiv\nu q\pmod n\}$.
	
Combining the two cases, $\gamma_{j_0}=\mathbf0$, and no congruence in \eqref{eq:gamma-zero-congruences} can hold for $j<j_0$. Since $\gamma_{j+1}=\ell^{(q)}\odot S^q\gamma_j$ contains $S^q\gamma_j$ as a factor, $\gamma_j=\mathbf0$ implies $\gamma_{j+1}=\mathbf0$. Hence $\gamma_j=\mathbf0$ if and only if $j\ge j_0$.
	
(2) The Case $\mathcal C=\emptyset$. In this case, none of the congruences of \eqref{eq:gamma-zero-congruences} holds. So $\gamma_j\ne\mathbf0$ for every $j\ge1$. We set
	\begin{equation}\label{eq:eta-coordinate}
		\eta_j=S^{-jq}\gamma_j
		=\operatorname{id}\bigodot_{h=1}^{j}S^{-hq}\ell^{(q)}
		\,\,\, {\rm and} \,\,\,
		\eta_j(x)_i
		=x_i\prod_{h=1}^{j}
		\left(\prod_{r\in R}(1+x_{i-hq+r})
		\prod_{t\in T\setminus\{q\}}x_{i-hq+t}\right).
	\end{equation}
	From the definitions of $\mathcal Z_j$ and $\mathcal O_j$, we know that $\eta_j(x)_i=1$ if and only if $x_i=1$, $x_{i+\delta_i}=1$ for every $\delta_i\in\mathcal O_j$, and $x_{i+\delta_i}=0$ for every $\delta_i\in\mathcal Z_j$.
	
	The sets $\{0\}$, $\mathcal Z_j$, and $\mathcal O_j$ are pairwise disjoint. Indeed, if $0\in\mathcal Z_j$, then $r\equiv hq\pmod n$ for some $r\in R$, which gives $(r,q)\in\mathcal C$, a contradiction. If $0\in\mathcal O_j$, then $t\equiv0\pmod{\gcd(n,q)}$ for some $t\in T\setminus\{q\}$. Applying Lemma~\ref{lem:quasi-cons-set} with $\delta=t$, there exist $r\in R$ and $t'\in T$ such that $r+t\equiv t'\pmod n$ or $t'+t\equiv r\pmod n$. Since $t\equiv0\pmod{\gcd(n,q)}$, either congruence gives $r\equiv t'\pmod{\gcd(n,q)}$, so $(r,t')\in\mathcal C$, a contradiction. If $\mathcal Z_j\cap\mathcal O_j\ne\emptyset$, then $r-h_1q\equiv t-h_2q\pmod n$ for some $r\in R$ and $t\in T\setminus\{q\}$, which gives $(r,t)\in\mathcal C$, a contradiction.
	
	We now show that $\eta_{j+1}=\eta_j$ if and only if $\mathcal Z_{j+1}=\mathcal Z_j$ and $\mathcal O_{j+1}=\mathcal O_j$. If $\mathcal Z_{j+1}=\mathcal Z_j$ and $\mathcal O_{j+1}=\mathcal O_j$, from \eqref{eq:eta-coordinate} we have $\eta_{j+1}=\eta_j$. Conversely, suppose there exists a $\delta_z\in\mathcal Z_{j+1}\setminus\mathcal Z_j$. Since $\mathcal Z_{j+1}\cap\mathcal O_{j+1}=\emptyset$ we have that $\delta_z\notin\mathcal O_{j+1}$, and so $\delta_z\notin\mathcal O_j$. From the expression of (\ref{eq:eta-coordinate}) we know that the factors $x_{i+\delta_z}$ and $1+x_{i+\delta_z}$ do not appear in $\eta_j(x)_i$. Hence we may choose $x$ such that $\eta_j(x)_i=1$ and $x_{i+\delta_z}=1$. Since $\delta_z\in\mathcal Z_{j+1}$, $\eta_{j+1}(x)_i$ contains the factor $1+x_{i+\delta_z}=0$. Hence $\eta_{j+1}(x)_i=0$, and thus $\eta_{j+1}\ne\eta_j$. Similarly, if there exists a $\delta_z\in\mathcal O_{j+1}\setminus\mathcal O_j$, we can also derive that $\eta_{j+1}\ne\eta_j$. This shows that $\mathcal Z_{j+1}=\mathcal Z_j$ and $\mathcal O_{j+1}=\mathcal O_j$ from $\eta_{j+1}=\eta_j$.
	
	We next show that if $\mathcal Z_{j+1}=\mathcal Z_j$ for some $j$, then $\mathcal Z_{j+s}=\mathcal Z_j$ for every $s\ge1$. It is known that
	\[
	\mathcal Z_{j+1}
	=\mathcal Z_j\cup
	\left\{(r-(j+1)q)\bmod n:r\in R\right\}.
	\]
	From $\mathcal Z_{j+1}=\mathcal Z_j$, for every $r\in R$, there exist $r^\prime\in R$ and $h\in[1,j]$ such that
	\[
	r-(j+1)q\equiv r^\prime-hq\pmod n.
	\]
	From this congruence we have
	\[
	r-(j+2)q\equiv r^\prime-(h+1)q\pmod n,
	\qquad h+1\in[1,j+1].
	\]
	This congruence shows that $\mathcal Z_{j+2}\subseteq\mathcal Z_{j+1}=\mathcal Z_j$, and then $\mathcal Z_{j+2}=\mathcal Z_j$. By induction on $s$ we have $\mathcal Z_{j+s}=\mathcal Z_j$ for every $s\ge1$. Similarly, $\mathcal O_{j+s}=\mathcal O_j$ for every $s\ge1$ if $\mathcal O_{j+1}=\mathcal O_j$.
	
For every $j\geq1$, $\mathcal Z_j$ and $\mathcal O_j$ are subsets of the set of all integers modulo $n$, and $\mathcal Z_j\subseteq\mathcal Z_{j+1}$ and $\mathcal O_j\subseteq\mathcal O_{j+1}$. So, there must exist a positive integer $j_1$ such that $\mathcal Z_{j_1+1}=\mathcal Z_{j_1}$ and $\mathcal O_{j_1+1}=\mathcal O_{j_1}$. Moreover, assume $j_1$ is the minimal positive integer such that the equalities hold. The discussion above shows $\mathcal Z_j=\mathcal Z_{j_1}$ and $\mathcal O_j=\mathcal O_{j_1}$ for every $j\ge j_1$. From (\ref{eq:eta-coordinate}) we have $\eta_{j+1}=\eta_j$ for every $j\ge j_1$. Since $\eta_j=S^{-jq}\gamma_j$, we get $\gamma_{j+1}=S^q\gamma_j$ for every $j\ge j_1$.
	
Finally, we show that $d$ is the minimal positive integer such that $\gamma_{j+d}=\gamma_j$ for $j\ge j_1$. Since $\gamma_{j+1}=S^q\gamma_j$ for every $j\ge j_1$, we have $\gamma_{j+d}=S^{dq}\gamma_j$ for every $j\ge j_1$. Since $dq\equiv0\pmod n$ and $S^n=\operatorname{id}$, we have $\gamma_{j+d}=\gamma_j$ for every $j\ge j_1$.
	
It remains to show that $d$ is the least positive period of $\{\gamma_j\}_{j\ge j_1}$. If not, there exists $\tau\in[1,d-1]$ such that $\gamma_{j+\tau}=\gamma_j$ for every $j\ge j_1$. Applying $\gamma_{j+1}=S^q\gamma_j$ repeatedly gives $\gamma_{j_1+\tau}=S^{\tau q}\gamma_{j_1}=\gamma_{j_1}$. Together with $\eta_{j_1}=S^{-j_1q}\gamma_{j_1}$ we have
\[S^{\tau q}\eta_{j_1}=S^{-j_1q}S^{\tau q}\gamma_{j_1}=S^{-j_1q}\gamma_{j_1}=\eta_{j_1}.\]
	
We construct an input $x\in \mathbb F_2^n$ satisfying $\eta_{j_1}(x)_i=0$ but $\eta_{j_1}(x)_{i+\tau q}=1$. Every element of $\mathcal O_{j_1}$ is nonzero modulo $\gcd(n,q)$. Indeed, if $\delta_o\in\mathcal O_{j_1}$ satisfies $\delta_o\equiv0\pmod{\gcd(n,q)}$, then $\delta_o\equiv t-hq\pmod n$ for some $t\in T\setminus\{q\}$ and $h\in[1,j_1]$. So $t\equiv0\pmod{\gcd(n,q)}$. Applying Lemma~\ref{lem:quasi-cons-set} with $\delta=t$, there exist $r\in R$ and $t'\in T$ such that $r+t\equiv t'\pmod n$ or $t'+t\equiv r\pmod n$. Since $t\equiv0\pmod{\gcd(n,q)}$, either congruence gives $r\equiv t'\pmod{\gcd(n,q)}$. So $(r,t')\in\mathcal C$ yields a contradiction. Since $-\tau q\equiv0\pmod{\gcd(n,q)}$, we have $-\tau q\notin\mathcal O_{j_1}$.
	
By \eqref{eq:eta-coordinate}, $\eta_{j_1}(x)_{i+\tau q}=1$ exactly when
\[x_{i+\tau q}=1,\qquad x_{i+\tau q+\delta_o}=1,\qquad x_{i+\tau q+\delta_z}=0,\]
for every $\delta_o\in\mathcal O_{j_1}$ and $\delta_z\in\mathcal Z_{j_1}$. We have $\tau q\not\equiv0\pmod n$ since $\tau\in[1,d-1]$ and $d$ is the order of $q$ modulo $n$. Thus $x_{i+\tau q}$ and $x_i$ are distinct coordinate variables. Since $-\tau q\notin\mathcal O_{j_1}$, the coordinate variables $x_i$ and $x_{i+\tau q+\delta_o}$ for $\delta_o\in\mathcal O_{j_1}$ are pairwise distinct. Together with the pairwise disjointness proved above, this allows us to choose $x$ such that $\eta_{j_1}(x)_{i+\tau q}=1$ and $x_i=0$. Then
\[\eta_{j_1}(x)_i=0,\qquad\bigl(S^{\tau q}\eta_{j_1}\bigr)(x)_i=\eta_{j_1}(x)_{i+\tau q}=1.\]
This contradicts $S^{\tau q}\eta_{j_1}=\eta_{j_1}$. Thus, there is no $\tau\in[1,d-1]$ such that $\gamma_{j+\tau}= \gamma_j$ for $j\geq j_1$, and $d$ is the least positive period of $\{\gamma_j\}_{j\ge j_1}$.
	
Moreover, we can show that $j_1$ is also the smallest positive index satisfying $\gamma_{j+d}=\gamma_j$ for $j\geq j_1$. On the contrary, suppose that $\gamma_{j+d}=\gamma_j$ for some positive integer $j<j_1$. Since $dq\equiv0\pmod n$, we have
\[\eta_{j+d}=S^{-(j+d)q}\gamma_{j+d}=S^{-jq}\gamma_j=\eta_j.\]
	
Next, we show that $\mathcal Z_{j+d}=\mathcal Z_j$. Otherwise, by $\mathcal Z_j\cap\mathcal O_j=\emptyset$ and (\ref{eq:eta-coordinate}), we can choose a $\delta_z\in\mathcal Z_{j+d}\setminus\mathcal Z_j$ and an $x$ such that $\eta_j(x)_i=1$ and $x_{i+\delta_z}=1$. Then $\eta_{j+d}(x)_i=0$. This contradicts the fact that $\eta_{j+d}=\eta_j$, and so $\mathcal Z_{j+d}=\mathcal Z_j$. Similarly, $\mathcal O_{j+d}=\mathcal O_j$. On the other hand, we have $\mathcal Z_j\subseteq\mathcal Z_{j+1}\subseteq\mathcal Z_{j+d}$ and $\mathcal O_j\subseteq\mathcal O_{j+1}\subseteq\mathcal O_{j+d}$, and so $\mathcal Z_{j+1}=\mathcal Z_j$ and $\mathcal O_{j+1}=\mathcal O_j$. This contradicts the minimality of $j_1$. So, the integer $j_1$ is the smallest positive index satisfying $\gamma_{j+d}=\gamma_j$ for $j\geq j_1$.
\end{IEEEproof}

The indices $j_0$ and $j_1$ in Theorem~\ref{thm:gamma-behavior} admit the following simple upper bounds. The definition of $j_0$ in Theorem~\ref{thm:gamma-behavior} gives $j_0\le d$ when $\mathcal C\ne\emptyset$.
If $\mathcal C=\emptyset$, then $dq\equiv0\pmod n$ gives $a-(d+1)q\equiv a-q\pmod n$ for every $a\in\supp\setminus\{q\}$. Hence $\mathcal Z_{d+1}=\mathcal Z_d$ and $\mathcal O_{d+1}=\mathcal O_d$, so $j_1\le d$.
Next, we discuss the linear independence of the sequence of functions $\{ \gamma_j \}_{j\geq 0}$.
\begin{proposition}\label{lem:linear-independence}
With the notation of Theorem~\ref{thm:gamma-behavior}, we have
\begin{enumerate}
\item[(1)] If $\mathcal C\ne\emptyset$, then $\gamma_0,\gamma_1,\ldots,\gamma_{j_0-1}$ are linearly independent over $\mathbb F_2$.
\item[(2)] If $\mathcal C=\emptyset$, then $\gamma_0,\gamma_1,\ldots,\gamma_{j_1+d-1}$ are linearly independent over $\mathbb F_2$.
\end{enumerate}
\end{proposition}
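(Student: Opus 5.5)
The plan is a triangularity argument. Suppose there is a nontrivial relation $\sum_{j\in J}\gamma_j=\mathbf0$, where $J$ is contained in the index range of the statement. Let $j^*=\min J$. I will build an input $x\in\mathbb F_2^n$ and a coordinate $i$ (say $i=0$) with $\gamma_{j^*}(x)_0=1$ and $\gamma_j(x)_0=0$ for every $j\in J\setminus\{j^*\}$. This contradicts the relation.

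The starting point is the coordinate description coming from \eqref{eq:closed}, or equivalently from \eqref{eq:eta-coordinate}. The value $\gamma_j(x)_0=1$ holds exactly when:
\begin{itemize}
\item $x_{jq}=1$;
\item $x_{hq+t}=1$ for all $h\in[0,j-1]$ and $t\in T\setminus\{q\}$;
\item $x_{hq+r}=0$ for all $h\in[0,j-1]$ and $r\in R$.
\end{itemize}
All subscripts are taken modulo $n$. For $j<j_0$ in case (1), and for all $j$ in case (2), the proof of Theorem~\ref{thm:gamma-behavior} shows that these constraints are consistent. So the set $\mathcal F_j$ of coordinates forced by $\gamma_j$ carries a well-defined assignment, and every coordinate outside $\mathcal F_j$ is free. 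I will take $x$ equal to this assignment on $\mathcal F_{j^*}$, and then set some free coordinates to kill the higher terms. For $j^*=0$ the condition is only $x_0=1$. This case is handled the same way, by making each higher $\gamma_j$ fail.

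To kill $\gamma_j$ for $j>j^*$, the first option is to find a coordinate $c\in\mathcal F_j\setminus\mathcal F_{j^*}$ and give it the value opposite to the one $\gamma_j$ requires. The natural candidate is $c=jq$, which $\gamma_j$ forces to $1$; we set $x_{jq}=0$. This is legitimate unless $jq\in\mathcal F_{j^*}$. There are three ways that can happen.
\begin{enumerate}
\item $jq\equiv hq+r$ with $h<j^*$. This gives $r\equiv(j-h)q$ with $j-h\ge 2$. In case (1) this means $\gamma_j=\mathbf0$ by the definition of $j_0$, contradicting $j<j_0$. In case (2) it means $(r,q)\in\mathcal C$, which is impossible.
\item $jq\equiv hq+t$ with $t\in T\setminus\{q\}$. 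Then $t\equiv0\pmod{\gcd(n,q)}$. In case (2) this is excluded exactly as in the disjointness argument of Theorem~\ref{thm:gamma-behavior}, using Lemma~\ref{lem:quasi-cons-set} with $\delta=t$. In case (1) I expect to use the same lemma to produce $(r,t')$ with $r-t'\equiv\pm(j-h\pm\cdots)q$. The aim is to show that $\gamma_j$ would already vanish, again contradicting $j<j_0$.
\item $jq\equiv j^*q$, i.e.\ $j\equiv j^*\pmod d$. In case (1) this cannot occur, since $j_0\le d$.
\end{enumerate}

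So the genuinely different situation is case (2) with $j=j^*+d$, which is possible only when $j^*<j_1$. Here the last paragraph of the proof of Theorem~\ref{thm:gamma-behavior} applies. It gives $\eta_{j^*+d}\neq\eta_{j^*}$ and, more precisely, a $\delta\in(\mathcal Z_{j^*+d}\setminus\mathcal Z_{j^*})\cup(\mathcal O_{j^*+d}\setminus\mathcal O_{j^*})$. I will choose $x$ at the corresponding coordinate to violate $\gamma_{j^*+d}$, which is allowed since that coordinate is free for $\gamma_{j^*}$.

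The main obstacle is simultaneous consistency. The coordinates $jq$ for the various $j\in J$, together with this extra coordinate, must all be set to the "killing" values, and no two of these choices may conflict. Different $j\not\equiv j^*\pmod d$ give distinct free coordinates $jq$, all set to $0$, so they do not conflict among themselves. The only possible clash is if the extra coordinate coincides with some $jq$ and demands the value $1$. I would try first to pick $\delta$ in $\mathcal Z$ (demanding value $1$), which is compatible with nothing being forced to $1$ there. If the difference lies only in $\mathcal O$, I would instead check directly that the $\mathcal O$-coordinate, being nonzero modulo $\gcd(n,q)$, is never of the form $jq$, since every $jq$ is $\equiv0\pmod{\gcd(n,q)}$.
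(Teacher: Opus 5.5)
\textbf{Gap in case (1), obstruction (b).} You hope that if $jq\equiv hq+t$ with $t\in T\setminus\{q\}$ and $h<j^*$, then quasi-conservation forces $\gamma_j=\mathbf0$. This is false.

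Take $R=\{-3,4\}$, $T=\{1,2\}$, $q=1$, $n\ge 12$. This landscape is quasi-conserved and is a row of Table~\ref{tab:landscapes-Tqt}.
\begin{itemize}
\item Since $\gcd(n,q)=1$, we have $\mathcal C=R\times T$, and $j_0=3$, attained by the pair $(4,2)$.
\item $\gamma_2(x)_0=x_2x_3(1+x_{-3})(1+x_4)(1+x_{-2})(1+x_5)\neq 0$.
\item However, $2q=2\in T\setminus\{q\}$. So for $J=\{1,2\}$ the coordinate $x_{2q}$ is forced to $1$ by $\gamma_1$ and cannot be used to kill $\gamma_2$.
\item Lemma~\ref{lem:quasi-cons-set} at $\delta=t=2$ produces exactly the pair $(4,2)$ with $4-2=2q$. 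This gives only $j_0\le 3=j+1$, which is no contradiction with $j<j_0$.
\end{itemize}

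Here a different coordinate does work, namely $x_3=x_{q+t}$. But once you leave the positions $jq$, your argument that all killing values can be imposed simultaneously is gone. That argument relied on all killing values being $0$ at positions $\equiv 0\pmod{\gcd(n,q)}$, separated from the forced positions. In case (1) there is no such separation; with $\gcd(n,q)=1$, as here, it is vacuous. So part (1) is not proved as written.

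\textbf{How the paper avoids this.} Its proof of (1) uses no coordinates at all. The map $L(F)=\ell^{(q)}\odot S^qF$ is $\mathbb F_2$-linear and satisfies $L(\gamma_i)=\gamma_{i+1}$ for all $i\ge 0$. The case $i=0$ holds because $q\in T$ gives $\ell^{(q)}\odot S^q=\ell$. Suppose $\sum_{i=k}^{j_0-1}c_i\gamma_i=\mathbf0$ with $c_k=1$. Apply $L^{j_0-1-k}$: every term with $i>k$ becomes $\gamma_{i+j_0-1-k}=\mathbf0$, leaving $\gamma_{j_0-1}=\mathbf0$. This contradicts Theorem~\ref{thm:gamma-behavior}(1).

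\textbf{Case (2).} Your separating-input argument is a valid alternative to the paper's proof. The paper shows that the top-degree monomials $\prod_{a\in P_j}x_a$ of the coordinate-$0$ functions are pairwise distinct. Your key fact is the same one the paper uses to separate $P_{j_1},\ldots,P_{j_1+d-1}$: $jq$ is the only forced position $\equiv 0\pmod{\gcd(n,q)}$.

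You do have the clash check backwards, though.
\begin{itemize}
\item An extra coordinate taken from $\mathcal O$ is killed with the value $0$, so it can never conflict with the assignments $x_{jq}=0$.
\item The case that needs checking is an extra coordinate from $\mathcal Z$, which is killed with the value $1$.
\item It passes: when $\mathcal C=\emptyset$, $r\not\equiv 0\pmod{\gcd(n,q)}$ for every $r\in R$, so no position $j^*q+\delta$ with $\delta\in\mathcal Z_{j^*+d}$ is of the form $jq$.
\end{itemize}
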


\begin{IEEEproof}
(1) $\mathcal C\ne\emptyset$. In this case, $\gamma_j =\mathbf 0$ for $j\geq j_0$. Suppose that $\sum_{i=0}^{j_0-1}c_i\gamma_i=\mathbf0$, where $c_0, c_1, \ldots, c_{j_0-1}\in\mathbb F_2$ are not all zero, and let $k$ be the
 smallest index such that $c_k=1$. By the recurrence~\eqref{eq:gammaj},
\[\sum_{i=0}^{j_0-1}c_i\gamma_{i+1}=\sum_{i=0}^{j_0-1}c_i\bigl(\ell^{(q)}\odot S^q\gamma_i\bigr)=\ell^{(q)}\odot S^q\left(\sum_{i=0}^{j_0-1}c_i\gamma_i\right)=\mathbf0. \]
Repeating this process we have $\sum_{i=0}^{j_0-1} c_i\gamma_{i+s}=\mathbf 0$ for every $s\ge0$. Taking $s=j_0-1-k$ and using $c_i=0$ for $i<k$, we obtain
\[\mathbf 0 =\sum_{i=k}^{j_0-1}c_i\gamma_{j_0-1+i-k} =\gamma_{j_0-1}.\]
The last equality holds because $c_k=1$ and $\gamma_j=\mathbf0$ for every $j\ge j_0$. This contradicts $\gamma_{j_0-1}\ne\mathbf0$. Hence $\gamma_0,\ldots,\gamma_{j_0-1}$ are linearly independent over $\mathbb F_2$.
	
(2) $\mathcal C=\emptyset$. If $\gamma_0,\gamma_1,\ldots,\gamma_{j_1+d-1}$ are linearly dependent over $\mathbb F_2$, then their coordinate-$0$ functions $x\mapsto\gamma_0(x)_0,x\mapsto\gamma_1(x)_0,\ldots,x\mapsto\gamma_{j_1+d-1}(x)_0$ are
linearly dependent over $\mathbb F_2$. Next we show that these coordinate functions are linearly independent over $\mathbb F_2$, and so $\gamma_0,\gamma_1,\ldots,\gamma_{j_1+d-1}$ are linearly independent over $\mathbb F_2$.
From $\gamma_0= {\rm id}$ and \eqref{eq:closed} we have
\[ \gamma_0(x)_0 = x_0, \quad \gamma_j(x)_0=x_{jq}\prod_{h=0}^{j-1}\left(\prod_{r\in R}(1+x_{hq+r})\prod_{t\in T\setminus\{q\}}x_{hq+t} \right), \,\, j\geq 1 .\]
For $j\geq 1$, the highest-degree monomial of $\gamma_j(x)_0$ is  $\prod_{a\in P_j}x_a$ (every monomial of $\gamma_j(x)_0$ is supported on a subset of $P_j$, and the coefficient of $\prod_{a\in P_j}x_a$ is $1$ modulo $2$,
and in the expansion of $(1+x_a)^w$ with $w\ge1$, exactly $2^w-1$ choices produce $x_a$ after using $x_a^2=x_a$, and $2^w-1$ is odd), where
\begin{equation}\label{eq:Pj-definition}
	P_j=\{jq\bmod n\}\cup\bigcup_{h=0}^{j-1}\{(a+hq)\bmod n:a\in\supp\setminus\{q\}\},
\end{equation}
and we set $P_0=\{0\}$. To prove linear independence of the coordinate functions $\gamma_0()_0,\gamma_1()_0,\ldots,\gamma_{j_1+d-1}()_0$, it suffices to show that $P_0, \ldots, P_{j_1+d-1}$ are pairwise distinct.

We first show that $P_0,\ldots,P_{j_1}$ are pairwise distinct.  For $j\ge1$,
\[\begin{aligned}
\{(a-jq)\bmod n:a\in P_j\}&=\{0\}\cup\bigcup_{h=0}^{j-1}\{(a-(j-h)q)\bmod n:a\in\supp\setminus\{q\}\}\\
	&=\{0\}\cup\bigcup_{h=1}^{j}\{(a-hq)\bmod n:a\in\supp\setminus\{q\}\}\\
	&=\{0\}\cup\mathcal Z_j\cup\mathcal O_j,
\end{aligned}\]
where the three sets $\{0\}$, $\mathcal Z_j$ and $\mathcal O_j$  are proved to be pairwise disjoint in Theorem~\ref{thm:gamma-behavior}.
Since $|P_1|=|R\cup T|$, we have $|P_1|\ge2>|P_0|$. For $1\le j<j_1$, we have $\mathcal Z_j\subsetneq\mathcal Z_{j+1}$ or $\mathcal O_j\subsetneq\mathcal O_{j+1}$.
Since $j_1$ is the first index $j$ for which $\mathcal Z_j=\mathcal Z_{j+1}$ and $\mathcal O_j=\mathcal O_{j+1}$, we have
\[|P_0|<|P_1|<\cdots<|P_{j_1}|.\]
	
Next, we show that $a\not\equiv0\pmod{\gcd(n,q)}$ for every $a\in\supp\setminus\{q\}$. Suppose instead that $\delta\equiv0\pmod{\gcd(n,q)}$ for some such $\delta$. Then Lemma~\ref{lem:quasi-cons-set} gives $r\in R$ and $t\in T$
such that $r+\delta\equiv t\pmod n$ or $t+\delta\equiv r\pmod n$. Thus $r\equiv t\pmod{\gcd(n,q)}$, and so $(r,t)\in\mathcal C$. This contradicts that $\mathcal C=\emptyset$. Therefore, for every $a\in\supp\setminus\{q\}$, we have
\[(a+hq)\bmod n\not\equiv0\pmod{\gcd(n,q)}, \,\,\,  h\in[0,j-1],\]
where $j\ge 1$. Thus by \eqref{eq:Pj-definition}, $jq\bmod n$ is the unique element of $P_j$ congruent to $0$ modulo $\gcd(n,q)$ for every $j\ge1$.
	
 We now show that $P_{j_1},\ldots,P_{j_1+d-1}$ are pairwise distinct. Suppose that $P_{j_1+s_1}=P_{j_1+s_2}$ for $0\le s_1<s_2<d$. Then $((j_1+s_1)q\bmod n)\in P_{j_1+s_2}$. Since $(j_1+s_1)q\equiv0\pmod{\gcd(n,q)}$ and
the only element of $P_{j_1+s_2}$ congruent to $0$ modulo $\gcd(n,q)$ is $(j_1+s_2)q\bmod n$, we have
\[(j_1+s_1)q\equiv(j_1+s_2)q\pmod n.\]
Hence $(s_2-s_1)q\equiv0\pmod n$, which is impossible because $0<s_2-s_1<d$ and $d$ is the least positive integer satisfying $dq\equiv0\pmod n$. Therefore $P_{j_1},\ldots,P_{j_1+d-1}$ are pairwise distinct.
	
For $j\ge j_1$, both $\mathcal Z_j$ and $\mathcal O_j$ remain constant, so $|P_j|=|P_{j_1}|$. Since $|P_0|<|P_1|<\cdots<|P_{j_1}|$ and $P_{j_1},\ldots,P_{j_1+d-1}$ are pairwise distinct, it follows that $P_0,\ldots,P_{j_1+d-1}$
are pairwise distinct. This completes the proof.
\end{IEEEproof}

\subsection{Shift-invariant permutations and their algebraic structure}

In this subsection, we use the linearly independent sequence $\{\gamma_j\}$ for some $j\in[0,\deg p(z)-1]$ to construct a group under composition of functions, which is isomorphic to the multiplicative group of units of a polynomial quotient ring, where $p(z)$ is defined below. The functions in the group are shift-invariant permutations of
$\mathbb F_2^n$, and their inverses, iterates and composition orders are calculated in the polynomial quotient ring.

Let $\ell$ be a quasi-conserved landscape function over $\mathbb F_2^n$ with special index $q$. Let $\{\gamma_j\}_{j\geq 0}$ be  the sequence of functions defined in (\ref{eq:gammaj}), and let $j_0$, $j_1$ and $d$ be as in Theorem~\ref{thm:gamma-behavior}.
In the following we define
\begin{equation}\label{eq:quotient-polynomial}
p(z)=\begin{cases}
		z^{j_0},&\mathcal C\ne\emptyset,\\[2pt]
		z^{j_1}(1+z^d),&\mathcal C=\emptyset.
	\end{cases}
\end{equation}

Let $\Gamma=\operatorname{span}\{\gamma_0,\gamma_1,\ldots\}$ be a linear space over $\mathbb F_2$ and $\mathcal Q=\mathbb F_2[z]/\langle p(z)\rangle$. By Theorem~\ref{thm:gamma-behavior} and Proposition~\ref{lem:linear-independence}, $\gamma_0,\ldots,\gamma_{\deg p(z)-1}$ form a basis of $\Gamma$.
For $F=\sum_{i=0}^{\deg p(z)-1}a_i\gamma_i$, define a  vector-space isomorphism $\varphi:\Gamma\to\mathcal Q$ by
\[\varphi(F)=\sum_{i=0}^{\deg p(z)-1}a_i z^i.\]
To construct a group of shift-invariant permutations of $\mathbb F_2^n$, we define
\[\mathcal G =\gamma_0+\operatorname{span}\left\{ \gamma_1,\ldots,\gamma_{\deg p(z)-1}\right\}\,\, \,{\rm and}\,\,\, \mathcal M=\left\{1+\sum_{i=1}^{\deg p(z)-1}a_i z^i:a_i\in\mathbb F_2\right\}\subseteq\mathcal Q.\]
It is easy to see that $\mathcal G$ and $\mathcal{M}$ are monoids. Let $\mathcal G^*$ and $\mathcal{M}^*$ be their groups of units, respectively.
By Theorem~\ref{thm:equivalence}, the sequence $\{\gamma_j\}_{j\ge 0}$ satisfies the polynomial composition property, which allows us to have the following isomorphisms.

\begin{theorem}\label{thm:main}
The set $\mathcal G$ is an abelian monoid under composition. The restriction of $\varphi$ to $\mathcal G$ is a monoid isomorphism $(\mathcal G,\circ)\cong(\mathcal M,\cdot)$,
and its restriction to $\mathcal G^*$ is a group isomorphism $(\mathcal G^*,\circ)\cong(\mathcal M^*,\cdot)$. Moreover, $\mathcal G^*=\mathcal G$ if and only if $\mathcal C\ne\emptyset$.
\end{theorem}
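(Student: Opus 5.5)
The plan is to transport composition in $\mathcal G$ to multiplication in $\mathcal Q$ using the polynomial composition property, which holds by Theorem~\ref{thm:equivalence}. That theorem applies because $\ell$ is quasi-conserved and we are under~\eqref{eq:finite-pcp-bound}.

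\textbf{Reduction to $\mathcal Q$.} First I would record how higher-index $\gamma_j$ reduce. By Theorem~\ref{thm:gamma-behavior}, $\gamma_j=\mathbf 0$ for $j\ge j_0$ when $\mathcal C\ne\emptyset$, and $\gamma_{j+d}=\gamma_j$ for $j\ge j_1$ when $\mathcal C=\emptyset$. Together with Proposition~\ref{lem:linear-independence}, this says the linear map $\Phi:\mathbb F_2[z]\to\Gamma$, $z^i\mapsto\gamma_i$, is surjective with kernel exactly $\langle p(z)\rangle$. The kernel contains $p(z)$ because $\Phi(z^sp(z))$ is $\gamma_{j_0+s}=\mathbf 0$, respectively $\gamma_{j_1+s}+\gamma_{j_1+d+s}=\mathbf 0$. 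Since $\gamma_0,\dots,\gamma_{\deg p-1}$ are independent, the kernel is no larger. Hence $\varphi$ is the well-defined inverse of the induced isomorphism $\mathbb F_2[z]/\langle p\rangle\to\Gamma$, and $\varphi(\gamma_j)=z^j \bmod p$ for every $j\ge0$.

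\textbf{Homomorphism and the monoid isomorphism.} Take $F=\sum_{j=0}^{N}b_j\gamma_j\in\mathcal G$ and $G=\gamma_0+\sum_{i=1}^{s}\alpha_i\gamma_i\in\mathcal G$, with $b_0=1$. Composition distributes on the left over sums of functions, since $(A+B)\circ G=A\circ G+B\circ G$ pointwise. Applying~\eqref{eq:polynomial-composition-property} termwise then gives
\[F\circ G=\sum_j b_j\sum_i\alpha_i\gamma_{i+j}=\Phi\bigl(f(z)g(z)\bigr),\]
where $f=\sum_j b_jz^j$ and $g=\sum_i\alpha_iz^i$. Therefore $\varphi(F\circ G)=\varphi(F)\varphi(G)$ in $\mathcal Q$. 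The constant term of $fg$ is $1$, so $\mathcal G$ is closed under composition, and $\gamma_0=\operatorname{id}$ maps to $1$. Commutativity of $\mathcal Q$ gives $F\circ G=G\circ F$. The map $\varphi$ is a bijection $\mathcal G\to\mathcal M$ because it preserves the coefficient of $z^0$ relative to the basis. This yields $(\mathcal G,\circ)\cong(\mathcal M,\cdot)$, and any monoid isomorphism restricts to an isomorphism of unit groups, giving $\mathcal G^*\cong\mathcal M^*$. I would also remark that the units of $\mathcal G$ are exactly its permutation elements. If $F\in\mathcal G$ is bijective, then $F^{-1}$ lies in $\mathcal G$, since $\mathcal M$ is finite and $\varphi(F)$ generates a finite cyclic submonoid whose powers eventually repeat; this forces a power $F^{a}\circ F^{b}=F^{a}$ with $b\ge 1$, and injectivity gives $F^b=\operatorname{id}$.

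\textbf{The criterion $\mathcal G^*=\mathcal G$.} This is the step that needs a small argument. If $\mathcal C\ne\emptyset$, then $p=z^{j_0}$, and every element $1+zh$ of $\mathcal M$ is a unit, with inverse $\sum_{i<j_0}(zh)^i$, because $zh$ is nilpotent. If $\mathcal C=\emptyset$, then $p=z^{j_1}(1+z^d)$, and $d\ge1$ with $1+z^d$ divisible by $1+z$. I would exhibit the non-unit $1+z\in\mathcal M$: it shares the factor $1+z$ with $p$, so it is a zero divisor and not a unit in $\mathcal Q$. For $\deg p\ge2$, $1+z$ indeed lies in $\mathcal M$. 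This holds because $\deg p=j_1+d\ge 2$, as $j_1\ge1$. Consequently $\gamma_0+\gamma_1\notin\mathcal G^*$. The main care point is the well-definedness of $\varphi$ on products of high degree, and that is handled by the kernel identification in the first step.
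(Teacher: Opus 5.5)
Your proposal is correct and follows essentially the same route as the paper's proof. The polynomial composition property turns $F\circ G$ into $\sum a_ib_j\gamma_{i+j}$. The unit criterion then comes from nilpotency of $z$ modulo $z^{j_0}$, versus $(1+z)\mid p(z)$ when $\mathcal C=\emptyset$. Two of your additions make explicit what the paper only asserts or uses later: the identification $\ker\Phi=\langle p(z)\rangle$ justifies the paper's claim that reduction modulo $p(z)$ is compatible with the relations in $\Gamma$, and your remark that the units of $\mathcal G$ are exactly its bijective elements supplies the step relied on in Corollary~\ref{cor:pp-period}.
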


\begin{IEEEproof}
Let $F,G\in\mathcal G$ and write $F=\sum_i a_i\gamma_i$ and $G=\sum_j b_j\gamma_j$, where $a_i, b_i\in \mathbb F_2$ and $a_0=b_0=1$. By the polynomial composition property, $\gamma_i\circ G=\sum_j b_j\gamma_{i+j}$ for every $i$. Hence
\[F\circ G=\sum_i a_i(\gamma_i\circ G)=\sum_{i,j}a_i b_j\gamma_{i+j}.\]
By this equality and the definition of $\varphi$, we have
\[\varphi(F\circ G)=\varphi(F)\varphi(G)\in\mathcal M.\]
Here the reduction of $\varphi(F)\varphi(G)$ modulo $p(z)$ is compatible with the relations in $\Gamma$: $\gamma_j=\mathbf0$ for $j\ge j_0$ corresponds to $z^j\equiv0\pmod{z^{j_0}}$, and $\gamma_{j+d}=\gamma_j$ for $j\ge j_1$ corresponds to $z^{j+d}\equiv z^j\pmod{z^{j_1}(1+z^d)}$. This shows that $\varphi$ is a homomorphism from $\mathcal G$ to $\mathcal{M}$. Moreover, it is easy to see that $\varphi|_{\mathcal G}:(\mathcal G,\circ)\to(\mathcal M,\cdot)$ is a monoid isomorphism,
and so $\varphi|_{\mathcal G^*}:(\mathcal G^*,\circ)\to(\mathcal M^*,\cdot)$ is a group isomorphism.
	
If $\mathcal C\ne\emptyset$, every element of $\mathcal M$ is a unit modulo $z^{j_0}$. If $\mathcal C=\emptyset$, $1+z\in\mathcal M$ is not a unit since $(1+z)\mid p(z)$. Hence $\mathcal G^*=\mathcal G$ if and only if $\mathcal C\ne\emptyset$.
\end{IEEEproof}

When $\mathcal C=\emptyset$, Theorem~\ref{thm:main} yields the following permutation criterion.

\begin{corollary}\label{cor:pp-period}
	Suppose $\mathcal C=\emptyset$, and let
	$F=\operatorname{id}+\sum_{j=1}^{j_1+d-1}a_j\gamma_j\in\mathcal G$.
	Then $F$ is a permutation of $\mathbb F_2^n$ if and only if
	\[
	\gcd\left(1+\sum_{j=1}^{j_1+d-1}a_jz^j,\,1+z^{d_1}\right)=1,
	\]
	where $d_1$ is the largest odd divisor of $d$.
\end{corollary}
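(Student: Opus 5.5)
The plan is to reduce the permutation property of $F$ to invertibility of $\varphi(F)$ in $\mathcal Q$ using Theorem~\ref{thm:main}, and then to simplify the resulting coprimality condition using the shape of $p(z)$. Since $\mathcal C=\emptyset$, we have $\deg p(z)=j_1+d$ by \eqref{eq:quotient-polynomial}. So $F=\operatorname{id}+\sum_{j=1}^{j_1+d-1}a_j\gamma_j$ lies in $\mathcal G$, and $\varphi(F)=f(z):=1+\sum_{j=1}^{j_1+d-1}a_jz^j\in\mathcal M$ without any reduction.

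\textbf{Step 1: $F$ is a permutation if and only if $F\in\mathcal G^*$.} If $F\in\mathcal G^*$, it has a compositional inverse in $\mathcal G$, so it is a bijection of $\mathbb F_2^n$. Conversely, suppose $F$ is a permutation. Since $\mathbb F_2^n$ is finite, $F$ has finite order $N\ge1$ in the symmetric group, so $F^{\circ N}=\operatorname{id}=\gamma_0$. The monoid $\mathcal G$ is closed under composition and contains $\gamma_0$. Hence $F^{\circ(N-1)}\in\mathcal G$ (for $N=1$ this is $\gamma_0$), and it is a two-sided inverse of $F$ inside $\mathcal G$. Thus $F\in\mathcal G^*$. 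This finiteness argument is the only point needing care: it shows that being a permutation of $\mathbb F_2^n$ already forces the inverse to lie in the monoid.

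\textbf{Step 2: translate to the ring.} By Theorem~\ref{thm:main}, $\varphi$ restricts to an isomorphism $\mathcal G^*\cong\mathcal M^*$. So $F\in\mathcal G^*$ if and only if $f$ is a unit in $\mathcal M$. We must also check that a unit of $\mathcal Q$ lying in $\mathcal M$ is a unit of $\mathcal M$. An inverse $g$ in $\mathcal Q$ satisfies $f(0)g(0)\equiv1$, because $p(0)=0$ when $j_1\ge1$. Hence $g(0)=1$ and $g\in\mathcal M$. It follows that $f$ is invertible if and only if $\gcd(f,p)=1$ in $\mathbb F_2[z]$.

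\textbf{Step 3: simplify the gcd.} Write $p=z^{j_1}(1+z^d)$. Since $f(0)=1$, we have $z\nmid f$, so $\gcd(f,p)=1$ if and only if $\gcd(f,1+z^d)=1$. Now write $d=2^e d_1$ with $d_1$ odd. In characteristic $2$, $1+z^d=(1+z^{d_1})^{2^e}$, so $1+z^d$ and $1+z^{d_1}$ have the same irreducible factors. Therefore $\gcd(f,1+z^d)=1$ if and only if $\gcd(f,1+z^{d_1})=1$. Combining Steps 1–3 gives the corollary.
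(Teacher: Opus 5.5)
Your proof is correct and follows the same route as the paper: Theorem~\ref{thm:main} reduces bijectivity of $F$ to coprimality of $\varphi(F)$ with $p(z)=z^{j_1}(1+z^d)$, the factor $z^{j_1}$ is harmless since $\varphi(F)$ has constant term $1$, and $1+z^d=(1+z^{d_1})^{2^e}$ finishes the argument. Your Steps~1 and~2 (the finite-order argument placing $F^{-1}$ inside $\mathcal G$, and the check that an inverse in $\mathcal Q$ of an element of $\mathcal M$ again lies in $\mathcal M$) only make explicit two points that the paper's proof takes for granted.
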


\begin{IEEEproof}
By Theorem~\ref{thm:main}, $F$ is a permutation if and only if $\varphi(F)$ is a unit in $\mathcal Q$. This holds if and only if the polynomial $\varphi(F)=1+\sum_{j=1}^{j_1+d-1}a_jz^j$ is coprime to $p(z)=z^{j_1}(1+z^d)$. Since $\varphi(F)$ is coprime to $z^{j_1}$, it remains to check whether it is coprime to $1+z^d$. Writing $d=2^s d_1$ with $s\ge0$ gives $1+z^d=(1+z^{d_1})^{2^s}$, which proves the criterion.
\end{IEEEproof}

Next, we determine the inverses, iterates and composition order of functions in the group $\mathcal{G}^*$.
\begin{theorem}\label{thm:algebraic-properties}
Let $F=\operatorname{id}+\sum_{j=1}^{u}a_j\gamma_j\in\mathcal G^*$, where $u=j_0-1$ if $\mathcal C\ne\emptyset$ and $u=j_1+d-1$ if $\mathcal C=\emptyset$.
\begin{enumerate}
\item[(1)] The inverse of $F$ has the form $F^{-1}=\operatorname{id}+\sum_{k=1}^{u}b_k\gamma_k$, and the coefficients $b_k$ are determined by the following: \\
If $\mathcal C\ne\emptyset$, then $b_k=a_k+\sum\limits_{\substack{i+j=k\\i,j\in[1,u]}}a_i b_j$ for $k\in[1,u]$.
If $\mathcal C=\emptyset$, then $b_k=\begin{cases}
			a_k+\displaystyle\sum_{\substack{i+j=k\\i,j\in[1,u]}}
			a_i b_j,& \hspace{-0.1cm} k\in[1,j_1-1],\\
			a_k+\displaystyle\sum_{\substack{i+j\equiv k\pmod d\\
			i+j\ge j_1,\ i,j\in[1,u]}}a_i b_j,
			&k\in[j_1,u].
		\end{cases}
		$
\item[(2)] Set $a_0=1$. For $s\ge1$, the $s$th iterate of $F$ has the form $F^s=\operatorname{id}+\sum_{k=1}^{u}c_k\gamma_k$, and the coefficients $c_k$ are determined by the following:\\
 If $\mathcal C\ne\emptyset$, then  $c_k=\sum\limits_{\substack{i_1+\cdots+i_s=k\\
		i_1,\ldots,i_s\in[0,u]}}
		a_{i_1}\cdots a_{i_s}$ for $k\in[1,u]$.
If $\mathcal C=\emptyset$, then $c_k=\begin{cases}
			\displaystyle\sum_{\substack{i_1+\cdots+i_s=k\\
					i_1,\ldots,i_s\in[0,u]}}
			a_{i_1}\cdots a_{i_s}, &  \hspace{-0.6cm} k\in[1,j_1-1], \\
			\displaystyle\sum_{\substack{i_1+\cdots+i_s\equiv k\pmod d\\
					i_1+\cdots+i_s\ge j_1\\
					i_1,\ldots,i_s\in[0,u]}}
			       a_{i_1}\cdots a_{i_s},
			& k \in[j_1,u].
		\end{cases}
		$
\item[(3)] Let $\widehat j=\min\{j\in[1,u]:a_j=1\}$ and $F\ne\operatorname{id}$. If $\mathcal C\ne\emptyset$, then
	\[\operatorname{ord}(F)=2^{\lceil\log_2(j_0/\widehat j)\rceil}.\]
If $\mathcal C=\emptyset$ and $d$ is a power of two, write $\sum_{j=1}^{u}a_jz^j=(1+z)^\rho A_1(z)$ with $\rho\ge1$ and $A_1(1)=1$. Then
		\[\operatorname{ord}(F)=2^{\left\lceil \log_2\max\left\{1,\frac{j_1}{\widehat j},\frac d\rho\right\}\right\rceil}. \]
\end{enumerate}
\end{theorem}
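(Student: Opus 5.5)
All three parts will be read off from the isomorphism $\varphi|_{\mathcal G^*}\colon(\mathcal G^*,\circ)\cong(\mathcal M^*,\cdot)$ of Theorem~\ref{thm:main}. Each statement about $F$ will be translated into a statement about $f(z)=\varphi(F)=1+\sum_{j=1}^u a_jz^j$ in $\mathcal Q=\mathbb F_2[z]/\langle p(z)\rangle$. Note that $u=\deg p(z)-1$ in both cases.

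For (1), since $F\in\mathcal G^*$, its inverse lies in $\mathcal G^*$, so $F^{-1}=\operatorname{id}+\sum_{k=1}^u b_k\gamma_k$ with $g(z)=\varphi(F^{-1})$ satisfying $f(z)g(z)=1$ in $\mathcal Q$. I will expand
\[
f g=1+\sum_k\Bigl(a_k+b_k+\sum_{i+j=k}a_ib_j\Bigr)z^k
\]
and reduce it modulo $p(z)$. For $p=z^{j_0}$, every exponent $k\ge j_0$ is simply discarded. For $p=z^{j_1}(1+z^d)$, the reduction rule is $z^{e}\equiv z^{e'}$, where $e'$ is the unique element of $[j_1,j_1+d-1]$ congruent to $e$ modulo $d$, valid whenever $e\ge j_1$. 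Monomials of degree less than $j_1$ are untouched. Setting the coefficient of each $z^k$, $k\in[1,u]$, to zero gives exactly the stated equations over $\mathbb F_2$. In the range $k\ge j_1$ the sum runs over $i+j\equiv k\pmod d$ with $i+j\ge j_1$, since $k$ is the representative in $[j_1,u]$.

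For (2), I will argue in the same way with $f(z)^s$. The product of $s$ copies is $\sum a_{i_1}\cdots a_{i_s}z^{i_1+\cdots+i_s}$ with $a_0=1$, and it is reduced by the same rule. The only new point is that $F^s\in\mathcal G$ has constant term $1$.

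For (3), $\operatorname{ord}(F)=\operatorname{ord}(f)$ in $\mathcal M^*$. When $p=z^{j_0}$, write $f=1+z^{\widehat j}h(z)$. In characteristic $2$ we have $f^{2^e}=1+z^{2^e\widehat j}h(z^{2^e})$ with $h(0)=1$, so $f^{2^e}=1$ exactly when $2^e\widehat j\ge j_0$. The unit group $1+z\mathbb F_2[z]/\langle z^{j_0}\rangle$ is a $2$-group, so the order is a power of $2$, and the formula follows.

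When $d=2^t$, we have $p=z^{j_1}(1+z)^{d}$. By the Chinese remainder theorem, $\mathcal Q\cong\mathbb F_2[z]/\langle z^{j_1}\rangle\times\mathbb F_2[z]/\langle (1+z)^d\rangle$, and both unit groups of $1$-units are $2$-groups. The order of $f$ is therefore the maximum of its orders in the two components. The first component gives $2^{\lceil\log_2(j_1/\widehat j)\rceil}$, or $1$ when $j_1=0$ or $\widehat j\ge j_1$. For the second component, write $f=1+(1+z)^\rho A_1$, where $\rho\ge1$ because $f(1)=1$ is needed for a unit. The same Frobenius argument in the variable $w=1+z$, using $A_1(1)=1$ so that $w\nmid A_1$, shows that $f^{2^e}=1$ modulo $w^d$ exactly when $2^e\rho\ge d$.

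The max with $1$ covers the case where both components are trivial. That case cannot actually occur for $F\neq\operatorname{id}$ by linear independence, but I will still check that the formula matches it.

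I expect the main obstacle to be the bookkeeping in the periodic reduction of (1) and (2). One must verify that the condition $i+j\equiv k\pmod d$, $i+j\ge j_1$, captures precisely the monomials that fold onto $z^k$. This uses the fact that $j_1\ge1$ and that $i+j\le 2u$ may wrap around several times modulo $d$. In (3) a minor further check is needed: if $j_1$ were $0$ the first component disappears, but Theorem~\ref{thm:gamma-behavior} gives $j_1\ge1$.
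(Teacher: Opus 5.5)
Your proposal is correct and follows essentially the same route as the paper: read everything off the isomorphism $(\mathcal G^*,\circ)\cong(\mathcal M^*,\cdot)$, compare coefficients after reducing modulo $p(z)$ for (1) and (2), and apply the Frobenius map $A(z)^{2^\nu}+1=\sum_j a_j z^{j2^\nu}$ together with the $2$-group property for (3). Your Chinese-remainder splitting in (3) is only a repackaging of the paper's check that $A(z)^{2^\nu}+1$ is divisible by each of the coprime factors $z^{j_1}$ and $(1+z)^d$.
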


\begin{IEEEproof}
Let $A(z)=1+\sum_{j=1}^{u}a_jz^j$. Then $\varphi(F)=A(z)$.
	
(1) Let $B(z)=1+\sum_{k=1}^{u}b_kz^k$. Since $A(z)$ is a unit in $\mathcal Q$, $A(z)B(z)\equiv1\pmod{p(z)}$ uniquely determines the coefficients $b_k$. For $\mathcal C\ne\emptyset$, we have $p(z)=z^{j_0}$. The terms of degree at least
 $j_0$ vanish modulo $p(z)$, so comparing the coefficient of $z^k$ for $k\in[1,u]$ gives
\[a_k+b_k+\sum_{\substack{i+j=k\\i,j\in[1,u]}}a_i b_j=0.\]
This yields the first formula.
	
For $\mathcal C=\emptyset$, use $z^{N+d}\equiv z^N\pmod{p(z)}$ for $N\ge j_1$. Under this reduction, the terms of degree less than $j_1$ remain unchanged, while each monomial $z^N$ with $N\ge j_1$ reduces to $z^k$ for the unique $k\in[j_1,u]$ satisfying $N\equiv k\pmod d$. Similarly, comparing coefficients after reduction gives the remaining equations in (1).
	
(2) Set $a_0=1$. By Theorem~\ref{thm:main}, $\varphi(F^s)=A(z)^s$. Expanding the product gives
\[A(z)^s=\prod_{r=1}^{s}A(z)=\prod_{r=1}^{s}\left(\sum_{i_r=0}^{u}a_{i_r}z^{i_r}\right)=\sum_{i_1,\ldots,i_s\in[0,u]} a_{i_1}\cdots a_{i_s}z^{i_1+\cdots+i_s}.\]
For $\mathcal C\ne\emptyset$, we have $p(z)=z^{j_0}$, so terms of total degree at least $j_0$ vanish modulo $p(z)$. Hence comparing the coefficient of $z^k$ for $k\in[1,u]$ gives
\[c_k=\sum_{\substack{i_1+\cdots+i_s=k\\
			i_1,\ldots,i_s\in[0,u]}}
	a_{i_1}\cdots a_{i_s}.
\]
For $\mathcal C=\emptyset$, use $z^{N+d}\equiv z^N\pmod{p(z)}$ for $N\ge j_1$. Under this reduction, terms of total degree less than $j_1$ remain unchanged, while each monomial $z^{i_1+\cdots+i_s}$ of total degree at least $j_1$ reduces to $z^k$ for the unique $k\in[j_1,u]$ satisfying $i_1+\cdots+i_s\equiv k\pmod d$. Similarly, comparing coefficients after reduction gives the remaining formulas in (2).
	
(3) Let $F\ne\operatorname{id}$. For every $\nu\ge0$, Theorem~\ref{thm:main} shows that $F^{2^\nu}=\operatorname{id}$ if and only if $A(z)^{2^\nu}\equiv1\pmod{p(z)}$, or equivalently $p(z)\mid A(z)^{2^\nu}+1$. It is easy to see that
\[A(z)^{2^\nu}+1=\left(\sum_{j=1}^{u}a_jz^j\right)^{2^\nu}=\sum_{j=1}^{u}a_jz^{j2^\nu}.\]
	
For $\mathcal C\ne\emptyset$, we have $p(z)=z^{j_0}$. The lowest-degree nonzero term of this sum is $z^{\widehat j2^\nu}$. So $F^{2^\nu}=\operatorname{id}$ if and only if $\widehat j2^\nu\ge j_0$.  Since $|\mathcal M|=2^{j_0-1}$ and every element of $\mathcal M$ is a unit here, $\mathcal M^*$ is a $2$-group, so the order of $F$ is indeed a power of $2$. The least such $\nu$ is $\lceil\log_2(j_0/\widehat j)\rceil$, which gives the first formula.
	
For $\mathcal C=\emptyset$ with $d$ a power of two, we have $p(z)=z^{j_1}(1+z)^d$. Since $A(z)$ is a unit in $\mathcal Q$ and $F\ne\operatorname{id}$, we have $A(1)=1$ and $A(z)+1\ne 0$. Thus we can write $A(z)+1=(1+z)^\rho A_1(z)$ with $\rho\ge1$ and $A_1(1)=1$. Consequently, $A(z)^{2^\nu}+1=(1+z)^{\rho2^\nu}A_1(z)^{2^\nu}$.
	
To determine when $p(z)$ divides $A(z)^{2^\nu}+1$, we check whether the multiplicities of $z$ and $1+z$ are at least $j_1$ and $d$, respectively. Since $A_1(1)=1$, the polynomial $A_1(z)^{2^\nu}$ is not divisible by $1+z$, so the multiplicity of the factor $1+z$ in $A(z)^{2^\nu}+1$ is exactly $\rho2^\nu$. Hence $A(z)^{2^\nu}+1$ is divisible by $(1+z)^d$ if and only if $\rho2^\nu\ge d$. Recall that $\widehat j$ is the smallest positive index $j$ with $a_j=1$ in $\sum_{j=1}^{u}a_j\gamma_j$, so $z^{j_1}$ divides $A(z)^{2^\nu}+1$ if and only if $\widehat j2^\nu\ge j_1$. Hence $F^{2^\nu}=\operatorname{id}$ if and only if $\widehat j2^\nu\ge j_1$ and $\rho2^\nu\ge d$. Since an element $A(z)\in\mathcal M$ is a unit if and only if $A(1)=1$, i.e., its non-constant coefficients sum to $0$ in $\mathbb F_2$, $|\mathcal M^*|=2^{j_1+d-2}$, again a $2$-group; hence the order of $F$ is indeed a power of $2$. Taking the least such $2^\nu$ gives the second formula.
\end{IEEEproof}

\section{Applications: Permutation Constructions via Binomials and Trinomials in $\mathcal Q$}
\label{sec:families}

In this section, we use the results in Section~\ref{sec3} to construct explicit shift-invariant permutations of $\mathbb F_2^n$. Recall that $\ell$ is a quasi-conserved landscape function on $\mathbb F_2^n$ with special index $q$ in (\ref{eq:landscape-function}) and $\{\gamma_j\}_{j\ge 0}$ is the sequence of functions defined in (\ref{eq:gammaj}). By Theorem~\ref{thm:equivalence}, these functions satisfy the polynomial composition property. Let $p(z)$ be the polynomial defined in (\ref{eq:quotient-polynomial}), let $F = \sum_{i=0}^{\deg p -1} a_i \gamma_i$, and let $\varphi$ be the monoid isomorphism from $\mathcal{G}$ to $\mathcal{M}$ given in Theorem~\ref{thm:main}. In the following we construct several explicit shift-invariant permutations of $\mathbb F_2^n$ from binomial and trinomial permutations $\varphi(F)$, and then discuss  their orders, inverses, iterates, and fixed points.

\subsection{Permutations via permutation binomials in $\mathcal Q$}

In this subsection, we study shift-invariant permutations of the form $\beta_h=\operatorname{id}+\gamma_h$ with $h\ge1$. The images of such functions under the isomorphism $\varphi$ are permutation binomials $1+z^h \in \mathcal{Q}$.  By Lemma~\ref{lem:gamma-explicit}, for $h\geq 1$ we have \begin{equation}\label{eq:gamma-explicit-recall} \gamma_h=S^{hq}\bigodot_{i=0}^{h-1}S^{iq}\ell^{(q)}=S^{hq}\bigodot_{i=0}^{h-1}\left(\bigodot_{r\in R}(S^{iq+r}+\mathbf 1)\bigodot_{t\in T\setminus\{q\}}S^{iq+t}\right). \end{equation} Then, $\beta_h$ is a function from $\mathbb F_2^n$ to itself and its $i$-th output  coordinate is
\[\beta_h(x)_i=x_i+\gamma_h(x)_i=x_i+x_{i+hq}\prod_{j=0}^{h-1}\left(\prod_{r\in R}(x_{i+jq+r}+1)\prod_{t\in T\setminus\{q\}}x_{i+jq+t}\right).\]
Since $\varphi(\beta_h)=1+z^h$, by Theorem~\ref{thm:main} we know that $\beta_h$ is a permutation if and only if $\mathcal C\ne\emptyset$. In the following, we assume $\mathcal C\ne\emptyset$ and $h\in[1, j_0-1]$, where $j_0$ is given in Theorem~\ref{thm:gamma-behavior}.

For nonnegative integers $a$ and $b$ with binary expansions $a=\sum_{i\ge0}a_i2^i$ and $b=\sum_{i\ge0}b_i2^i$, write $a\preceq b$ if $a_i\le b_i$ for every $i$. The order and inverse formulas below follow from Theorem~\ref{thm:algebraic-properties}(3) and (1). By Theorem~\ref{thm:algebraic-properties}(2), the iterate formula is obtained by expanding $(1+z^h)^v$ in $\mathcal Q$ and keeping the terms with odd coefficients and degree less than $j_0$. Lucas's theorem gives $\binom{v}{i}\equiv1\pmod2$ if and only if $i\preceq v$. For $j\ge0$ with $2^j h<j_0$, taking $v=2^j$ in the iterate formula gives $\beta_h^{2^j}=\operatorname{id}+\gamma_{2^j h}$. Hence $x$ is a fixed point of $\beta_h^{2^j}$ if and only if $\gamma_{2^j h}(x)=\mathbf0$.

\begin{proposition}\label{cor:bin}\label{th-fixed}
Let $\mathcal C\ne\emptyset$ and $h\in[1,j_0-1]$, and set $s=\lfloor(j_0-1)/h\rfloor$,  where $j_0$ is given in Theorem~\ref{thm:gamma-behavior}. Then
\[
	\operatorname{ord}(\beta_h)=2^{\lceil\log_2(j_0/h)\rceil},
	\quad
	\beta_h^{-1}=\operatorname{id}+\sum_{i=1}^{s}\gamma_{hi},
	\quad\text{and}\quad
	\beta_h^v=\sum_{\substack{i\in[0,\min\{v,s\}]\\i\preceq v}}\gamma_{hi},\, \,\, v\ge 1.
\]
In particular, $\beta_h$ is an involution if and only if $2h\ge j_0$. For $j\ge0$ with $2^j h<j_0$, $x\in\mathbb F_2^n$ is a fixed point of $\beta_h^{2^j}$ if and only if $\gamma_{2^j h}(x)=\mathbf0$.
\end{proposition}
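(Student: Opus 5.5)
The plan is to work entirely inside the quotient ring $\mathcal Q=\mathbb F_2[z]/\langle z^{j_0}\rangle$, using that $\mathcal C\ne\emptyset$ gives $p(z)=z^{j_0}$, that $\mathcal G^*=\mathcal G$, and that $\varphi(\beta_h)=1+z^h$. By Theorem~\ref{thm:main}, every statement about compositions of $\beta_h$ becomes a statement about powers and products of $1+z^h$ modulo $z^{j_0}$. Each result is then pulled back through $\varphi^{-1}$, which is legitimate because $\gamma_0,\ldots,\gamma_{j_0-1}$ form a basis of $\Gamma$ (Proposition~\ref{lem:linear-independence}(1)).

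\textbf{Order.} Apply Theorem~\ref{thm:algebraic-properties}(3) with $a_h=1$ and all other $a_j=0$, so $\widehat j=h$. This gives $\operatorname{ord}(\beta_h)=2^{\lceil\log_2(j_0/h)\rceil}$. The involution criterion follows at once: the order equals $2$ exactly when $1<j_0/h\le 2$. Since $h\le j_0-1$ already gives $j_0/h>1$, this reduces to $2h\ge j_0$.

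\textbf{Inverse.} Rather than solving the recursion of Theorem~\ref{thm:algebraic-properties}(1), verify the inverse directly with a telescoping product. Set $B(z)=\sum_{i=0}^{s}z^{hi}$. Then $(1+z^h)B(z)=1+z^{h(s+1)}$. By the definition of $s=\lfloor (j_0-1)/h\rfloor$ we have $h(s+1)\ge j_0$, so $z^{h(s+1)}\equiv 0$ in $\mathcal Q$ and $(1+z^h)B(z)=1$. Every exponent $hi$ with $i\le s$ is at most $j_0-1$, so $\varphi^{-1}(B)=\operatorname{id}+\sum_{i=1}^s\gamma_{hi}$. This is $\beta_h^{-1}$ because $\varphi$ is a group isomorphism.

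\textbf{Iterates.} Write $\varphi(\beta_h^v)=(1+z^h)^v=\sum_{i=0}^{v}\binom{v}{i}z^{hi}$ and reduce mod $z^{j_0}$. A term survives exactly when $hi<j_0$, that is, when $i\le s$, and also $i\le v$. By Lucas's theorem, $\binom vi$ is odd if and only if $i\preceq v$. Pulling back through $\varphi$ gives the stated sum, whose $i=0$ term is $\gamma_0=\operatorname{id}$. The only point requiring care is the index range $[0,\min\{v,s\}]$, and it follows directly from the reduction just described.

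\textbf{Fixed points.} Take $v=2^j$. The only $i\preceq 2^j$ are $0$ and $2^j$, and $2^j\le s$ holds because $2^jh<j_0$. Hence $\beta_h^{2^j}=\operatorname{id}+\gamma_{2^jh}$, and $\beta_h^{2^j}(x)=x$ if and only if $\gamma_{2^jh}(x)=\mathbf 0$.

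No step is a real obstacle. The only delicate point is the bookkeeping check that all exponents used lie in $[0,j_0-1]$, so that $\varphi^{-1}$ applies term by term.
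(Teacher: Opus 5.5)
Your proposal is correct and takes the same route as the paper. You transport everything to $\mathcal Q=\mathbb F_2[z]/\langle z^{j_0}\rangle$ via Theorem~\ref{thm:main}, read off the order from Theorem~\ref{thm:algebraic-properties}(3) with $\widehat j=h$, expand $(1+z^h)^v$ with Lucas's theorem, and specialize to $v=2^j$. The only cosmetic difference is that you check the inverse directly via $(1+z^h)\sum_{i=0}^{s}z^{hi}=1+z^{h(s+1)}\equiv1\pmod{z^{j_0}}$ instead of solving the recursion in Theorem~\ref{thm:algebraic-properties}(1), which here gives $b_k=1$ exactly when $h\mid k$.
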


We next apply Proposition~\ref{cor:bin} with $h=1$ to construct explicit permutations from three families of landscapes satisfying the quasi-conservation criterion in Lemma~\ref{lem:quasi-cons-set}.

\begin{corollary}\label{cor:bin-families}
Let $R,T\subseteq[-k,m]\setminus\{0\}$ be disjoint nonempty sets with $q\in T$, let $n>\max\{2m+k,2k+m\}$, and let $F:\mathbb F_2^n\to\mathbb F_2^n$ be the shift-invariant mapping defined by $y=F(x)$, where the $i$-th coordinate of the output $y$ is represented as
\[y_i=x_i+\prod_{r\in R}(x_{i+r}+1)\prod_{t\in T}x_{i+t}.\]
Suppose that one of the following three conditions holds:
\begin{enumerate}
\item[(1)] $R=\{r\}$ and $\{q,2r\}\subseteq T$, and for every $t\in T\setminus\{q\}$ we have $r+t\in T$ or $r-t\in T$;
\item[(2)] $T=\{q\}$ and for every $r\in R$ we have $q-r\in R$ or $q+r\in R$;
\item[(3)] $T=\{q,t\}$ with $q\ne t$. For every $r\in R$, at least one of $q-r$, $t-r$, $q+r$ and $t+r$ belongs to $R$, and at least one of $q-t$, $q+t$ and $2t$ belongs to $R$;
\end{enumerate}
Here the elements of $R$ and $T$ are taken modulo $n$. Then $F$ is a permutation of $\mathbb F_2^n$ if and only if there exist $r\in R$ and $t\in T$ such that $r\equiv t\pmod{\gcd(n,q)}$.
When $F$ is a permutation, write $x=F^{-1}(y)$, and the $i$-th coordinate of its output is
\[x_i=y_i+\sum_{v=1}^{j_0-1}y_{i+vq}\prod_{j=0}^{v-1}\left(\prod_{r\in R}(y_{i+jq+r}+1)\prod_{t\in T\setminus\{q\}}y_{i+jq+t}\right),\]
where $j_0$ is defined in Theorem~\ref{thm:gamma-behavior}.
\end{corollary}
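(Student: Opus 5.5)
The plan is to reduce everything to earlier results. Observe first that $F=\operatorname{id}+\ell=\gamma_0+\gamma_1=\beta_1$. So once $\ell$ is known to be quasi-conserved with special index $q$, Theorem~\ref{thm:equivalence} (the bound $n>\max\{2m+k,2k+m\}$ is assumed) gives the polynomial composition property. Theorem~\ref{thm:main} then says $F$ is a permutation if and only if $\varphi(F)=1+z$ is a unit in $\mathcal Q$, i.e.\ $(1+z)\nmid p(z)$. By \eqref{eq:quotient-polynomial} this happens exactly when $\mathcal C\ne\emptyset$, which is the condition ``$r\equiv t\pmod{\gcd(n,q)}$ for some $r\in R,t\in T$''. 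The inverse formula is Proposition~\ref{cor:bin} with $h=1$: $F^{-1}=\operatorname{id}+\sum_{v=1}^{j_0-1}\gamma_v$. Writing out $\gamma_v$ coordinatewise via \eqref{eq:gamma-explicit-recall} gives the displayed expression with $y$ in place of $x$.

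The real work is therefore to verify quasi-conservation under each of (1)--(3). I will use the criterion of Lemma~\ref{lem:quasi-cons-set}: for every $\delta\in\supp\setminus\{q\}$, some $r\in R$, $t\in T$ must satisfy $r+\delta\equiv t$ or $t+\delta\equiv r$ (mod $n$). I will check this case by case.

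\textbf{Case (1)}, $R=\{r\}$. For $\delta=r$ the pair $(r,2r)$ works, since $r+r=2r\in T$. For $\delta=t\in T\setminus\{q\}$, either $r+t=t'\in T$, giving $r+\delta=t'$, or $r-t=t'\in T$, giving $t'+\delta=r$.

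\textbf{Case (2)}, $T=\{q\}$, so $\delta$ ranges over $R$. If $q-r=r'\in R$, take $r'+\delta=q$. If $q+r=r'\in R$, take $q+\delta=r'$.

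\textbf{Case (3)}, $\delta$ ranges over $R\cup\{t\}$. For $\delta=r\in R$, each of the four options $q-r,t-r,q+r,t+r\in R$ yields a congruence of the required form with the partner $q$ or $t$. For $\delta=t$, the options are handled as follows: $q-t=r'$ gives $r'+t=q$; $q+t=r'$ gives $q+\delta=r'$; and $2t=r'$ gives $t+\delta=r'$.

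I expect no real obstacle beyond careful bookkeeping. The one point needing attention is the phrase ``taken modulo $n$'' in the hypotheses. Lemma~\ref{lem:quasi-cons-set} is itself stated modulo $n$, so the memberships may be read modulo $n$ without issue. Theorem~\ref{thm:equivalence} is applied only through its statement (i)$\Rightarrow$(ii), which requires only the quasi-conservation in that modular sense.
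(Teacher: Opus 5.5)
Your proposal is correct and takes the same route the paper intends. The paper gives no separate proof, only the remark that the three families satisfy the criterion of Lemma~\ref{lem:quasi-cons-set}. It then relies on Theorem~\ref{thm:equivalence}, on the unit criterion of Theorem~\ref{thm:main} for $\varphi(F)=1+z$, which is a unit exactly when $\mathcal C\ne\emptyset$, and on Proposition~\ref{cor:bin} with $h=1$ for the inverse $\operatorname{id}+\sum_{v=1}^{j_0-1}\gamma_v$. Your case-by-case verification of the congruences fills in exactly that omitted check, and reading the memberships modulo $n$ is what Lemma~\ref{lem:quasi-cons-set} requires.
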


The three families of shift-invariant permutations above include several known constructions in~\cite{daemen1995cipher,haugland2026new,kriepke2026shift,liu2026finding,lyu2025generalized}. In particular, when $q=m$ and $R=[1,m-1]$,
the permutations from Corollary~\ref{cor:bin-families}(2) are exactly  $\chi_{n,m}$-functions introduced in \cite{lyu2025generalized}. 

The following example gives an explicit shift-invariant permutation from each of the three cases in Corollary~\ref{cor:bin-families}.

\begin{example}
The three shift-invariant permutations $F_j$ for $j\in[1,3]$ are defined by $y=F_j(x)$, where the $i$-th coordinate of the output $y$ is represented as follows.
\begin{enumerate}
\item[(1)] For $F_1:\mathbb F_2^{10}\to\mathbb F_2^{10}$, take $R=\{1\}$, $T=\{-1,2\}$ and $q=2$, giving
\(y_i=x_i+x_{i+2}(x_{i+1}+1)x_{i-1}.\)
\item[(2)] For $F_2:\mathbb F_2^9\to\mathbb F_2^9$, take $R=\{-3,2\}$, $T=\{-1\}$ and $q=-1$, giving
\(y_i=x_i+x_{i+8}(x_{i+2}+1)(x_{i+6}+1).\)
\item[(3)] For $F_3:\mathbb F_2^{10}\to\mathbb F_2^{10}$, take $R=\{1\}$, $T=\{2,3\}$ and $q=3$, giving
\(y_i=x_i+x_{i+3}(x_{i+1}+1)x_{i+2}.\)
\end{enumerate}
Proposition~\ref{cor:bin} gives $\operatorname{ord}(F_1)=2$ and $\operatorname{ord}(F_2)=\operatorname{ord}(F_3)=4$. So $F_1$ is an involution, and its inverse  is itself.
	
For $x=F_2^{-1}(y)$, the $i$-th coordinate of the output is
\[x_i=y_i+(y_{i+2}+1)(y_{i+6}+1)\bigl(y_{i+8}+y_{i+7}(y_{i+1}+1)(y_{i+5}+1)\bigr).\]
For $x=F_3^{-1}(y)$, the $i$-th coordinate of the output is
\[x_i=y_i+(y_{i+1}+1)y_{i+2}\Bigl(y_{i+3}+(y_{i+4}+1)y_{i+5}\bigl(y_{i+6}+y_{i+9}(y_{i+7}+1)y_{i+8}\bigr)\Bigr).\]
Moreover, from Proposition~\ref{cor:bin} we have
$$ F_2^2(x)_i=x_i+x_{i+7}(x_{i+2}+1)(x_{i+6}+1)(x_{i+1}+1)(x_{i+5}+1), \,\,\, F_3^2(x)_i=x_i+x_{i+6}(x_{i+1}+1)x_{i+2}(x_{i+4}+1)x_{i+5}.$$
\end{example}

\subsection{Permutations via permutation trinomials in $\mathcal Q$}

In this subsection, we study shift-invariant permutations of the form $\kappa_u=\operatorname{id}+\gamma_u+\gamma_{2u}$ with $u\ge1$. By Theorem~\ref{thm:main}, $\varphi(\kappa_u)=1+z^u+z^{2u}$ and $\kappa_u$ is a permutation if and only if this trinomial is a unit in $\mathcal Q$.
By \eqref{eq:gamma-explicit-recall}, the $i$-th output coordinate of $\kappa_u$ is
\begin{align*}
	\kappa_u(x)_i
	&=x_i+\gamma_u(x)_i+\gamma_{2u}(x)_i\\
	&=x_i+x_{i+uq}\prod_{j=0}^{u-1}
	\left(
	\prod_{r\in R}(x_{i+jq+r}+1)
	\prod_{t\in T\setminus\{q\}}x_{i+jq+t}
	\right)
	+x_{i+2uq}\prod_{j=0}^{2u-1}
	\left(
	\prod_{r\in R}(x_{i+jq+r}+1)
	\prod_{t\in T\setminus\{q\}}x_{i+jq+t}
	\right).
\end{align*}

In the following, we give the iterates, orders, and inverses of $\kappa_u$.

\begin{proposition}\label{prop:kappa-properties}
Let $j_0$ be given in Theorem~\ref{thm:gamma-behavior} and $\kappa_u\in\mathcal G^*$, where $u\in[1, j_0-1]$ if $\mathcal C\ne\emptyset$ and $u\ge1$ if $\mathcal C=\emptyset$. Set $d=n/\gcd(n,q)$ and $d_u=d/\gcd(d,u)$. Then
\[\kappa_u^{\,v}=\sum_{a\preceq v}\sum_{b\preceq a}\gamma_{u(a+b)}, \,\,\, v \geq 1,
	\quad\text{and}\quad
	\operatorname{ord}(\kappa_u)=
	\begin{cases}
		2^{\lceil\log_2(j_0/u)\rceil},
		&\mathcal C\ne\emptyset,\\[6pt]
		d_u,
		&\mathcal C=\emptyset\text{ and }d\text{ is a power of two}.
	\end{cases}
	\]
	Moreover,
	\[
	\kappa_u^{-1}=
	\begin{cases}
		\displaystyle
		\operatorname{id}
		+\sum_{\substack{i\in[1,\lfloor(j_0-1)/u\rfloor]\\
				i\not\equiv2\pmod3}}\gamma_{iu},
		&\mathcal C\ne\emptyset,\\[10pt]
		\displaystyle
		\operatorname{id}
		+\sum_{\substack{i\in[1,d_u-1]\\
				i\not\equiv2\pmod3}}\gamma_{iu}
		+\sum_{\substack{i\in[d_u,2d_u-1]\\
				i+d_u\not\equiv2\pmod3}}\gamma_{iu},
		&\mathcal C=\emptyset.
	\end{cases}
	\]
\end{proposition}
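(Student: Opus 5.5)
Everything is transported through the isomorphism $\varphi$ of Theorem~\ref{thm:main}: $\varphi(\kappa_u)=1+z^u+z^{2u}$, and each claim becomes a polynomial identity in $\mathcal Q$.

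\textbf{Iterates.} Write $v=\sum_i 2^{e_i}$ in binary and use the Frobenius map in characteristic $2$:
\[
(1+z^u+z^{2u})^v=\prod_i\bigl(1+z^{2^{e_i}u}+z^{2^{e_i+1}u}\bigr).
\]
Equivalently, $(1+w+w^2)^v=\sum_{a\preceq v}w^a(1+w)^a=\sum_{a\preceq v}\sum_{b\preceq a}w^{a+b}$ with $w=z^u$, by Lucas's theorem applied twice. Pulling back through $\varphi$ with Theorem~\ref{thm:main}, and using that $\gamma_j$ already encodes the reduction modulo $p(z)$ ($\gamma_j=\mathbf0$ for $j\ge j_0$, and $\gamma_{j+d}=\gamma_j$ for $j\ge j_1$), gives the formula. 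We need the polynomial composition property in the form $\varphi^{-1}(z^N)=\gamma_N$ for all $N$, which is exactly the compatibility noted in the proof of Theorem~\ref{thm:main}.

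\textbf{Order.} If $\mathcal C\ne\emptyset$, apply Theorem~\ref{thm:algebraic-properties}(3) with $\widehat j=u$, giving $2^{\lceil\log_2(j_0/u)\rceil}$. If $\mathcal C=\emptyset$ and $d$ is a power of two, the theorem needs $\rho$, the multiplicity of $(1+z)$ in $z^u+z^{2u}=z^u(1+z^u)$. Writing $u=2^a u'$ with $u'$ odd, we have $1+z^u=(1+z^{u'})^{2^a}$, and $1+z^{u'}$ has the simple factor $1+z$; hence $\rho=2^a=\gcd(d,u)$ when $2^a\le d$. Then $d/\rho=d_u$, and the formula gives $2^{\lceil\log_2\max\{1,j_1/u,d_u\}\rceil}$. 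To reduce this to $d_u$ we must show $j_1/u\le d_u$; this is the step I expect to be the main obstacle. One caveat: the statement's requirement that $\kappa_u\in\mathcal G^*$ forces $A(1)=1+1+1=1$, which always holds, so that condition gives nothing here. If the bound $j_1\le d$ (remarked after Theorem~\ref{thm:gamma-behavior}) does not suffice when $\gcd(d,u)$ is large, I would instead argue directly. Since $d_u$ is a power of two, $(1+z^u+z^{2u})^{d_u}=1+z^{ud_u}+z^{2ud_u}$, and $ud_u$ is a multiple of $d$. I would try to show that $z^{ud_u}\equiv z^{2ud_u}$ modulo $p$ once $ud_u\ge j_1$, and handle the degenerate range separately, checking whether the statement implicitly assumes $ud_u\ge j_1$. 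Minimality follows because smaller powers $2^\nu$ leave $(1+z)$-multiplicity $\rho 2^\nu<d$.

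\textbf{Inverse.} Use $(1+w+w^2)(1+w)=1+w^3$, so that
\[
(1+w+w^2)^{-1}=(1+w)(1+w^3)^{-1}.
\]
For $\mathcal C\ne\emptyset$, $w=z^u$ is nilpotent, so $(1+w^3)^{-1}=\sum_k w^{3k}$. Multiplying by $1+w$ gives $\sum_{i\not\equiv2\ (3)}w^i$, truncated at $iu<j_0$, i.e.\ $i\le\lfloor(j_0-1)/u\rfloor$.

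For $\mathcal C=\emptyset$, the element $w^{d_u}=z^{ud_u}$ behaves like an idempotent $e$ with $e\equiv e^2$. I would verify directly that the proposed $B=\sum_{i<d_u,\,i\not\equiv2}w^i+\sum_{d_u\le i<2d_u,\,i+d_u\not\equiv2}w^i$ satisfies $(1+w+w^2)B\equiv1$. Multiplying by $1+w+w^2$ telescopes: the first block gives $1+(\text{boundary terms near }w^{d_u})$, and the second block, whose pattern is shifted by $d_u$, cancels those boundary terms. The leftover terms near $w^{2d_u}$ are identified with terms near $w^{d_u}$ via $z^{N+d}\equiv z^N$ for $N\ge j_1$, and they cancel. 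This needs a case check on $d_u\bmod 3$ and again relies on $ud_u\ge j_1$, the same obstacle as in the order computation.
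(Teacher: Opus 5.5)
Your route is the paper's: Lucas's theorem for the iterates, Theorem~\ref{thm:algebraic-properties}(3) for the orders, and $(1+w+w^2)(1+w)=1+w^3$ for the inverses. The gap is that you leave open exactly the step you call the main obstacle, $j_1/u\le d_u$. You even suggest the statement might need an extra hypothesis in a ``degenerate range''. There is no such range. Since $ud_u=ud/\gcd(d,u)=\operatorname{lcm}(u,d)$, the bound $j_1\le d$ noted after Theorem~\ref{thm:gamma-behavior} gives $ud_u\ge d\ge j_1$ for every $u\ge1$, however large $\gcd(d,u)$ is. Hence $j_1/u\le d_u$, and the maximum in Theorem~\ref{thm:algebraic-properties}(3) is the power of two $d_u$. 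Your restriction $2^a\le d$ is harmless: if $2^a>d$, then $d\mid u$ and $u\ge j_1$, so $\kappa_u=\operatorname{id}$ and $d_u=1$. The same inequality shows that $w^{d_u}=z^{ud_u}$ is idempotent modulo $p(z)=z^{j_1}(1+z^d)$, because $d\mid ud_u$ and $ud_u\ge j_1$.

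For the inverse with $\mathcal C=\emptyset$, two things are still missing. First, your case check on $d_u\bmod 3$ cannot succeed when $3\mid d_u$: the product is then $1+w^{2d_u}\equiv1+w^{d_u}\not\equiv1$. This case must be excluded using the hypothesis $\kappa_u\in\mathcal G^*$. Contrary to your remark, that hypothesis is vacuous only when $d$ is a power of two. Put $g=\gcd(d,u)$ and suppose $3\mid d_u$. Then $1+z^g+z^{2g}$ divides $1+z^{3g}$, hence divides $1+z^d$. Since $3\nmid u/g$, it also divides $1+z^u+z^{2u}$, so $\kappa_u$ would not be a unit. Second, once $3\nmid d_u$, the cleanest finish is to check that $(1+w+w^2)B=1+w^{d_u}+w^{2d_u}$ holds exactly as polynomials. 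No boundary-term cancellation modulo $p(z)$ is then needed, since the idempotence above reduces the product to $1$. This is how the paper concludes.
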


\begin{IEEEproof}
Applying the binomial theorem and Lucas's theorem twice gives
	\[
	(1+z^u+z^{2u})^v
	=\bigl(1+z^u(1+z^u)\bigr)^v
	=\sum_{a\preceq v}z^{ua}(1+z^u)^a
	=\sum_{a\preceq v}\sum_{b\preceq a}z^{u(a+b)}.
	\]
Applying $\varphi^{-1}$ gives the iterate formula.
	
If $\mathcal C\ne\emptyset$, the order formula follows from Theorem~\ref{thm:algebraic-properties} (3) with $\widehat j=u$. If $\mathcal C=\emptyset$ and $d$ is a power of two, the proof of Theorem~\ref{thm:algebraic-properties} (3) and $j_1\le d$ show that $\kappa_u^{2^\nu}=\operatorname{id}$ if and only if $d\mid2^\nu u$ for $\nu\ge0$. Since $d$ is a power of two, the smallest $2^\nu$ satisfying $d\mid2^\nu u$ is $d/\gcd(d,u)=d_u$. Hence $\operatorname{ord}(\kappa_u)=d_u$.
	
We now prove the inverse formulas. The calculations are tedious but straightforward, so we give only the main steps.
	
\textbf{Case 1: $\mathcal C\ne\emptyset$.} Since $(1+z^u+z^{2u})(1+z^u)=1+z^{3u}$, we have
	\[
	(1+z^u+z^{2u})(1+z^u)
	\sum_{i=0}^{\lfloor(j_0-1)/(3u)\rfloor}z^{3iu}
	=1+z^{3u(\lfloor(j_0-1)/(3u)\rfloor+1)}
	\equiv1\pmod{z^{j_0}}.
	\]
	Expanding $(1+z^u)\sum_{i=0}^{\lfloor(j_0-1)/(3u)\rfloor}z^{3iu}$ and reducing modulo $z^{j_0}$ gives the stated coefficients.
	
	\textbf{Case 2: $\mathcal C=\emptyset$.}
	If $3\mid d_u$, then $1+z^{\gcd(d,u)}+z^{2\gcd(d,u)}$ divides both $1+z^u+z^{2u}$ and $1+z^d$, contrary to $\kappa_u\in\mathcal G^*$. Hence $3\nmid d_u$. Multiplying $1+z^u+z^{2u}$ by the polynomial corresponding to the stated sum gives
	\[
	1+z^{d_u u}+z^{2d_u u}
	=1+z^{d_u u}(1+z^{d_u u})
	\equiv1\pmod{z^{j_1}(1+z^d)}.
	\]
The congruence holds because $d\mid d_u u$ and $d_u u\ge d\ge j_1$. Applying $\varphi^{-1}$ gives both inverse formulas.
\end{IEEEproof}

We next apply Proposition~\ref{prop:kappa-properties} with $u=1$ to construct explicit permutations from three families of landscapes satisfying the quasi-conservation criterion in Lemma~\ref{lem:quasi-cons-set}. The irreducible polynomial $1+z+z^2$ has roots of order $3$. Theorem~\ref{thm:main} and Corollary~\ref{cor:pp-period} show that $F=\operatorname{id}+\gamma_1+\gamma_2$ is a permutation if and only if $\mathcal C\ne\emptyset$ or $3\nmid d$.

\begin{corollary}\label{cor:trinomial-families}
Let $R,T\subseteq[-k,m]\setminus\{0\}$ be disjoint nonempty sets with $q\in T$, let $n>\max\{2m+k,2k+m\}$, and let $F:\mathbb F_2^n\to\mathbb F_2^n$ be the shift-invariant mapping defined by $y=F(x)$, where the $i$-th coordinate of the output $y$ is represented as
	\[
	y_i=x_i+
	\prod_{r\in R}(x_{i+r}+1)\prod_{t\in T}x_{i+t}
	+x_{i+2q}
	\prod_{r\in R}(x_{i+r}+1)(x_{i+q+r}+1)
	\prod_{t\in T\setminus\{q\}}x_{i+t}x_{i+q+t}.
	\]
	Suppose that one of the following three conditions holds, where the elements of $R$ and $T$ are taken modulo $n$.
	\begin{enumerate}
		\item[(1)] $R=\{r\}$ and $\{q,2r\}\subseteq T$, and for every $t\in T\setminus\{q\}$ we have $r+t\in T$ or $r-t\in T$;
		
		\item[(2)] $T=\{q\}$ and for every $r\in R$ we have $q-r\in R$ or $q+r\in R$;
		
		\item[(3)] $T=\{q,t\}$ with $q\ne t$. For every $r\in R$, at least one of $q-r$, $t-r$, $q+r$ and $t+r$ belongs to $R$, and at least one of $q-t$, $q+t$ and $2t$ belongs to $R$.
	\end{enumerate}
	Let $d=n/\gcd(n,q)$. Then $F$ is a permutation of $\mathbb F_2^n$ if and only if either $r\equiv t\pmod{\gcd(n,q)}$ for some $r\in R$ and $t\in T$ or $3\nmid d$.
\end{corollary}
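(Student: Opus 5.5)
The plan is to identify $F$ with $\kappa_1=\operatorname{id}+\gamma_1+\gamma_2$ and reduce everything to Theorem~\ref{thm:main} and Corollary~\ref{cor:pp-period}.

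\textbf{Step 1: $F=\kappa_1$.} By Lemma~\ref{lem:gamma-explicit}, $\gamma_1(x)_i=\ell(x)_i$. Also
\[
\gamma_2(x)_i=x_{i+2q}\,\ell^{(q)}(x)_i\,\ell^{(q)}(x)_{i+q}.
\]
Expanding this product gives exactly the third summand in the definition of $y_i$.

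\textbf{Step 2: quasi-conservation in each case.} We verify, via Lemma~\ref{lem:quasi-cons-set}, that each of conditions (1)--(3) makes $\ell$ quasi-conserved with special index $q$. For each $\delta\in\operatorname{supp}(\ell)\setminus\{q\}$ we exhibit $r\in R$ and $t\in T$ with $r+\delta\equiv t$ or $t+\delta\equiv r$.
\begin{enumerate}
\item[(1)] For $\delta=r$, use $r+r=2r\in T$. For $\delta=t\in T\setminus\{q\}$, use $r+t\in T$ (i.e.\ $r+\delta=t'$), or $r-t\in T$ (i.e.\ $(r-t)+\delta=r$).
\item[(2)] For $\delta=r\in R$, take $t=q$. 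If $q-r=r'\in R$, then $r'+\delta=q$. If $q+r=r'\in R$, then $q+\delta=r'$.
\item[(3)] The same bookkeeping applies. For $\delta=r$, the four options $q-r$, $t-r$, $q+r$, $t+r\in R$ give the required pair. For $\delta=t$, the options $q-t$, $q+t$, $2t\in R$ give $r+t=q$, $q+t=r$, $t+t=r$ respectively.
\end{enumerate}
Together with $n>\max\{2m+k,2k+m\}$, Theorem~\ref{thm:equivalence} then gives the polynomial composition property. Hence Theorem~\ref{thm:main} applies, and $F$ is a permutation if and only if $1+z+z^2$ is a unit in $\mathcal Q$.

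\textbf{Step 3: split on $\mathcal C$.} The condition ``$r\equiv t\pmod{\gcd(n,q)}$ for some $r\in R$, $t\in T$'' is exactly $\mathcal C\ne\emptyset$.

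If $\mathcal C\ne\emptyset$, then $\mathcal G^*=\mathcal G$ by Theorem~\ref{thm:main}, so $F$ is a permutation. One checks $j_0\ge 3$, since $\gamma_2\neq\mathbf 0$ requires the minimum in the definition of $j(r,t)$ to be at least $2$ in the relevant sense. If instead $j_0\le 2$, then $\gamma_2=\mathbf0$ and $F=\operatorname{id}+\gamma_1$, which is still in $\mathcal G=\mathcal G^*$. So no issue arises.

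If $\mathcal C=\emptyset$, apply Corollary~\ref{cor:pp-period}. Here $F$ is a permutation if and only if $\gcd(1+z+z^2,1+z^{d_1})=1$. Since $1+z+z^2$ is irreducible with roots of order $3$, this holds if and only if $3\nmid d_1$, i.e.\ $3\nmid d$.

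Combining the two cases yields the stated criterion. Note that when $\mathcal C\ne\emptyset$, the condition ``$3\nmid d$'' is irrelevant, because the disjunction already holds.

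\textbf{Main obstacle.} The delicate point is Step 2. The conditions are stated modulo $n$, while the proof of Theorem~\ref{thm:equivalence} uses only the modular criterion of Lemma~\ref{lem:quasi-cons-set}. So we must confirm that the congruences produced, for example $2r\in T$ read modulo $n$, match the form required by that lemma. This is a routine index check, but it has to be carried out carefully for each $\delta$.
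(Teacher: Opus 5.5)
Your proposal is correct and follows the paper's own route: check each of (1)--(3) against the criterion of Lemma~\ref{lem:quasi-cons-set}, invoke Theorem~\ref{thm:equivalence}, identify $F$ with $\operatorname{id}+\gamma_1+\gamma_2$, and then use Theorem~\ref{thm:main} when $\mathcal C\ne\emptyset$ and Corollary~\ref{cor:pp-period} (the roots of $1+z+z^2$ have order $3$) when $\mathcal C=\emptyset$. The only flaw is the aside ``one checks $j_0\ge 3$'', which is false in general: $R=\{-1,2\}$, $T=\{1\}$, $q=1$ satisfies (2) and gives $\gamma_2=\mathbf 0$, so $j_0=2$; your fallback $F=\operatorname{id}+\gamma_1\in\mathcal G=\mathcal G^*$ already covers this case, so simply drop the claim.
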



The following example gives explicit shift-invariant permutations from each of the three cases in Corollary~\ref{cor:trinomial-families}.
\begin{example}
The three shift-invariant permutations $F_j$ for $j\in[4,6]$ are defined by $y=F_j(x)$, where the $i$-th coordinate of the output $y$ is represented as follows.
	\begin{enumerate}
		\item[(1)] For $F_4:\mathbb F_2^{12}\to\mathbb F_2^{12}$, take $R=\{-2\}$, $T=\{-4,2,3\}$ and $q=3$, giving
		\[
		y_i=x_i+x_{i+3}(x_{i-2}+1)x_{i-4}x_{i+2}
		+x_{i+6}(x_{i-2}+1)x_{i-4}x_{i+2}
		(x_{i+1}+1)x_{i-1}x_{i+5}.
		\]
		
		\item[(2)] For $F_5:\mathbb F_2^{10}\to\mathbb F_2^{10}$, take $R=\{1,3\}$, $T=\{4\}$ and $q=4$, giving
		\[
		y_i=x_i+x_{i+4}(x_{i+1}+1)(x_{i+3}+1)
		+x_{i+8}\prod_{j=0}^{3}(x_{i+2j+1}+1).
		\]
		
		\item[(3)] For $F_6:\mathbb F_2^8\to\mathbb F_2^8$, take $R=\{-1,2\}$, $T=\{1,3\}$ and $q=3$, giving
		\[
		y_i=x_i+x_{i+3}(x_{i-1}+1)(x_{i+2}+1)x_{i+1}
		+x_{i+6}(x_{i-1}+1)(x_{i+2}+1)(x_{i+5}+1)x_{i+1}x_{i+4}.
		\]
	\end{enumerate}
		For $F_4$ and $F_5$, we have $\mathcal C=\emptyset$, with $(j_1,d)=(4,4)$ and $(3,5)$, respectively. For $F_6$, we have $\mathcal C\ne\emptyset$ and $j_0=3$. Proposition~\ref{prop:kappa-properties} gives $\operatorname{ord}(F_4)=\operatorname{ord}(F_6)=4$. The polynomial $1+z+z^2$ has order $4$ modulo $z^3$ and order $15$ modulo $1+z^5$. Theorem~\ref{thm:main} gives $\operatorname{ord}(F_5)=\operatorname{lcm}(4,15)=60$.
	
Proposition~\ref{prop:kappa-properties} gives
	\(
	F_4^{-1}=\operatorname{id}+\gamma_1+\gamma_3+\gamma_5+\gamma_6\),
	\(F_5^{-1}=\operatorname{id}+\gamma_1+\gamma_4+\gamma_5+\gamma_7\) and
	\(F_6^{-1}=\operatorname{id}+\gamma_1.
	\)
	For $x=F_4^{-1}(y)$, the $i$-th coordinate of the output is
	\[
	x_i=y_i+\sum_{v\in\{1,3,5,6\}}y_{i+3v}
	\prod_{j=0}^{v-1}
	\left((y_{i+3j-2}+1)y_{i+3j-4}y_{i+3j+2}\right).
	\]
	For $x=F_5^{-1}(y)$, the $i$-th coordinate of the output is
	\[
	x_i=y_i+y_{i+4}(y_{i+1}+1)(y_{i+3}+1)
	+(y_i+y_{i+6}+y_{i+8})\prod_{j=0}^{4}(y_{i+2j+1}+1).
	\]
	For $x=F_6^{-1}(y)$, the $i$-th coordinate of the output is
	\[
	x_i=y_i+y_{i+3}(y_{i-1}+1)(y_{i+2}+1)y_{i+1}.
	\]
Moreover, from Proposition~\ref{prop:kappa-properties} we have
	\begin{align*}
		F_4^2(x)_i
		&=x_i+\sum_{v\in\{2,4\}}x_{i+3v}
		\prod_{j=0}^{v-1}
		\left((x_{i+3j-2}+1)x_{i+3j-4}x_{i+3j+2}\right),\\
		F_5^2(x)_i
		&=x_i+x_{i+8}\prod_{j=0}^{3}(x_{i+2j+1}+1)
		+x_{i+6}\prod_{j=0}^{4}(x_{i+2j+1}+1),\\
		F_6^2(x)_i
		&=x_i+x_{i+6}(x_{i-1}+1)(x_{i+2}+1)(x_{i+5}+1)x_{i+1}x_{i+4}.
	\end{align*}
\end{example}

\subsection{Representative families of permutations on $\mathbb F_2^n$}
\label{subsec:tables}

In this subsection, we give representative families of shift-invariant permutations of the forms $\beta_1=\operatorname{id}+\gamma_1$ and $\kappa_1=\operatorname{id}+\gamma_1+\gamma_2$ from the preceding two subsections. Recall that \eqref{eq:gamma-explicit-recall} gives
\[
\gamma_1(x)_i=\prod_{r\in R}(x_{i+r}+1)\prod_{t\in T}x_{i+t},
\quad
\gamma_2(x)_i=x_{i+2q}
\prod_{r\in R}(x_{i+r}+1)(x_{i+q+r}+1)
\prod_{t\in T\setminus\{q\}}x_{i+t}x_{i+q+t}.
\]
To determine the dimensions of vector spaces in which $\beta_1$ and $\kappa_1$ are permutations, we express the criterion in Theorem~\ref{thm:main} as conditions on $n$ in Proposition~\ref{prop:xi-tables}. We list these families in Tables~\ref{tab:landscapes}, \ref{tab:landscapes-Tq}, and~\ref{tab:landscapes-Tqt}. All listed landscapes satisfy $|\supp|\le4$ and $\supp\subseteq[-4,4]$.

We use the following equivalences to reduce repetition in the tables. Elementary equivalent functions~\cite[Definition~2.10]{haugland2025shift} can be defined by two landscapes related by reversing the string or exchanging $0$ and $1$. These functions are permutations in the same dimensions. 
Multiplying all offsets by an integer $a$ with $\gcd(a,n)=1$ also gives an equivalent mapping under the change of coordinates $x_i\mapsto x_{ai}$. For example, the landscapes $1{*}01$ and $1{-}{*}{-}0{-}1$ are equivalent with $a=2$ when $n$ is odd, and the tables include only $1{*}01$.

For fixed $R,T$ and $q$, let $F$ denote the family of mappings defined by $\beta_1$ or by $\kappa_1$ for $n\ge1$. Suppose that two distinct inputs in $\mathbb F_2^t$ have the same image. Repeating each input $v$ times gives two distinct inputs in $\mathbb F_2^{vt}$ with the same image for every $v\ge1$. Thus the dimensions of vector spaces in which $F$ is not a permutation are closed under taking multiples. Let $\xi(F)$ consist of the divisibility-minimal positive integers $t$ for which $F$ is not a permutation of $\mathbb F_2^t$, as in~\cite{daemen1995cipher,kriepke2025siblings}. Then $F$ is a permutation of $\mathbb F_2^n$ if and only if $t\nmid n$ for every $t\in\xi(F)$. For example, $\xi(F)=\{2\}$ means that $F$ is a permutation exactly when $n$ is odd.

The following proposition gives conditions for $\beta_1$ and $\kappa_1$ to be permutations in terms of data that do not depend on $n$.

\begin{proposition}\label{prop:xi-tables}
Let $n>\max\{2m+k,2k+m\}$ and suppose that $\ell$ is quasi-conserved with special index $q$ on $\mathbb F_2^n$. Set
	\[
	\mathcal D=
	\left\{
	g\ge1:g\mid q,\quad
	r\not\equiv t\pmod g
	\text{ for every }r\in R\text{ and }t\in T
	\right\}.
	\]
For a nonzero integer $a$, let $v_3(a)$ be the largest integer $e$ such that $3^e\mid a$. Then
\begin{enumerate}
\item[(1)] $\beta_1$ is a permutation of $\mathbb F_2^n$ if and only if $g\nmid n$ for every $g\in\mathcal D$.
\item[(2)] $\kappa_1$ is a permutation of $\mathbb F_2^n$ if and only if $3^{v_3(q/g)+1}g\nmid n$ for every $g\in\mathcal D$.
\end{enumerate}
\end{proposition}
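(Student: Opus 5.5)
The plan is to reduce both statements to the earlier permutation criteria and then rewrite the condition on $\mathcal C$ and $d$ as a divisibility condition on $n$ alone. Since $\ell$ is quasi-conserved and $n>\max\{2m+k,2k+m\}$, Theorem~\ref{thm:equivalence} gives the polynomial composition property, so Theorem~\ref{thm:main} applies. Set $g_0=\gcd(n,q)$. Then $\mathcal C=\emptyset$ means exactly that $r\not\equiv t\pmod{g_0}$ for all $r\in R$, $t\in T$. Since $g_0\mid q$, this says $g_0\in\mathcal D$.

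\textbf{Part (1).} Since $\varphi(\beta_1)=1+z$, Theorem~\ref{thm:main} shows that $\beta_1$ is a permutation if and only if $\mathcal C\ne\emptyset$, i.e.\ if and only if $g_0\notin\mathcal D$. It remains to show that $g_0\notin\mathcal D$ is equivalent to $g\nmid n$ for every $g\in\mathcal D$.
\begin{itemize}
\item If some $g\in\mathcal D$ divides $n$, then $g\mid\gcd(n,q)=g_0$. Any congruence $r\equiv t\pmod{g_0}$ would then give $r\equiv t\pmod g$, which is impossible. Hence $g_0\in\mathcal D$.
\item Conversely, if $g_0\in\mathcal D$, then $g_0\mid n$, so $g=g_0$ is an element of $\mathcal D$ dividing $n$.
\end{itemize}
Thus $\mathcal D$ is closed under taking divisors, which is what makes the reduction to $g_0$ work.

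\textbf{Part (2).} Since $\varphi(\kappa_1)=1+z+z^2$, Theorem~\ref{thm:main} and Corollary~\ref{cor:pp-period} (as remarked just before Corollary~\ref{cor:trinomial-families}) show that $\kappa_1$ is a permutation if and only if $\mathcal C\ne\emptyset$ or $3\nmid d$. Here $d=n/g_0$. Equivalently, $\kappa_1$ fails to be a permutation if and only if $g_0\in\mathcal D$ and $3\mid n/g_0$. I would show this is equivalent to the existence of $g\in\mathcal D$ with $3^{v_3(q/g)+1}g\mid n$.

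For the forward direction, take $g=g_0$. From $3\mid n/g_0$ we get $v_3(n)>v_3(g_0)=\min(v_3(n),v_3(q))$. Hence $v_3(g_0)=v_3(q)$ and $v_3(q/g_0)=0$. The required condition becomes $3g_0\mid n$, which holds.

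For the converse, suppose $g\in\mathcal D$ and $3^{e+1}g\mid n$, where $e=v_3(q/g)$. Then $g\mid n$ and $g\mid q$, so $g\mid g_0$, and $g_0\in\mathcal D$ by the divisor-closure from Part~(1). Moreover, $v_3(n)\ge e+1+v_3(g)=v_3(q)+1>v_3(q)$. It follows that $v_3(g_0)=v_3(q)<v_3(n)$, so $3\mid n/g_0=d$.

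I expect the main obstacle to be this 3-adic bookkeeping. The exponent $v_3(q/g)$ exists precisely to make the condition independent of $n$ while forcing $v_3(n)>v_3(q)$, so the key points are $v_3(g_0)=\min(v_3(n),v_3(q))$ and the inequality $v_3(n)>v_3(q)$. The rest is a direct appeal to Theorem~\ref{thm:main}.
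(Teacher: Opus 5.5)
Your proposal is correct and follows essentially the same route as the paper: identify $\mathcal C=\emptyset$ with $\gcd(n,q)\in\mathcal D$, apply Theorem~\ref{thm:main} and Corollary~\ref{cor:pp-period}, and then do the same $3$-adic bookkeeping, namely that $3^{v_3(q/g)+1}g\mid n$ is equivalent to $g\mid n$ together with $v_3(n)>v_3(q)$. One wording slip: $\mathcal D$ is not closed under taking divisors (for instance $1\notin\mathcal D$); your bullet argument actually proves, and Part~(2) actually uses, that $\mathcal D$ is closed under passing to multiples that still divide $q$, and the logic of both parts is unaffected by this.
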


\begin{IEEEproof}
Let $q_0=\gcd(n,q)$ and $d=n/q_0$. Recall that $\mathcal C=\{(r,t)\in R\times T:r\equiv t\pmod{q_0}\}$. Hence $\mathcal C=\emptyset$ if and only if $r\not\equiv t\pmod{q_0}$ for every $(r,t)\in R\times T$, which holds if and only if $q_0\in\mathcal D$.
\begin{enumerate}
\item[(1)] By Theorem~\ref{thm:main}, $\beta_1$ is not a permutation if and only if $q_0\in\mathcal D$. It therefore suffices to prove that $q_0\in\mathcal D$ if and only if $g\mid n$ for some $g\in\mathcal D$.
If $q_0\in\mathcal D$, take $g=q_0$ since $q_0\mid n$. Conversely, suppose that $g\in\mathcal D$ divides $n$. Since $g\mid q$, we have $g\mid q_0$. If $r\equiv t\pmod{q_0}$ for some $r\in R$ and $t\in T$, then $r\equiv t\pmod g$, contrary to $g\in\mathcal D$. Thus $q_0\in\mathcal D$. This proves (1).
\item[(2)] If $\mathcal C=\emptyset$, then $p(z)=z^{j_1}(1+z^d)$. The irreducible polynomial $1+z+z^2$ has roots of order $3$, so it has a common factor with $p(z)$ if and only if $3\mid d$. By Theorem~\ref{thm:main} and Corollary~\ref{cor:pp-period}, $\kappa_1$ is not a permutation if and only if $q_0\in\mathcal D$ and $3\mid d$.
Since $v_3(d)=v_3(n)-v_3(q_0)=v_3(n)-\min\{v_3(n),v_3(q)\}$, we have $3\mid d$ if and only if $v_3(n)>v_3(q)$. For each $g\in\mathcal D$, the condition $3^{v_3(q/g)+1}g\mid n$ is equivalent to $g\mid n$ and $v_3(n)-v_3(g)\ge v_3(q)-v_3(g)+1$, or equivalently, $g\mid n$ and $v_3(n)>v_3(q)$.
It is known that $q_0\in\mathcal D$ if and only if $g\mid n$ for some $g\in\mathcal D$, and $3\mid d$ if and only if $v_3(n)>v_3(q)$. This shows that there exists $g\in\mathcal D$ satisfying $3^{v_3(q/g)+1}g\mid n$ if and only if $q_0\in\mathcal D$ and $3\mid d$. This is exactly the condition for $\kappa_1$ not being a permutation. This proves (2) for the case $\mathcal C=\emptyset$. If $\mathcal C\ne\emptyset$, then $\kappa_1$ is a permutation by Theorem~\ref{thm:main}. Since $q_0\notin\mathcal D$
and $g\nmid n$, we have $3^{v_3(q/g)+1}g\nmid n$ for every $g\in\mathcal D$. This completes the proof of (2).
	\end{enumerate}
\end{IEEEproof}

We now explain Tables~\ref{tab:landscapes}, \ref{tab:landscapes-Tq}, and~\ref{tab:landscapes-Tqt}. $R$ and $T$ are respectively the sets of offsets corresponding to $0$ and $1$ in the landscape notation of Section~\ref{sec:prelim}, and $\supp=R\cup T$ is the support of the landscape. The parameter $q\in T$ is a special index. For a conserved landscape, the column $q$ contains ``--'' and any $q\in T$ may be chosen. The sets $\xi(\beta_1)$ and $\xi(\kappa_1)$ list the divisibility-minimal dimensions of the vector spaces on which $\beta_1$ and $\kappa_1$ are not permutations, respectively. Taking the last row in TABLE~\ref{tab:landscapes} as an example, we use an example to illustrate the meaning of the three TABLES.
\begin{example}
The last row of TABLE~\ref{tab:landscapes} gives the quasi-conserved landscape $1{*}{-}011$ with $R=\{2\}$, $T=\{-1,3,4\}$ and $q=4$. $\xi(\beta_1)=\{4\}$ means that $\beta_1$ is a permutation if and only if $4\nmid n$, while $\xi(\kappa_1)=\{12\}$ means that $\kappa_1$ is a permutation if and only if $12\nmid n$. The corresponding mappings from $\mathbb F_2^n$ to itself have the following representations:
\[\beta_1=\operatorname{id}+(S^2+\mathbf1)\odot S^{-1}\odot S^3\odot S^4,\quad \kappa_1=\operatorname{id}+(S^2+\mathbf1)\odot S^{-1}\odot S^3\odot \bigl(S^4+(S^6+\mathbf1)\odot S^7\odot S^8\bigr). \]
Their $i$-th coordinate functions are given by
\[\beta_1(x)_i=x_i+(x_{i+2}+1)x_{i-1}x_{i+3}x_{i+4},\quad \kappa_1(x)_i=x_i+(x_{i+2}+1)x_{i-1}x_{i+3} \bigl(x_{i+4}+(x_{i+6}+1)x_{i+7}x_{i+8}\bigr),\]
where all coordinate subscripts are taken modulo $n$.
\end{example}

\begin{table}[htbp]
	\centering
	\caption{Some permutations from Corollaries~\ref{cor:bin-families}(1) and~\ref{cor:trinomial-families}(1) with $|R|=1$}
	\label{tab:landscapes}
	\begin{tabular}{ccccc|c c|c c}
		\toprule
		$R$ & $T$ & $q$ & property & Landscape &
		$\xi(\beta_1)$ & Ref. & $\xi(\kappa_1)$ & Ref. \\
		\midrule
		$\{1\}$ & $\{2\}$ & $2$ & quasi-conserved & $*01$ & $\{2\}$ & \cite{daemen1995cipher,kriepke2024algebraic,kriepke2026shift,liu2022inverse,liu2026finding,lyu2025generalized} & $\{6\}$ & \cite{daemen1995cipher,kriepke2025siblings} \\
		\midrule
		$\{1\}$ & $\{-1,2\}$ & -- & conserved & $1*01$ & $\emptyset$ & \cite{daemen1995cipher,kriepke2026shift} & $\emptyset$ & -- \\
		$\{1\}$ & $\{2,3\}$ & $3$ & quasi-conserved & $*011$ & $\{3\}$ & \cite{daemen1995cipher,kriepke2026shift,liu2026finding} & $\{9\}$ & -- \\
		$\{2\}$ & $\{1,4\}$ & $4$ & quasi-conserved & $*10-1$ & $\{4\}$ & \cite{daemen1995cipher,kriepke2026shift} & $\{12\}$ & -- \\
		\midrule
		$\{1\}$ & $\{-4,-1,2\}$ & $-4$ & quasi-conserved & $1--1*01$ & $\{4\}$ & -- & $\{12\}$ & -- \\
		$\{1\}$ & $\{-3,-1,2\}$ & $-3$ & quasi-conserved & $1-1*01$ & $\{3\}$ & -- & $\{9\}$ & -- \\
		$\{1\}$ & $\{-3,2,4\}$ & $2$ & quasi-conserved & $1--*01-1$ & $\emptyset$ & -- & $\emptyset$ & -- \\
		$\{1\}$ & $\{-2,-1,2\}$ & -- & conserved & $11*01$ & $\emptyset$ & \cite{daemen1995cipher,liu2026finding} & $\emptyset$ & -- \\
		$\{1\}$ & $\{-2,2,3\}$ & -- & conserved & $1-*011$ & $\emptyset$ & \cite{daemen1995cipher,haugland2026new,kriepke2026shift,liu2026finding} & $\emptyset$ & -- \\
		$\{1\}$ & $\{-1,2,3\}$ & $3$ & quasi-conserved & $1*011$ & $\{3\}$ & \cite{daemen1995cipher,kriepke2026shift} & $\{9\}$ & -- \\
		$\{1\}$ & $\{-1,2,4\}$ & $4$ & quasi-conserved & $1*01-1$ & $\{4\}$ & -- & $\{12\}$ & -- \\
		$\{1\}$ & $\{2,3,4\}$ & $4$ & quasi-conserved & $*0111$ & $\{4\}$ & \cite{daemen1995cipher,liu2026finding} & $\{12\}$ & -- \\
		$\{2\}$ & $\{-3,-2,4\}$ & $-3$ & quasi-conserved & $11-*-0-1$ & $\{3\}$ & -- & $\{9\}$ & -- \\
		$\{2\}$ & $\{-2,-1,4\}$ & $-1$ & quasi-conserved & $11*-0-1$ & $\emptyset$ & -- & $\emptyset$ & -- \\
		$\{2\}$ & $\{-2,1,4\}$ & -- & conserved & $1-*10-1$ & $\emptyset$ & -- & $\emptyset$ & -- \\
		$\{2\}$ & $\{-2,3,4\}$ & $3$ & quasi-conserved & $1-*-011$ & $\{3\}$ & -- & $\{9\}$ & -- \\
		$\{2\}$ & $\{-1,1,4\}$ & $4$ & quasi-conserved & $1*10-1$ & $\{4\}$ & \cite{daemen1995cipher,kriepke2026shift,liu2026finding} & $\{12\}$ & -- \\
		$\{2\}$ & $\{-1,3,4\}$ & $4$ & quasi-conserved & $1*-011$ & $\{4\}$ & -- & $\{12\}$ & -- \\
		\bottomrule
	\end{tabular}
\end{table}

\begin{table}[htbp]
	\centering
	\caption{Some permutations from Corollaries~\ref{cor:bin-families}(2) and~\ref{cor:trinomial-families}(2) with $|T|=1$}
	\label{tab:landscapes-Tq}
	\begin{tabular}{ccccc|c c|c c}
		\toprule
		$R$ & $T$ & $q$ & property & Landscape &
		$\xi(\beta_1)$ & Ref. & $\xi(\kappa_1)$ & Ref. \\
		\midrule
		$\{-3,4\}$ & $\{1\}$ & $1$ & quasi-conserved & $0--*1--0$ & $\emptyset$ & -- & $\emptyset$ & -- \\
		$\{-2,3\}$ & $\{1\}$ & $1$ & quasi-conserved & $0-*1-0$ & $\emptyset$ & -- & $\emptyset$ & -- \\
		$\{-1,1\}$ & $\{2\}$ & $2$ & quasi-conserved & $0*01$ & $\{2\}$ & \cite{daemen1995cipher,kriepke2026shift,liu2026finding} & $\{6\}$ & -- \\
		$\{-1,3\}$ & $\{2\}$ & $2$ & quasi-conserved & $0*-10$ & $\{2\}$ & \cite{daemen1995cipher,kriepke2026shift,liu2026finding} & $\{6\}$ & -- \\
		$\{-1,4\}$ & $\{3\}$ & $3$ & quasi-conserved & $0*--10$ & $\{3\}$ & -- & $\{9\}$ & -- \\
		$\{1,2\}$ & $\{3\}$ & $3$ & quasi-conserved & $*001$ & $\{3\}$ & \cite{daemen1995cipher,kriepke2026shift,liu2026finding,lyu2025generalized} & $\{9\}$ & -- \\
		$\{1,3\}$ & $\{4\}$ & $4$ & quasi-conserved & $*0-01$ & $\{2\}$ & \cite{daemen1995cipher,kriepke2026shift,liu2026finding} & $\{6\}$ & -- \\
		\midrule
		$\{-4,-3,4\}$ & $\{1\}$ & $1$ & quasi-conserved & $00--*1--0$ & $\emptyset$ & -- & $\emptyset$ & -- \\
		$\{-3,-2,3\}$ & $\{1\}$ & $1$ & quasi-conserved & $00-*1-0$ & $\emptyset$ & -- & $\emptyset$ & -- \\
		$\{-3,3,4\}$ & $\{1\}$ & $1$ & quasi-conserved & $0--*1-00$ & $\emptyset$ & -- & $\emptyset$ & -- \\
		$\{-3,-1,1\}$ & $\{2\}$ & $2$ & quasi-conserved & $0-0*01$ & $\{2\}$ & -- & $\{6\}$ & -- \\
		$\{-3,-1,3\}$ & $\{2\}$ & $2$ & quasi-conserved & $0-0*-10$ & $\{2\}$ & -- & $\{6\}$ & -- \\
		$\{-1,1,3\}$ & $\{2\}$ & $2$ & quasi-conserved & $0*010$ & $\{2\}$ & \cite{daemen1995cipher,kriepke2026shift,liu2026finding} & $\{6\}$ & -- \\
		$\{-4,-1,4\}$ & $\{3\}$ & $3$ & quasi-conserved & $0--0*--10$ & $\{3\}$ & -- & $\{9\}$ & -- \\
		$\{-2,1,2\}$ & $\{3\}$ & $3$ & quasi-conserved & $0-*001$ & $\{3\}$ & -- & $\{9\}$ & -- \\
		$\{-1,1,4\}$ & $\{3\}$ & $3$ & quasi-conserved & $0*0-10$ & $\{3\}$ & -- & $\{9\}$ & -- \\
		$\{-1,1,2\}$ & $\{3\}$ & $3$ & quasi-conserved & $0*001$ & $\{3\}$ & \cite{daemen1995cipher,kriepke2026shift,liu2026finding} & $\{9\}$ & -- \\
		$\{-3,1,3\}$ & $\{4\}$ & $4$ & quasi-conserved & $0--*0-01$ & $\{2\}$ & -- & $\{6\}$ & -- \\
		$\{-1,1,3\}$ & $\{4\}$ & $4$ & quasi-conserved & $0*0-01$ & $\{2\}$ & -- & $\{6\}$ & -- \\
		$\{1,2,3\}$ & $\{4\}$ & $4$ & quasi-conserved & $*0001$ & $\{4\}$ & \cite{daemen1995cipher,kriepke2026shift,liu2026finding,lyu2025generalized} & $\{12\}$ & -- \\
		\bottomrule
	\end{tabular}
\end{table}

\begin{table}[htbp]
	\centering
	\caption{Some permutations from Corollaries~\ref{cor:bin-families}(3) and~\ref{cor:trinomial-families}(3) with $|T|=2$}
	\label{tab:landscapes-Tqt}
	\begin{tabular}{ccccc|c c|c c}
		\toprule
		$R$ & $T$ & $q$ & property & Landscape &
		$\xi(\beta_1)$ & Ref. & $\xi(\kappa_1)$ & Ref. \\
		\midrule
		$\{-3,2\}$ & $\{1,-1\}$ & -- & conserved & $0-1*10$ & $\emptyset$ & \cite{haugland2026new} & $\emptyset$ & -- \\
		$\{2,3\}$ & $\{1,-1\}$ & -- & conserved & $1*100$ & $\emptyset$ & \cite{daemen1995cipher,kriepke2026shift,liu2026finding} & $\emptyset$ & -- \\
		$\{-3,4\}$ & $\{1,2\}$ & $1$ & quasi-conserved & $0--*11-0$ & $\emptyset$ & -- & $\emptyset$ & -- \\
		$\{-2,4\}$ & $\{1,2\}$ & $1$ & quasi-conserved & $0-*11-0$ & $\emptyset$ & -- & $\emptyset$ & -- \\
		$\{-2,3\}$ & $\{1,2\}$ & -- & conserved & $0-*110$ & $\emptyset$ & \cite{haugland2026new} & $\emptyset$ & -- \\
		$\{-1,3\}$ & $\{1,2\}$ & -- & conserved & $0*110$ & $\emptyset$ & \cite{daemen1995cipher,kriepke2026shift} & $\emptyset$ & -- \\
		$\{-4,2\}$ & $\{1,-2\}$ & -- & conserved & $0-1-*10$ & $\emptyset$ & -- & $\emptyset$ & -- \\
		$\{-1,2\}$ & $\{1,-2\}$ & -- & conserved & $10*10$ & $\emptyset$ & \cite{daemen1995cipher,kriepke2026shift,liu2026finding} & $\emptyset$ & -- \\
		$\{-3,4\}$ & $\{1,3\}$ & -- & conserved & $0--*1-10$ & $\emptyset$ & -- & $\emptyset$ & -- \\
		$\{-1,4\}$ & $\{1,3\}$ & -- & conserved & $0*1-10$ & $\emptyset$ & \cite{haugland2026new} & $\emptyset$ & -- \\
		$\{-2,3\}$ & $\{1,-3\}$ & -- & conserved & $10-*1-0$ & $\emptyset$ & -- & $\emptyset$ & -- \\
		$\{-3,4\}$ & $\{1,-4\}$ & -- & conserved & $10--*1--0$ & $\emptyset$ & -- & $\emptyset$ & -- \\
		$\{-4,3\}$ & $\{2,-1\}$ & $2$ & quasi-conserved & $0--1*-10$ & $\emptyset$ & -- & $\emptyset$ & -- \\
		$\{-1,4\}$ & $\{2,3\}$ & -- & conserved & $0*-110$ & $\emptyset$ & \cite{haugland2026new} & $\emptyset$ & -- \\
		$\{-1,2\}$ & $\{3,1\}$ & $3$ & quasi-conserved & $0*101$ & $\{3\}$ & \cite{daemen1995cipher,kriepke2026shift,liu2026finding} & $\{9\}$ & -- \\
		$\{-2,1\}$ & $\{3,-1\}$ & $3$ & quasi-conserved & $01*0-1$ & $\{3\}$ & -- & $\{9\}$ & -- \\
		$\{-2,4\}$ & $\{3,2\}$ & $3$ & quasi-conserved & $0-*-110$ & $\{3\}$ & -- & $\{9\}$ & -- \\
		$\{-2,1\}$ & $\{3,2\}$ & $3$ & quasi-conserved & $0-*011$ & $\{3\}$ & -- & $\{9\}$ & -- \\
		$\{-4,2\}$ & $\{3,-2\}$ & $3$ & quasi-conserved & $0-1-*-01$ & $\{3\}$ & -- & $\{9\}$ & -- \\
		$\{1,4\}$ & $\{-3,2\}$ & $-3$ & quasi-conserved & $1--*01-0$ & $\{3\}$ & -- & $\{9\}$ & -- \\
		$\{-1,2\}$ & $\{3,4\}$ & $3$ & quasi-conserved & $0*-011$ & $\{3\}$ & -- & $\{9\}$ & -- \\
		$\{-2,3\}$ & $\{4,1\}$ & $4$ & quasi-conserved & $0-*1-01$ & $\{4\}$ & -- & $\{12\}$ & -- \\
		$\{-2,2\}$ & $\{4,1\}$ & $4$ & quasi-conserved & $0-*10-1$ & $\{4\}$ & -- & $\{12\}$ & -- \\
		$\{-1,2\}$ & $\{4,1\}$ & $4$ & quasi-conserved & $0*10-1$ & $\{4\}$ & -- & $\{12\}$ & -- \\
		$\{-2,1\}$ & $\{4,-1\}$ & $4$ & quasi-conserved & $01*0--1$ & $\{4\}$ & -- & $\{12\}$ & -- \\
		$\{-2,2\}$ & $\{4,-1\}$ & $4$ & quasi-conserved & $01*-0-1$ & $\{4\}$ & -- & $\{12\}$ & -- \\
		$\{1,2\}$ & $\{4,3\}$ & $4$ & quasi-conserved & $*0011$ & $\{4\}$ & \cite{daemen1995cipher,kriepke2026shift,liu2026finding} & $\{12\}$ & -- \\
		\bottomrule
	\end{tabular}
\end{table}

\section{Conclusion and Future Work}\label{sec:con}

In this paper, we studied shift-invariant permutations arising from quasi-conserved landscapes on $\mathbb F_2^n$. Building on the construction in~\cite{kriepke2026shift}, we proved that quasi-conservation and the polynomial composition property are equivalent for the landscapes considered here when $n>\max\{2m+k,2k+m\}$. We established the monoid isomorphism $(\mathcal G,\circ)\cong(\mathcal M,\cdot)$, under which composition corresponds to multiplication in $\mathcal Q=\mathbb F_2[z]/\langle p(z)\rangle$. To identify $p(z)$ and construct this isomorphism, we determined the long-term behavior of $\{\gamma_j\}_{j\ge0}$ and established a basis for its linear span. This correspondence gives a permutation criterion, formulas for inverses and iterates, and related order formulas.

To apply these results to concrete examples, we specialized them to permutations represented by binomials and trinomials in $\mathcal Q$. We also gave conditions on $n$ for these mappings to be permutations and listed more than 100 representative constructions of shift-invariant permutations in Tables~\ref{tab:landscapes}, \ref{tab:landscapes-Tq}, and~\ref{tab:landscapes-Tqt}. These families include several constructions previously studied in~\cite{daemen1995cipher,haugland2026new,kriepke2024algebraic,kriepke2026shift,liu2022inverse,liu2026finding,lyu2025generalized}.

When $m+k<n\le\max\{2m+k,2k+m\}$, some concrete examples show that the following two statements may or may not be equivalent: a sequence of functions $\{\gamma_j\}_{j\geq 0}$ from $\mathbb F_2^n$ to itself satisfies polynomial composition property,
and their corresponding landscape $\ell$ is quasi-conserved. A natural direction for future work is to further clarify which functions from  $\mathbb F_2^n$ to itself have the polynomial composition property when $m+k<n\le\max\{2m+k,2k+m\}$.
Another direction is to generalize the recurrence in \eqref{eq:gammaj}, or relax the definition of quasi-conservation to study permutation constructions that retain a weaker algebraic structure under composition. Developing permutation criteria and methods for computing inverses and iterates in these settings may also help explain further families listed in~\cite{daemen1995cipher}.

\clearpage
\appendix[Quasi-Conservation and Polynomial Composition on $\mathbb F_2^{\mathbb Z}$]
\label{app:pcp}

The main text studies quasi-conservation and the polynomial composition property directly on $\mathbb F_2^n$, while Kriepke~\cite{kriepke2026shift} formulates these properties on $\mathbb F_2^{\mathbb Z}$. In this appendix, we prove that the two formulations are equivalent under a suitable dimension condition and give counterexamples showing that the bound in Theorem~\ref{thm:four-equivalence} is sharp and that the assumption $q\ne0$ cannot be omitted.
Following the formulation in~\cite[Section~3]{kriepke2026shift}, let $\mathbb F_2^{\mathbb Z}$ denote the set of functions $\mathbb Z\to\mathbb F_2$, written as two-sided sequences $x=(x_i)_{i\in\mathbb Z}$. The left shift is defined by $(Sx)_i=x_{i+1}$
for $i\in\mathbb Z$. A mapping $F:\mathbb F_2^{\mathbb Z}\to\mathbb F_2^{\mathbb Z}$ is shift-invariant if $F\circ S=S\circ F$.

Unless stated otherwise, let $R,T\subseteq[-k,m]\setminus\{0\}$ be the same disjoint nonempty sets as in the main text and fix $q\in T$. Define $\ell:\mathbb F_2^{\mathbb Z}\to\mathbb F_2^{\mathbb Z}$ by
\[
\ell(x)_i=\prod_{r\in R}(x_{i+r}+1)\prod_{t\in T}x_{i+t}
\]
for every $i\in\mathbb Z$. The mapping $\ell^{(q)}$ and the sequence $\{\gamma_j\}_{j\ge0}$ are defined by the same formulas as in the main text, with all coordinate indices taken in $\mathbb Z$. On $\mathbb F_2^{\mathbb Z}$, quasi-conservation with special index $q$ means $S^\delta\ell\odot\ell=\mathbf0$ for every $\delta\in\supp\setminus\{q\}$, and the polynomial composition property is given by \eqref{eq:polynomial-composition-property} with all mappings acting on $\mathbb F_2^{\mathbb Z}$.

As in~\cite[Remark~3.2]{kriepke2026shift}, the finite-dimensional mappings are obtained by reducing every coordinate subscript modulo $n$. Equivalently, the $n$-periodic sequences in $\mathbb F_2^{\mathbb Z}$ are naturally identified with $\mathbb F_2^n$, and the mappings above restrict to these sequences.

\begin{theorem}\label{thm:four-equivalence}
	Let $n>\max\{2m+k,2k+m\}$. The following four statements are equivalent:
	\begin{enumerate}
		\item[(i)] $\ell$ is quasi-conserved with special index $q$ on $\mathbb F_2^{\mathbb Z}$;
		\item[(ii)] $\ell$ is quasi-conserved with special index $q$ on $\mathbb F_2^n$;
		\item[(iii)] $\{\gamma_j\}_{j\ge0}$ satisfies the polynomial composition property on $\mathbb F_2^{\mathbb Z}$;
		\item[(iv)] $\{\gamma_j\}_{j\ge0}$ satisfies the polynomial composition property on $\mathbb F_2^n$.
	\end{enumerate}
\end{theorem}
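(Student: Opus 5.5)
We will prove the chain (i)$\Leftrightarrow$(ii), (ii)$\Leftrightarrow$(iv), and (i)$\Leftrightarrow$(iii). The equivalence (ii)$\Leftrightarrow$(iv) is exactly Theorem~\ref{thm:equivalence}, so nothing new is needed there.

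\textbf{Step 1: (i)$\Leftrightarrow$(ii).} On $\mathbb F_2^{\mathbb Z}$ the argument of Lemma~\ref{lem:quasi-cons-set} goes through verbatim with congruences replaced by equalities. Coordinates are now genuinely free, so $S^\delta\ell\odot\ell=\mathbf0$ holds if and only if $r+\delta=t$ or $t+\delta=r$ for some $r\in R$, $t\in T$. On $\mathbb F_2^n$, Lemma~\ref{lem:quasi-cons-set} gives the same condition modulo $n$. Since $r,t,\delta\in[-k,m]$, we have $|r+\delta-t|,|t+\delta-r|\le\max\{2m+k,2k+m\}<n$, so the congruence forces equality, exactly as in the proof of Lemma~\ref{lem:two}. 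Equalities trivially imply the congruences, so the two conditions coincide for every $\delta\in\supp\setminus\{q\}$.

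\textbf{Step 2: (i)$\Rightarrow$(iii).} We rerun the proofs of Lemma~\ref{lem:two}, Lemma~\ref{lem:combined} and the forward direction of Theorem~\ref{thm:equivalence} on $\mathbb F_2^{\mathbb Z}$. These arguments only use three facts:
\begin{itemize}
\item the closed form \eqref{eq:gamma-explicit};
\item the integer relations $r+\delta=t$ or $t+\delta=r$ furnished by quasi-conservation;
\item the finiteness of $\supp$, which terminates the chain $v+sq\in R$.
\end{itemize}
All three hold on $\mathbb F_2^{\mathbb Z}$ without any dimension condition. The expansion of $\ell^{(q)}\circ\Psi$ into $\sum_V P_{V,E}$ is purely formal, and the pointwise contradiction in Lemma~\ref{lem:combined} is coordinatewise. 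So (iii) follows from (i) without appeal to $n$.

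\textbf{Step 3: (iii)$\Rightarrow$(i).} The cleanest route is to deduce (iv) from (iii) and then use Step~1 with Theorem~\ref{thm:equivalence}. Each mapping $\gamma_j$ on $\mathbb F_2^{\mathbb Z}$ is a local rule, and its action on $n$-periodic sequences is the finite mapping, by \cite[Remark~3.2]{kriepke2026shift}. Composition and sums of local rules commute with this restriction, since periodic sequences are mapped to periodic sequences. Hence an identity \eqref{eq:polynomial-composition-property} valid on all of $\mathbb F_2^{\mathbb Z}$ remains valid on the subset of $n$-periodic sequences, which is exactly $\mathbb F_2^n$. This gives (iii)$\Rightarrow$(iv)$\Rightarrow$(ii)$\Rightarrow$(i).

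\textbf{Main obstacle.} The only step needing care is Step~2. We must check that no part of the finite proofs secretly used the bound \eqref{eq:finite-pcp-bound} beyond converting congruences to equalities. One such place is the existence of the points $x$ used in the contradiction arguments. On $\mathbb F_2^{\mathbb Z}$ these are simply easier, since no wrap-around coincidences occur.

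If Step~2 proves delicate, an alternative is to derive (iii)$\Leftarrow$(iv) for all large $n$ simultaneously. Any failure of \eqref{eq:polynomial-composition-property} on $\mathbb F_2^{\mathbb Z}$ is witnessed at a single coordinate. That witness depends on a finite window, and it can be embedded into an $n$-periodic sequence for $n$ larger than the window. Combined with (i)$\Leftrightarrow$(ii) holding for every such $n$, this yields (iii).
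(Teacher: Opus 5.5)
Your proposal is correct. Most of it matches the paper: (ii)$\Leftrightarrow$(iv) by Theorem~\ref{thm:equivalence}, (ii)$\Rightarrow$(i) via the bound $|r+\delta-t|,|t+\delta-r|\le\max\{2m+k,2k+m\}<n$, and (iii)$\Rightarrow$(iv)$\Rightarrow$(ii)$\Rightarrow$(i) by restriction to $n$-periodic sequences. For (i)$\Rightarrow$(ii) the paper restricts rather than using a $\mathbb Z$-version of Lemma~\ref{lem:quasi-cons-set}, which makes no real difference.

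The genuine difference is in (i)$\Rightarrow$(iii).

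\textbf{The paper's route.} It does not rerun anything on $\mathbb F_2^{\mathbb Z}$. It fixes $x$ and a coordinate $i$, and notes that both sides of \eqref{eq:polynomial-composition-property} at coordinate $i$ depend on finitely many input coordinates. It then chooses $N>\max\{2m+k,2k+m\}$ making these indices distinct modulo $N$, and replaces $x$ by an $N$-periodic $y$ that agrees with $x$ on this window. Since (i) gives quasi-conservation on $\mathbb F_2^N$, it applies Theorem~\ref{thm:equivalence} on $\mathbb F_2^N$ as a black box. This is exactly your fallback argument.

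\textbf{Your main route.} You re-prove Lemmas~\ref{lem:two} and~\ref{lem:combined} and the forward half of Theorem~\ref{thm:equivalence} directly on $\mathbb F_2^{\mathbb Z}$. This rerun is valid. In those proofs the dimension bound is used only to turn the congruences of Lemma~\ref{lem:quasi-cons-set} into integer equalities. The chain $v+sq\in R$ still terminates because $\operatorname{supp}(\ell)$ is finite and $q\ne0$. Your worry about constructing witness points does not arise: those forward proofs only derive contradictions from an assumed point. Explicit constructions appear only in the converse direction of Theorem~\ref{thm:equivalence}, which you never need on $\mathbb F_2^{\mathbb Z}$.

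\textbf{Comparison.} Your approach gives a self-contained argument on $\mathbb F_2^{\mathbb Z}$ that never passes through finite dimensions, at the cost of re-verifying three proofs. The paper's localization is much shorter and reuses the finite theorem unchanged, but it relies on locality of the maps and on Theorem~\ref{thm:equivalence} holding for all sufficiently large $N$.
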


\begin{IEEEproof}
	Theorem~\ref{thm:equivalence} proves that (ii) and (iv) are equivalent. Restricting the mappings to the $n$-periodic sequences shows that (i) implies (ii) and that (iii) implies (iv). Thus it remains to prove that (ii) implies (i) and that (i) implies (iii).
	
	Suppose that (ii) holds. For each $\delta\in\supp\setminus\{q\}$, Lemma~\ref{lem:quasi-cons-set} gives $r\in R$ and $t\in T$ such that $r+\delta\equiv t\pmod n$ or $t+\delta\equiv r\pmod n$. Then $r+\delta=t$ or $t+\delta=r$ by $|r+\delta-t|,|t+\delta-r|\le\max\{2m+k,2k+m\}<n$, so $S^\delta\ell\odot\ell=\mathbf0$ on $\mathbb F_2^{\mathbb Z}$. This proves (i).
	
	Suppose that (i) holds. Fix $j,s\ge0$ and $\alpha_0,\ldots,\alpha_s\in\mathbb F_2$ with $\alpha_0=1$. We show that
	\[
	\gamma_j\circ\left(\gamma_0+\sum_{h=1}^{s}\alpha_h\gamma_h\right)
	=\sum_{h=0}^{s}\alpha_h\gamma_{h+j}
	\]
	on $\mathbb F_2^{\mathbb Z}$. Fix $x\in\mathbb F_2^{\mathbb Z}$ and $i\in\mathbb Z$. Both sides of the displayed identity at coordinate $i$ use only finitely many input coordinates. Choose $N>\max\{2m+k,2k+m\}$ so that these input indices are distinct modulo $N$. Choose $y\in\mathbb F_2^N$ taking the same values as $x$ at these coordinates, where the coordinate subscripts of $y$ are taken modulo $N$. By (i), $\ell$ is quasi-conserved on $\mathbb F_2^N$. Thus
	\[
	\gamma_j\left(x+\sum_{h=1}^{s}\alpha_h\gamma_h(x)\right)_i
	=\gamma_j\left(y+\sum_{h=1}^{s}\alpha_h\gamma_h(y)\right)_i
	=\sum_{h=0}^{s}\alpha_h\gamma_{h+j}(y)_i
	=\sum_{h=0}^{s}\alpha_h\gamma_{h+j}(x)_i.
	\]
	The first and third equalities follow from the choice of $y$, and the second follows from the established implication from (ii) to (iv). This proves (iii).
\end{IEEEproof}

Theorem~\ref{thm:four-equivalence} does not hold in general when $n=\max\{2m+k,2k+m\}$.

\begin{example}\label{ex:boundary-four-conditions}
	Take $n=3$, $R=\{-1\}$, $T=\{1\}$, and $q=1$. Here $k=m=1$ and $n=\max\{2m+k,2k+m\}$. On $\mathbb F_2^3$, we have
	\[
	(S^{-1}\ell\odot\ell)(x)_i
	=(1+x_{i-1})x_{i+1}(1+x_{i+1})x_i=0.
	\]
	Since $\supp\setminus\{q\}=\{-1\}$, statement (ii) holds. On $\mathbb F_2^{\mathbb Z}$, the pattern $(x_{i-2},x_{i-1},x_i,x_{i+1})=(0,0,1,1)$ gives $(S^{-1}\ell\odot\ell)(x)_i=1$, so statement (i) fails.
\end{example}

\begin{remark}\label{rem:zero-special-index}
	The assumption $q\ne0$ is needed for quasi-conservation to imply the polynomial composition property. For example, take $R=\{1\}$ and $T=\{0\}$. Then $\ell(x)_i=x_i(1+x_{i+1})$ is quasi-conserved and $\gamma_j=\ell$ for $j\ge1$. Choose $x\in\mathbb F_2^{\mathbb Z}$ with $(x_i,x_{i+1},x_{i+2})=(1,1,0)$. Then $\ell(x)_i=0$ and $\ell(x)_{i+1}=1$, so $\ell(x+\ell(x))_i=1$, whereas the polynomial composition property requires $\ell\circ(\operatorname{id}+\ell)=\gamma_1+\gamma_2=\mathbf0$.
\end{remark}

\begin{remark}\label{rem:integer-qc-bound}
	If $\ell$ is quasi-conserved with special index $q\ne0$ on $\mathbb F_2^{\mathbb Z}$, then $\{\gamma_j\}_{j\ge0}$ satisfies the polynomial composition property on $\mathbb F_2^n$ for every $n>m+k$. In particular, if we assume quasi-conservation on $\mathbb F_2^{\mathbb Z}$ instead of $\mathbb F_2^n$, the bound $n>\max\{2m+k,2k+m\}$ in Section~\ref{sec:families} can be replaced by $n>m+k$.
\end{remark}

\end{document}